\setlist{nosep}
\newtheorem{observation}{Observation}
\newtheorem*{definition}{Definition}
\theoremstyle{definition}
\newtheorem{theorem}{Theorem}
\newtheorem{proposition}{Proposition}
\newtheorem{lemma}{Lemma}
\newtheorem{corollary}{Corollary}
\theoremstyle{remark}
\newtheorem*{remark}{Remark}
\DeclareMathOperator*{\argmax}{\arg\max}
\newcommand{\und}{\underline}
\newcommand{\vst}{\vspace{3mm}}
\DeclareMathOperator{\siminf}{\text{siminf}}
\DeclareMathOperator{\simsup}{\text{simsup}}
\DeclareMathOperator{\gsiminf}{\text{gsiminf}}
\DeclareMathOperator{\gsimsup}{\text{gsimsup}}
\DeclareMathOperator{\e}{\text{E}}
\title{Subjective Complexity Under Uncertainty}
\author{Quitz\'{e} Valenzuela-Stookey\thanks{Departmetn of Economics, UC Berkeley; email: \texttt{quitze@berkeley.edu}. Many thanks to Eddie Dekel, Alessandro Pavan, and especially Marciano Siniscalchi for feedback on this project from its inception. I am also grateful to Edi Karni, Wojciech Olszewski, Francisco Poggi, Ludvig Sinander, Bruno Strulovici, Marie-Louise Vier{\o}, participants at the Transatlantic Theory Workshop 2019, RUD 2020, and seminar participants at Northwestern University.}}
\date{October 13, 2022}
\begin{document}

\maketitle

\begin{abstract}
Complexity of the problem of choosing among uncertain acts is a salient feature of many of the environments in which departures from expected utility theory are observed. I propose and axiomatize a model of choice under uncertainty in which the size of the partition with respect to which an act is measurable arises endogenously as a measure of subjective complexity. I derive a representation of incomplete Simple Bounds preferences in which acts that are complex from the perspective of the decision maker are bracketed by simple acts to which they are related by statewise dominance. The key axioms are motivated by a model of learning from limited data. I then consider choice behavior characterized by a ``cautious completion'' of Simple Bounds preferences, and discuss the relationship between this model and models of ambiguity aversion. I develop general comparative statics results, and explore applications to portfolio choice, contracting, and insurance choice.
\end{abstract}

Acts with uncertain outcomes are complicated, potentially infinite-dimensional, objects. As early as \cite{von1944game} and \cite{aumann1962utility}, it was recognized that even completeness of preferences over such acts may be a lot to ask. The complexity of choosing between uncertain acts has been evoked as an explanation for many empirical observations at odds with the predictions of the subjective expected utility model, from ambiguity aversion (e.g. \cite{gilboa1989maxmin}) to the so-called ``equity premium puzzle'' of \cite{mehra1985equity}.

This paper studies a choice-based notion of complexity in choice under uncertainty. The analysis can be divided into two steps. I first identify a set of \textit{well-understood} acts, and then characterize the (incomplete) preferences of a decision maker who uses well-understood acts as a tool for making difficult comparisons. Say that an act $f$ (a map from the state space to a set of outcomes) is well understood if it has a certainty equivalent, i.e. a constant act to which it is indifferent.\footnote{The idea is that if a decision maker understands an act well, they should be able to calibrate it's value precisely in relation to constant acts. Conversely, a certainty equivalent can be used as a simple proxy for $f$ when comparing $f$ to other acts, rendering such comparisons relatively easy.} The first step in the analysis is to characterize the set of well-understood acts. 

Based on assumptions regarding the ability of the decision maker to make comparisons between constant acts various types of non-constant acts, I derive two nested characterizations of the set of well-understood acts. First, I show that \textit{i}) whether or not an act is well-understood depends only on its partition, i.e. the coarsest partition of the state space with respect to which the induced payoffs are measurable, and \textit{ii)} that acts with coarser partitions are easier to understand. Under fairly weak conditions, these two properties fully characterize the set of well-understood acts (\Cref{prop3.4}). Under an additional condition, I derive a second stronger characterization of well-understood acts: there exists an integer $N$, determined endogenously by individual choice behavior, such that an act is well-understood if and only if it takes no more than $N$ distinct values (\Cref{thm:simplicity_char}). This parsimonious and tractable characterization of the well-understood acts forms the basis for the subsequent analysis. 

I impose the standard EU axioms, with minor modifications, on the set of well-understood acts, leading to the standard expected utility characterization of preferences over such acts. The second step of the analysis concerns the ability of the decision maker to make comparisons involving acts that are not well-understood. A single axiom characterizes the minimal (in the set-inclusion sense on binary relations) extension of preferences from the well-understood set to the space of all acts, consistent with monotonicity (respecting state-wise dominance) and transitivity. This characterization is the central contribution of the paper. Under this condition, preferences admit a representation whereby the value of any act $f$ can be bounded above and below by the worst (best) well-understood act that state-wise dominates (is state-wise dominated by) $f$. If at least one of the acts $f$ and $g$ are not well understood, and the acts are not related by state-wise dominance, then $f$ is preferred to $g$ if and only if the ``simple greatest lower bound'' of $f$ is preferred to the ``simple least upper-bound'' of $g$. The precise form that this characterization takes depends on the characterization of the well-understood acts. If we use the weaker characterization of the well-understood set (\Cref{prop3.4}) we obtain a more general characterization of preferences (\Cref{thm:general_rep}). Under the stronger well-understood characterization (\Cref{thm:simplicity_char}) we obtain a more tractable, albeit restrictive, preference characterization (\Cref{thm3.1}). I focus primarily on the latter result, and refer to these as \textit{Simple Bounds} preferences. Relative to standard EU preferences, Simple Bounds preferences have only one additional parameter, $N$, which is pinned-down by choice behavior.

The notion of complexity that arises endogenously from the assumptions on choice behavior, the number of distinct values in the range, can be interpreted as reflecting the difficulty people have in contemplating the relative likelihoods and aggregating payoffs across multiple events. To formalize this intuition, I study a procedural model of decision making in which a frequentist decision maker uses a finite set of i.i.d. observations to learn about the unknown distribution of the state. I show that the axioms allowing for the characterization of complexity in terms of partition size are implied by natural assumptions on the procedural model of choice. 

I consider two models of complete preferences derived from Simple Bounds. These models complement the analysis of incomplete preferences by exploring how the DM handles their lack of understanding. Preferences are \textit{Cautious} if the DM evaluates an act $f$ according to its simple lower bound, i.e. the best act that has a certainty equivalent and is dominated statewise by $f$. Similarly preferences are \textit{Reckless} if acts are evaluated according to their simple upper bound, defined analogously. I axiomatize these two preferences, relate them to the DM's attitude towards ambiguity, and compare them to existing models of ambiguity aversion. 

I provide some general comparative statics results, building on \cite{tian2015optimal} and \cite{tian2016monotone}, which are useful in applications. Finally, I investigate three applications of the model. First, I study consumption savings problems and equilibrium asset prices. Cautious decision makers will save more, and allocate a greater portion of their savings to a safe rather than a risky asset, compared to fully rational individuals. In equilibrium these biases lead to higher prices for safe relative to risky assets, as observed in the ``equity premium puzzle'' of \cite{mehra1985equity}. Second, I study the choice of insurance plans. I show that the cautious model rationalizes many ``behavioral'' phenomena identified in the empirical literature, such as over-sensitivity to deductibles and coverage rates. It also helps explain the correlation, documented by \cite{bhargava2017choose}, between specific forms of dominated plan choice and both the degree of health risk and level of education. This application provides a good illustration of some of the general comparative statics results presented in Appendix \ref{app:compstat}. Finally, I examine principal-agent contracting with a complexity constrained agent, identifying general features of optimal contracts.

Section \ref{sec2} introduces the setting. Section \ref{sec3} presents the characterization of subjective complexity. Section \ref{sec:simplebounds} presents the Simple Bounds characterization. Section \ref{sec:learning} explores the procedural learning model which motivates the key axioms. Section \ref{sec4} discusses complete preferences derived from Simple Bounds. \Cref{sec:applications} presents the applications. Related literature is discussed in Section \ref{sec10}. Omitted proofs are presented in \Cref{app:omittedproofs}. General comparative statics results and further discussion are contained in the Online Appendix (\Cref{sec:online}). 

\section{The model}\label{sec2} 

The framework is that of \cite{anscombe1963definition}. The decision maker is characterized by a binary relation $\succsim$ over acts (the preference relation). The strict preference relation and indifference relation are defined as usual.\footnote{I take as primitive the reflexive and transitive relation, interpreted as weak preferences. An alternative approach, as in \cite{galaabaatar2013subjective}, would be to take as primitive a transitive and irreflexive strict partial order. I find my approach more convenient because I make use directly of the existence of certainty equivalents and the transitivity of the weak preference relation.} Further notation is as follows:
\begin{itemize}
    \item $Z$: the set of \textit{outcomes}.
    \item $L$: the set of vN-M \textit{lotteries} (finite support distributions) over $Z$.
    \item $\Omega$: the \textit{state space}, endowed with an algebra $\Sigma$ of \textit{events}. 
    \item $F_c$: the set of constant acts.
    \item $F$: the set of finite valued acts; $\Sigma$-measurable $f: \Omega \rightarrow L$ such that $|f(\Omega)| < \infty$.
\end{itemize}
The preference relation is assumed to satisfy the following basic conditions (see e.g. \cite{gilboa2010objective}). 
\vspace{1.5mm}

\noindent \textbf{Basic Conditions}

\vspace{1mm}
\noindent PREORDER: $\succsim$ is reflexive and transitive. 

\vspace{1mm}
\noindent MONOTONICITY: For every $f,g \in F$, $f(\omega) \succsim g(\omega) \ \forall \ \omega \in \Omega$ implies $f \succsim g$. 

\vspace{1mm}
\noindent ARCHIMEDEAN CONTINUITY: For all $f,g,h \in F$, the sets $\{\lambda \in [0,1]: \lambda f + (1-\lambda) g \succsim h \}$, $\{\lambda \in [0,1] : h \succsim \lambda f + (1-\lambda) g \}$ are closed in $[0,1]$.

\vspace{1mm}
\noindent NONTRIVIALITY: There exist $f,g \in F$ such that $f \succ g$.

\vspace{1mm}
\noindent C-COMPLETENESS: $\succsim$ is complete on $F_c$

\vspace{1mm}
\noindent WEAK C-INDEPENDENCE: Let $c_1,c_2$ be constant acts. Then for any act $f$ and $\alpha \in (0,1)$, $f \succsim (\precsim) c_1$ if and only if $\alpha f + (1- \alpha)c_2 \succsim (\precsim) \alpha c_1  + (1- \alpha)c_2$.

\vspace{1.5mm}
The key difference between the Basic Conditions and the corresponding subset of the standard SEU axioms is of course the lack of completeness. Transitivity is preserved as a basic tenet of rational preferences. For a detailed discussion of the relationship between transitive but incomplete preferences and intransitive choice see \cite{mandler2005incomplete}. Weak C-Independence is a weakening of the common C-Independence assumption, as it imposes that two of the acts involved be constant.

It should be noted that monotonicity rules out certain types of behavior that may be thought of ``complexity averse''. Generally speaking, monotonicity precludes behaviors that arise if the decision maker finds it difficult to translate the description of an act into a mapping from states to lotteries. For example, \cite{ellis2017correlation} study a decision maker whose misperception of correlation leads to violations of monotonicity. In my model, as discussed in detail below, the decision maker understands the state space, as well as the mapping from states to lotteries defined by each act. The difficulty lies in aggregating across states to make comparisons between acts. 

The Basic Conditions imply that there is a (complete) expected utility representation of preferences on constant acts, which are identified with the corresponding lottery. This is in keeping with the literature on ambiguity, stemming from \cite{gilboa1989maxmin}, which distinguishes between risky lotteries, which the agent has no difficulty evaluating, and uncertain acts, over which the DM may exhibit non-EU behavior.\footnote{There is of course empirical evidence that choice over risky lotteries may also be influenced by complexity constraints (see for example \cite{neilson1992some}). More recently, models have been developed which incorporate both ambiguity aversion and non-EU preferences over risky lotteries (see for example \cite{dean2017allais}). The techniques developed in the current paper for studying choice under uncertainty can be applied to risky choice as well. Studying choice behavior when both constraints are present, along the lines of \cite{dean2017allais}, is an intriguing direction for future work.}

Let $v$ represent preferences on $L$. Say that an act $f$ is $N$-simple if $|v\circ f(\Omega)| \leq N$, and $N$-complex otherwise. Let $F_N$ be the set of $N$-simple acts. I will refer to the \emph{coarsest} partition of $\Omega$ with respect to which $v\circ f$ is measurable as \textit{f's partition} (so an act is $N$-simple iff its partition has no more than $N$ elements). I will say that $f$ is \textit{N-simple on E} if $|v\circ f(E)| \leq N$. I make no behavioral assumptions regarding $N$-simple acts directly, but they will turn out to be closely related to perceived complexity. Intuitively, the fact that what ends up mattering is the range of $v \circ f$, rather than $f$, reflects that the DM does not find it difficult to evaluate individual consequences, but rather to understand the mapping from states to consequences. Moreover, distinguishing between acts with the same utility image is incompatible with Monotonicity; if $v\circ f = v\circ g$ then Monotonicity implies $f \sim g$. Monotonicity, i.e. respecting statewise dominance, I take as a basic tenet of rationality. In many applications, such as the portfolio choice and insurance applications considered here, the DM has a strict ordering on $v \circ f(\Omega)$, in which case the distinction between the partition of $f$ and $v \circ f$ disappears.

The size of an act's partition is a common measure of complexity in both the theoretical and empirical literature (see Section \ref{sec10}). This notion captures the idea that the DM has trouble \textit{i}) forming beliefs about many events and \textit{ii}) combining a large number of potential outcomes to understand an act's value. Unlike most existing papers however, I do not assume that partition size represents subjective complexity. Rather, I show that this measure arises endogenously as a characterization of well-understood comparisons. 

\section{Characterizing understanding}\label{sec3}

In order to characterize complexity as perceived by the DM, I require a choice-based measure of the types of comparisons that the DM finds it easy to make. Monotonicity gives us some information along these lines, as it says that the DM is always able to rank acts that are ordered by statewise dominance. The easiest acts to compare to an arbitrary act $f$, aside from those related to $f$ by statewise dominance, are, intuitively, constant acts. The more constant acts the DM is able to rank against $f$ the better $f$ is understood. Motivated by this intuition, I will begin by making assumptions about comparisons to constant acts, and in particular about the set of acts for which there exists a certainty equivalent, i.e. a constant act to which it is indifferent. Let $F_{CE}$ be the set of such acts, which I call \textit{well-understood}.\footnote{It is intuitive that acts which are easy to understand have certainty equivalents. Conversely, I will assume further on that Independence, another intuitive property of easy to understand acts, is satisfied on $F_{CE}$.}

A bet on an event $E$ is a binary act that has a strictly better consequence on $E$ then on the complement of $E$. Given any acts $f,g$, let $f_Eg$ be the binary act which is equal to $f$ on $E$ and $g$ elsewhere. The following axioms characterize the set of acts that are well-understood by the decision maker. 

\vst
\noindent \textbf{Simplicity Conditions (part 1).} Let $\tau$ be $f$'s partition, with typical elements $T,T'$.

\vspace{1mm}
\noindent A0. For any event $E \in \Sigma$ there exists a bet on $E$ that is well-understood.

\vspace{1mm}
\noindent A1. If $f$ is well-understood then for any partition $\tau'$ coarser than $\tau$, there exists a well-understood act with partition $\tau'$. 

\vspace{1mm}
\noindent A2. Let $f$ be a well-understood act. Let $g$ measurable with respect to $f$'s partition, and for which there exists a constant act $a \in f(\Omega)$ such that for all $\omega \in \Omega$ either $g(\omega) = f(\omega)$ or $g(\omega) = a$. Then if $g$ is well-understood, $\alpha f + (1-\alpha)g$ also is well-understood for all $\alpha \in (0,1)$.

\vst
The Simplicity Conditions (parts 1 and 2) are further motivated by a procedural model of learning, presented in Section \ref{sec:learning}. Axioms A0-A2 (and A3 below) are stated using the ``there exists an act'' qualifier, rather than ``for any act'' for two reasons. First, the learning model, which helps clarify the content of the Simplicity Conditions, implies the ``there exists'' version of the axioms, but not the ``for any''. Second, to state the axioms ``for any act'' would essential be to assume that complexity is a property of an act's partition. Instead, I show exactly which assumptions characterize this type of complexity. 

Intuitively, A0 says that there are no events which the DM does not understand. This allows us to identify a subjective probability measure on the state space that characterizes the DM's preferences over binary acts. This is in contrast to models such as \cite{epstein2001subjective} in which there are events that are inherently difficult to understand. In \cite{epstein2001subjective} it is possible for probabilistic sophistication, as defined in \cite{machina1992more}, to be violated whenever an act is not measurable with respect to a set of subjectively unambiguous events. A0 embodies the idea that complexity is a property of comparisons between acts, rather than an inherent difficulty with understanding certain events. 

Roughly, A1 says that acts with coarser partitions are easier to understand. For example, if we merge two cells in the partition of $f$ and replace the outcome on the merged cell with the conditional expectation of $f$ over the two cells, we might expect that the act has become easier to evaluate. (In the procedural model of choice of \Cref{sec:learning} I show that it is precisely this type of conditional-mean modification which makes an act easier to compare.)

A2 is similar to the Certainty Independence assumption of \cite{gilboa1989maxmin}. Certainty Independence would imply that if $f$ is well-understood and $c$ is a constant act, then $\alpha f + (1-\alpha)c$ would also be well-understood. A2 generalizes this conclusion to mixtures with non-constant acts $g$ that are well-understood, provided $g$ has the same partition and range as $f$, and is constant on the set of states on which it differs from $f$. The usual arguments in favor of Certainty Independence, based on the fact that mixing with a constant act does not help to hedge against uncertainty, therefore apply.

Under the Basic Conditions, the Simplicity Conditions characterize the set of well-understood acts. This characterization can be separated into two parts. First, A0-A2 imply that \textit{i}) if any act with a given partition is well-understood then so are all such acts, and \textit{ii}) with the refinement order on the space of partitions, the set of partitions for which measurable acts are well-understood is a lower-set.\footnote{Recall the distinction between an act ``with partition $\tau$'' and a $\tau$-measurable act. The former means $\tau$ is the \textit{coarsest} partition w.r.t which the act is measurable.} 

\begin{proposition}\label{prop3.4}
Under A0-A2 and the Basic Conditions, if some act with partition $\tau$ is well-understood then so are all $\tau$-measurable acts.
\end{proposition}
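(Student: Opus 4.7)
My plan is to prove the claim by strong induction on $n := |\tau|$. The base case $n = 1$ is immediate: when $\tau = \{\Omega\}$, every $\tau$-measurable act is constant and therefore well-understood by reflexivity of $\succsim$. For the inductive step, I would assume the statement for every partition with strictly fewer than $n$ cells, fix $f$ well-understood with partition $\tau = \{T_1,\ldots,T_n\}$ and values $a_i := f(T_i)$, and consider any $\tau$-measurable act $h$, whose coarsest partition $\tau_h$ is a coarsening of $\tau$. If $|\tau_h| < n$, then A1 applied to $f$ yields some well-understood act with partition $\tau_h$, and the inductive hypothesis applied to $\tau_h$ delivers every $\tau_h$-measurable act --- $h$ included --- as well-understood. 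So the substantive case is $\tau_h = \tau$, where $h$ takes $n$ distinct values.

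For that case, the idea is to build a rich family of well-understood acts with partition $\tau$ from $f$, using iterated A2 and Weak C-Independence. The pivotal observation is that every ``helper'' act $g$ permitted by A2 is automatically well-understood whenever $g \ne f$: such a $g$ agrees with $f$ on some cells of $\tau$ and takes a fixed value $a \in f(\Omega)$ on the others, so its coarsest partition is a strict coarsening of $\tau$ with at most $n-1$ cells; A1 applied to $f$ then supplies some well-understood act with that coarser partition, and the inductive hypothesis makes $g$ itself well-understood. As a prototype move, for any $j \ne k$ and $\alpha \in (0,1)$, setting $g$ equal to $f$ off $T_j$ and equal to $a_k$ on $T_j$ and invoking A2 shows that the act which agrees with $f$ off $T_j$ and takes value $\alpha a_j + (1-\alpha) a_k$ on $T_j$ is well-understood. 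Weak C-Independence supplements this by ensuring that $\alpha f + (1-\alpha) c$ is well-understood for every constant $c \in L$ and $\alpha \in (0,1)$, giving a uniform affine shift across all cells.

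The main obstacle is to show that iterating these two moves --- single-cell A2 replacements and Weak C-Independence shifts --- is rich enough to reach an arbitrary target $h$ with values $(b_1,\ldots,b_n)$ starting from $f$. My plan is to modify the act one coordinate at a time: beginning from $f$, apply a chain of A2 and Weak C-Independence operations to transform the first coordinate into $b_1$, treat the resulting well-understood act as the new starting point, transform the second coordinate into $b_2$, and so on. At each stage the auxiliary well-understood helpers required by A2 are furnished for free by A1 together with the inductive hypothesis, because they always have strictly coarser partitions than $\tau$. A recurring subtlety is that the intermediate lotteries must remain in $L$, which may force a preliminary Weak C-Independence shift to reposition supports before each A2 move; verifying that this procedure genuinely reaches every prescribed value of $h$, rather than a restricted subset, is the core technical step I expect to require the most care.
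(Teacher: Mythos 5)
Your skeleton matches the paper's: induction on $\lvert\tau\rvert$, A1 plus the inductive hypothesis to dispose of strictly coarser targets and to certify the A2 ``helper'' acts, and then A2 mixtures combined with Weak C-Independence to morph $f$ into an arbitrary act with the same partition. But the step you flag as ``the core technical step'' is precisely where the proof lives, and the ingredients needed to close it are absent from your plan. First, all of your moves only shrink the act: a single-cell A2 replacement sends the value on $T_j$ to $\alpha a_j + (1-\alpha)a_k$ with $a_k \in f(\Omega)$, and a Weak C-Independence shift $\alpha f + (1-\alpha)c$ contracts every utility gap by the factor $\alpha$. Iterating such moves keeps you inside (a shifted copy of) the convex hull of $f$'s values with strictly smaller spread, so a target $h$ whose utility spread exceeds $f$'s is unreachable. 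The paper escapes this with the \emph{converse} direction of Weak C-Independence (its Claim 1): it suffices to produce a certain equivalent for $\lambda h + (1-\lambda)c$ for some $\lambda\in(0,1)$ and constant $c$, because the certain equivalent can then be transferred back to $h$ itself; this is what reduces the problem to targets sandwiched inside $f$'s range (``$f(T_N)\succ g(T_N)$ and $g(T_1)\succ f(T_1)$''). You only ever use the forward direction.

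Second, your induction starts at $n=1$ and never invokes A0, but at $n=2$ your toolkit provably fails: for a binary $f$ with $f(E)\succ f(E^c)$, every admissible A2 helper is either $f$ itself or a constant act, so every move preserves the ordering of the two cells, and no sequence of single-cell mixtures and uniform shifts can reach a binary act with $h(E^c)\succ h(E)$. The paper's base case is the set of \emph{all} binary acts and is proved directly from A0 (a well-understood bet on each event) together with a case analysis over the possible interleavings of $f(E),g(E),f(E^c),g(E^c)$. Finally, you do not treat the boundary case in which the target's values sit at the extremes of the constant-act range (so that no constant act strictly dominates or is strictly dominated by them), which the paper handles separately with a preliminary mixture toward the midpoint. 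None of these is a fatal flaw in the strategy --- they are the strategy --- but as written the proposal defers rather than supplies the argument.
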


Conceptually, Proposition \ref{prop3.4} implies that complexity is a property of an act's partition. One way to interpret Proposition \ref{prop3.4} is as follows. Along any sequence of increasingly fine partitions, there is a last partition such that all acts with this partition have certainty equivalents. This alone is an interesting and potentially useful characterization of understanding. A representation based on \Cref{prop3.4} is presented in \Cref{sec:generalized_representation}. If we look only at acts with partitions ordered by refinement, then Proposition \ref{prop3.4} tells us all we need to know. In many applications, however, it will be helpful to have a bit more structure on the model of choice. This structure is delivered by part 2 of the Simplicity Conditions. Say that an event $E$ is \textit{non-null} if there exist $c^1,c^2,c^3 \in F_c$ such that $c^2_Ec^1 \succ c^3_Ec^1$, and \textit{null} otherwise (the empty set is null, by reflexivity of $\succsim$).\footnote{As we will see, we will be able to identify a probability $P$ which the DM uses to evaluate well-understood acts, and null events will be zero measure events under $P$.}

\vst
\noindent \textbf{Simplicity Conditions (part 2).} Let $\tau$ be $f$'s partition, with typical elements $T,T'$.

\vspace{1mm}
\noindent A3. If $f$ is well-understood then for any non-null $T\neq T'$ in $\tau$, and any $2$-element partition $\tau'$ of $T \cup T'$, there exists a well-understood act with partition $\tau' \cup (\tau\setminus\{T,T'\})$.

\vspace{1mm}
\noindent A4. If $f$ is well-understood and $E$ is null then $g_Ef \sim f$ for all $g \in F$.

\vst
The key axiom is A3, with A4 playing more of a technical role. A3 says that if $f$ is well-understood, then there are acts that are ``close'' to $f$ that are also well understood. Here ``close'' has two meanings. First, the partitions of the acts can only differ on two cells. Second, since A3 only implies that there exists \textit{some} well understood act with partition $\tau' \cup (\tau\setminus\{T,T'\})$, the values of the new act can be close to those of $f$. If the new act is ``nearly constant'' on $\tau'$, then A3 is conceptually close to A1. This connection will be formalized in the learning model of Section \ref{sec:learning}. As we will see, the Basic Conditions, A0-A2, and S-Independence, discussed below, imply that there is an expected utility representation of preferences over well-understood acts. In light of this, A4 simply says that the decision maker is not artificially confused by changes to a well-understood act that occur with zero probability. 

Relative to the characterization in \Cref{prop3.4}, assumptions imply that along any sequence of refined partitions, the last partitions with the ``certainty equivalent property'' have the same number of non-null elements.

\begin{theorem}\label{thm:simplicity_char}
Assume $\succsim$ satisfies the Basic Conditions and Simplicity Conditions. Then there exists an $N \in (\mathbbm{N} \backslash \{1\}) \cup \{\infty\}$ such that an act is well-understood iff its partition has at most $N$ non-null elements. 
\end{theorem}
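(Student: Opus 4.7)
Define
$$N := \sup\{k \in \mathbbm{N} : \text{some well-understood act has a partition with exactly } k \text{ non-null cells}\} \ \in\ \{2,3,\ldots\} \cup \{\infty\}.$$
That $N \geq 2$ follows from non-triviality together with A0: non-triviality combined with monotonicity produces an event $E$ on which some $f$ strictly dominates some $g$, from which one can extract an event with both itself and its complement non-null, and A0 then delivers a well-understood bet $c^1_E c^2$ whose partition has two non-null cells. The ``only if'' direction is immediate from the definition of $N$.

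For the ``if'' direction, fix $f$ with partition $\tau$ having $k \leq N$ non-null cells. By Proposition \ref{prop3.4} it suffices to exhibit \emph{some} well-understood act with partition $\tau$. By definition of $N$ there exists a well-understood act $g_0$ whose partition has at least $k$ non-null cells; iterating A1—each step merging a pair of non-null cells into one non-null cell—yields a well-understood $g$ with partition $\sigma$ having exactly $k$ non-null cells $\{S_1, \ldots, S_k\}$. Let $V$ denote the union of the null cells of $\tau$, which is null by a short induction using A4 applied to constant acts, and set $U := \Omega \setminus V$, so that the non-null cells of $\tau$ are $\{T_1, \ldots, T_k\}$ partitioning $U$.

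The argument now proceeds in two stages. \emph{Stage 1 (rearranging non-null cells):} I claim a finite sequence of A3 moves, each preserving well-understoodness and the total non-null count, transforms $\{S_1, \ldots, S_k\}$ into $\{T_1, \ldots, T_k\}$. Proceed by induction on $k$: for $k=2$ a single A3 application with target 2-partition $\{T_1, T_2\}$ suffices; for $k>2$, first bring the partition to contain $T_1$ as one of its cells by iteratively applying A3 to pairs $(S_i, S_j)$ and repartitioning as $\{(S_i \cup S_j) \cap T_1,\ (S_i \cup S_j) \setminus T_1\}$, thereby consolidating the fragments of $T_1$ into a single cell, and then apply the inductive hypothesis to the remaining $(k-1)$-cell partition of $U \setminus T_1$. \emph{Stage 2 (null adjustment):} given a well-understood act whose partition on $U$ is $\{T_1, \ldots, T_k\}$, A4 permits replacing its values on the null set $V$ arbitrarily without disturbing well-understoodness (the new act remains indifferent to the original certain equivalent); choosing this modification so that the induced partition on $V$ equals $\tau|_V$ yields a well-understood act with partition exactly $\tau$.

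\textbf{Main obstacle.} Stage 1 is the delicate step. Each A3 application must produce a partition in which both newly-introduced cells remain non-null, so that subsequent A3 moves stay legal. When the natural repartition $\{(S_i \cup S_j) \cap T_1,\ (S_i \cup S_j) \setminus T_1\}$ has a null piece—for instance if $T_1 \supset S_i \cup S_j$ or if $T_1 \cap (S_i \cup S_j)$ happens to be null—one must route through an alternative sequence of A3 moves, possibly employing a third non-null cell $S_\ell$ as a temporary buffer to reroute the fragments of $T_1$. Establishing that such a route always exists for any two partitions of $U$ into $k$ non-null cells requires a careful case analysis leveraging both the freedom of A3 in choosing its 2-partition and the abundance of non-null subevents supplied by A0.
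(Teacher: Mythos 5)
Your overall architecture matches the paper's: define $N$ as the maximal number of non-null cells among partitions of well-understood acts, reduce to the right cell count with A1, handle null cells with A4, pass between equal-sized partitions by two-cell modifications via A3, and finish by invoking Proposition \ref{prop3.4}. The gap is exactly where you flag it: Stage 1 is asserted, not proved, and it is the entire content of the theorem beyond Proposition \ref{prop3.4}. The paper isolates this step as a standalone combinatorial result (Lemma \ref{lem3.1}): for any two $N$-element partitions $\tau, \tau'$ of $\Omega$ there is a finite chain of $N$-element partitions from $\tau$ to $\tau'$ in which consecutive partitions agree on $N-2$ cells. Its proof is an explicit algorithm with a termination argument, not the induction you sketch. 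At each step one finds a target cell $T'$ meeting two current cells $T_1^i, T_2^i$ and transfers a piece $r \in \tau^i \vee \tau'$ from one to the other; in the degenerate case $T_1^i \cup T_2^i = T'$ (precisely the ``buffer'' situation you anticipate) a two-step detour borrows a fragment $T_3'$ from a third cell. Termination holds because every partition in the chain is a coarsening of the finite common refinement $\tau \vee \tau'$ and cycles cannot occur: cells already agreeing with $\tau'$ are never destroyed, each degenerate step strictly increases the number of such cells, and between degenerate steps two fragments once merged are never separated. Your sketched induction (``consolidate the fragments of $T_1$, then recurse'') neither explains why the consolidation phase terminates nor handles the degenerate cases, which is why you correctly sense the argument is incomplete.

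Your secondary worry---that an A3 move could create a null cell and block subsequent moves---is a fair one, since A3 only licenses splitting a pair of \emph{non-null} cells and Lemma \ref{lem3.1} is stated for arbitrary partitions with no nullity bookkeeping; the paper's one-line derivation of Theorem \ref{thm:simplicity_char} from the lemma is terse on this point. But the fix is not an ad hoc ``rerouting'' case analysis: run the connectivity argument on $U = \Omega \setminus V$ only (a finite union of null events is null, and supersets of non-null events are non-null by Monotonicity, so the relevant pieces of $\tau \vee \tau'$ can be grouped to keep working cells non-null), and attach the null remainder at the end via A4 exactly as in your Stage 2. As written, however, your proof stops at the point where the real work begins.
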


Another way of stating the characterization in Theorem \ref{thm:simplicity_char} is that an act is well-understood iff it is $N$-simple on all but a null event. This theorem provides a choice-based foundation for the intuitive, and widely used, measure of complexity as partition size. 

The extension from Proposition \ref{prop3.4} to Theorem \ref{thm:simplicity_char} is non-trivial. It is not the case that Theorem \ref{thm:simplicity_char} would follow immediately if ``there exists'' was replaced with ``for any'' in axioms A1-A3. The key step in the proof is a novel algorithm for moving between partitions of the state-space while preserving the ``certainty equivalent property'' (\Cref{lem3.1}). 

\subsection{\texorpdfstring{\Cref{thm:simplicity_char}}{} discussion}\label{sec:thm1_evidence}

Consider the refinement partial order on the set of partitions of $\Omega$. We can represent the set of all partitions as a directed acyclic graph, with the coarsest partition of $\Omega$ as the root node. As we move along each path in this graph (i.e. each chain in the refinement partial order), partitions become increasingly fine. \Cref{prop3.4} says that along every path, there is a single node at which partitions lose the well-understood property. \Cref{thm:simplicity_char} strengthens this conclusion by saying that by saying that all of the paths on which the well-understood property is preserved have the same length. 

While the restriction that the well-understood property depends only on the number of elements is strong, it is supported by a large body of experimental and observational evidence relating partition size to perceived complexity. \cite{moffatt2015heterogeneity} and \cite{sonsino2002complexity} find experimental evidence that subjects undervalue lotteries with larger supports. \cite{bernheim2019direct} argue that aversion to large supports helps explain experimental data that is otherwise inconsistent with both expected utility and cumulative prospect theories. Additional experimental evidence, relating support size to ambiguity aversion, is discussed in \Cref{sec:experimental_ambiguity}

\section{Main results: Simple Bounds representation}\label{sec:simplebounds}

I turn now to representing preferences. The goal is to stick as close as possible in this regard to the SEU model. However some modifications to the standard SEU axioms must be made to maintain the spirit of the previous axioms. In particular, the usual independence assumption must be modified. In general, the mixture of two $N$-simple acts will not be $N$-simple. The standard independence axiom would thus expand the set of well-understood acts; it would imply $N = \infty$. The modified axiom, S-Independence, eliminates this concern. Moreover, it addresses some of the usual critiques of the independence assumption. For one, it applies only when all acts involved are well-understood. Moreover, it only applies to mixtures that do not result in too complex an act, in the sense of partition fineness. 

\vspace{1.5mm}
\noindent S-INDEPENDENCE: Let $f,g,h$ be well-understood acts. For any $\alpha \in (0,1)$, if $\alpha f + (1-\alpha) h$ and $\alpha g + (1-\alpha) h$ are both well understood, then $f \succsim g$ if and only if $\alpha f + (1-\alpha) h \succsim \alpha g + (1-\alpha) h$.

\vspace{1.5mm}
Finally I make an assumption on how comparisons involving complex acts can be made. I write $f \geq_E g$ if  $f(\omega) \succsim g(\omega) \ \forall \ \omega \in E$. If there exists a null set $E$ such that $f \geq_{\Omega\setminus E} g$, write $f \geq^0 g$.  For simplicity, I write $\geq$ rather than $\geq_{\Omega}$ to represent statewise dominance.

\vspace{1.5mm}
\noindent UNIFORM COMPARABILITY: For any $f,g \in F$ such that $\neg(f \geq g)$, $f \succsim g$ holds iff there exist well-understood acts $h,k$ such that $f \geq h \succsim k \geq g$.
\vspace{1.5mm}

Of course, if $f$ and $g$ are well understood then $f=h$ and $g=k$. In general, it seems reasonable to assume that comparisons according to statewise dominance can be made, even if the acts in question are complex. Uniform Comparability says that statewise dominance is the \textit{only} way to compare complex acts. Thus Uniform Comparability implies the minimal extension of preferences beyond well-understood acts. The ``only if'' direction formalizes the idea that complexity causes incompleteness of preferences. 

Uniform Comparability leads directly to a representation of preferences in which complicated acts are bracketed by well-understood acts, and $f \succsim g$ if and only if the best well-understood act dominated statwise by $f$ is preferred to the worst well-understood act that statewise dominates $g$. A convenient way to describe this bracketing is through simple upper and lower bounds. 

\begin{definition}
For any $f \in F$ and $N \in \mathbb{N}$, denote by \textbf{$\simsup_{\textbf{N,f}}$} the set of acts $h$ satisfying: 1) $h \in F_N$, 2) $h \geq^0 f$, and 3) there is no $k$ satisfying 1 and 2 such that $h \succ k$.\footnote{Requiring only $h \geq^0 f$, rather than $h \geq f$, in condition 2 is a technicality. For most applications, we can replace this condition with $h \geq f$, as Proposition $\ref{prop:sim_char}$ shows.}
\end{definition}

Define $\siminf_{\textit{N,f}}$ analogously. Note that for any $N$ such that all acts in $F_N$ are well-understood, the DM is indifferent between all acts in $\simsup_{N,f}$ (similarly for $\siminf_{N,f}$). When there is no risk of confusion, I will therefore abuse notation and write as if $\siminf$ and $\simsup$ are single valued (for example, $f \succsim \siminf_{N,f}$ even when $\siminf_{N,f}$ may contain multiple acts). It is not obvious that such bounds exist, i.e. that $\siminf_{N,f}$ and $\simsup_{N,g}$ are non-empty. In fact, the axioms stated will imply that both are non-empty. Conversely, \Cref{thm3.2} shows that existence of these bounds places no additional restrictions on the model parameters. 

For a lottery $l$, I write $E_l u =  E_l[u(z)]$, i.e. the expected utility given distribution $l$ over outcomes. Although $\siminf_{N,f}$ may not be single valued, I abuse notation and write $\int_{\Omega}E_{\siminf_{N,f}(\omega)}u dP(\omega)$.

\begin{definition}
Preference $\succsim$ has a \textbf{Simple Bounds} representation if there exists an integer $N$, probability $P$ on $\Sigma$, and a non-constant function $u: Z \rightarrow \mathbb{R}$ such that, for every $f,g \in F$, $f \succsim g$ if and only if at least one of the following holds:
\renewcommand{\theenumi}{\roman{enumi}}
    \begin{enumerate}
        \item $f \geq g$.
        \item
        \begin{equation*}
            \int_{\Omega} E_{\siminf_{N,f}(\omega)}u \ dP(\omega) \geq \int_{\Omega} E_{\simsup_{N,g}(\omega)}u \ dP(\omega) 
        \end{equation*}
    \end{enumerate}
\end{definition}
\noindent I refer to such preferences as Simple Bounds preferences.

\begin{theorem}\label{thm3.1}
The following statements are equivalent:
\renewcommand{\theenumi}{\roman{enumi}}
\begin{enumerate}
    \item $\succsim$ satisfies the \textbf{Basic Conditions}, \textbf{Simplicity Conditions}, \textbf{S-Independence}, and \textbf{Uniform Comparability}.
    \item $\succsim$ has a Simple Bounds representation, with parameters $P,u,N$. Moreover, $P$ is unique, $u$ is unique up to positive affine transformations, and for all $f \in F$, $\siminf_{N,f}$ and $\simsup_{N,f}$ are non-empty.
\end{enumerate}
\end{theorem}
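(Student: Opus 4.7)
The plan is to prove (i) $\Rightarrow$ (ii) by first establishing an expected utility representation on the set $F_{CE}$ of well-understood acts and then extending to all of $F$ via Uniform Comparability; the converse direction is a routine verification of each axiom against the representation.

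\textbf{EU on $F_{CE}$.} The Basic Conditions give a vNM utility $u$ on $Z$ with $v(l) := E_l u$. Theorem~\ref{thm:simplicity_char} (which forces $N \geq 2$) together with A0 makes every bet well-understood, so C-Completeness provides a certain equivalent for each bet. Fixing any partition $\tau$ with at most $N$ non-null cells, every $\tau$-measurable act is well-understood by Theorem~\ref{thm:simplicity_char}, and mixtures of $\tau$-measurable acts remain $\tau$-measurable, so S-Independence reduces to full Independence on the mixture space $V_\tau$ of $\tau$-measurable acts. A standard Anscombe--Aumann argument (using Archimedean Continuity and the certain equivalents from C-Completeness) then yields an affine representation $U_\tau(f) = \sum_{T \in \tau} w^\tau_T v(f(T))$ on $V_\tau$. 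Applying this for $\tau = \{E, E^c\}$ defines a set function $P(E) := w^{\{E, E^c\}}_E$ on $\Sigma$; passing to coarsenings of any fixed $\tau$ (well-understood by A1 together with Theorem~\ref{thm:simplicity_char}) forces $w^\tau_T = P(T)$ and shows $P$ is finitely additive. A4 identifies null events under $\succsim$ with zero-measure events under $P$. Finally, well-understood acts with differing partitions are compared by passing to their certain equivalents via C-Completeness, using the identity $v(c_f) = \int v \circ f \, dP$.

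\textbf{Existence of the bounds and the full representation.} Because $f \in F$ is finite-valued, any $N$-simple $h$ with $h \geq^0 f$ is, up to indifference, determined by a partition of $v \circ f(\Omega)$ into at most $N$ groups, with $h$'s $v$-value on each group chosen to be the cell supremum of $v \circ f$. Minimizing $\int v \circ h \, dP$ over this finite family produces a nonempty $\simsup_{N,f}$, and symmetrically for $\siminf_{N,f}$; A4 allows $\geq^0$ to be upgraded to pointwise $\geq$ without affecting preference. For the representation itself, suppose $\neg (f \geq g)$. Since $\siminf_{N,f}$ is $\succsim$-maximal among well-understood acts dominated statewise by $f$ and $\simsup_{N,g}$ is $\succsim$-minimal among well-understood acts statewise-dominating $g$, the existential clause in Uniform Comparability is equivalent to $\siminf_{N,f} \succsim \simsup_{N,g}$, which by the EU representation on $F_{CE}$ is exactly the displayed inequality in (ii). Uniqueness of $P$ and of $u$ up to positive affine transformations is the usual uniqueness of EU on $F_{CE}$.

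\textbf{Main obstacle.} The delicate step is the EU representation on $F_{CE}$: S-Independence applies only when both mixtures stay well-understood, so one cannot directly run the Anscombe--Aumann argument on arbitrary pairs in $F_{CE}$, since mixing two $N$-simple acts can produce a partition with up to $N^2$ cells. The workaround is to confine the mixture argument to a single $V_\tau$ at a time, where S-Independence is equivalent to ordinary Independence, and then stitch the resulting affine representations together across partitions via coarsenings, certain equivalents, Theorem~\ref{thm:simplicity_char}, and A4.
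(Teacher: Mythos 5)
Your overall architecture matches the paper's: an SEU representation on each set $V_\tau$ of $\tau$-measurable well-understood acts, a consistency/gluing step to obtain a single $P$ on $\Sigma$, and then Uniform Comparability to reduce the general comparison to $\siminf_{N,f} \succsim \simsup_{N,g}$. Your existence argument for the bounds is a genuinely different (and, for the finite-valued acts in $F$, simpler) route: you exploit $|v\circ f(\Omega)|<\infty$ to reduce the optimization to a finite family, whereas the paper derives nonemptiness indirectly from Uniform Comparability plus Archimedean Continuity (and proves the general bounded-measurable case separately in Theorem \ref{thm3.2} via the ``Lebesgue approach''). Your shortcut is valid here and avoids that machinery.

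There is, however, one genuine gap: your claim that ``passing to coarsenings of any fixed $\tau$ \dots shows $P$ is finitely additive'' fails when $N=2$, a case the theorem explicitly allows. For $N\geq 3$ the argument works because two disjoint events $A,B$ with $A\cup B\neq\Omega$ both appear as cells of the well-understood partition $\{A,B,(A\cup B)^c\}$, whose coarsenings pin down $P(A)+P(B)=P(A\cup B)$. But when $N=2$ the only well-understood partitions are binary, so $A$ and $B$ never cohabit a single well-understood partition and no coarsening relates $P(A)+P(B)$ to $P(A\cup B)$. The paper closes this with a dedicated S-Independence argument: it constructs binary acts $f_1,f_3$ (bets on $\{\omega_1\}^c$ and $\{\omega_3\}^c$ with utility values $\pm 1$) whose mixture $\frac12 f_1+\frac12 f_3$ is itself a binary act ($\mathbbm{1}_{\omega_2}$), so S-Independence applies to the mixture and forces the three binary-partition probabilities to be mutually consistent, which yields additivity. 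Your proposal needs this (or an equivalent) supplementary step for $N=2$; as written, the gluing step only delivers a well-defined, additive $P$ when $N\geq 3$.
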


Condition \textit{i.} in the definition of a Simple Bounds representation requires $f\geq g$, rather than $f \geq^0 g$. This is natural if we think that it is harder to identify null events when comparing complex acts. Alternatively, we could assume that preferences obey $\geq^0$ dominance, and make the obvious modification to Uniform Comparability, to replace  \textit{i.} with $f \geq^0 g$. 

For applications, it is generally without loss to assume that $\simsup_{N,f} \geq f \geq \siminf_{N,f}$. Roughly speaking, violations of statewise dominance only occur if $f$ has isolated outliers in the support of $P$. No continuity is required. 

\begin{proposition}\label{prop:sim_char}
$f \geq siminf_{N,f}$ iff $f^{-1}(A)$ is non-null for every open neighborhood of $\inf E_f u(\Omega)$. Similarly, $\simsup_{N,f} \geq f$ iff $f^{-1}(A)$ is non-null for every open neighborhood of $\sup E_f u(\Omega)$.
\end{proposition}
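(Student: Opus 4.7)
I will prove the $\siminf$ statement; the $\simsup$ argument is symmetric. Since $f \in F$ is finite-valued, $E_f u(\Omega) = \{u_1 < \dots < u_m\}$ is finite with cells $T_i := \{\omega : E_{f(\omega)} u = u_i\}$ partitioning $\Omega$. Any sufficiently small open neighborhood of $\inf E_f u(\Omega) = u_1$ meets the image only at $u_1$, so the stated condition reduces to ``$T_1$ is non-null.''

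For sufficiency, assume $T_1$ is non-null and fix any $h \in \siminf_{N,f}$, which is non-empty by Theorem \ref{thm3.1}. Writing $h$'s partition as $\{K_1, \dots, K_\ell\}$ with constant utility $\tilde u_j$ on $K_j$, the constraint $h \leq^0 f$ yields $\tilde u_j \leq u_{i^*_j}$, where $i^*_j := \min\{i : K_j \cap T_i \text{ non-null}\}$; maximality of $\int E_h u \, dP$ forces equality on every non-null $K_j$, since otherwise raising $h$ on $K_j$ to $u_{i^*_j}$ preserves feasibility and strictly increases the integral. Because $T_1$ is non-null, some $K_{j_0}$ meets $T_1$ non-negligibly, so $i^*_{j_0} = 1$ and $u_1 \in v \circ h(\Omega)$. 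The set $E := \{\omega : v(h(\omega)) > v(f(\omega))\}$ is null; replacing $h$ on $E$ by any lottery of expected utility $u_1$ gives $h'$ with $h' \sim h$, $|v \circ h'(\Omega)| \leq N$ (since $u_1$ already belongs to $h$'s image), and $h' \leq f$ pointwise. Thus $h' \in \siminf_{N,f}$ witnesses $f \geq \siminf_{N,f}$.

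For necessity, argue the contrapositive, focusing on $f \notin F_N$ (if $f \in F_N$, the claim is vacuous since $f \in \siminf_{N,f}$). Assume $T_1$ is null and suppose some $h \in \siminf_{N,f}$ satisfies $h \leq f$ pointwise. For any $K_j$ meeting $T_1$, pointwise $h \leq f$ on $K_j \cap T_1$ gives $\tilde u_j \leq u_1$; but if $K_j$ were non-null, maximality would force $\tilde u_j = u_{i^*_j} \geq u_2$ (since $T_1$ null makes $i^*_j \geq 2$), a contradiction. So every such $K_j$ is null, and its value $\tilde u_j \leq u_1$ is strictly below every value $h$ takes on non-null cells ($\geq u_2$). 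Hence $h$'s image devotes at least one slot to these null cells, leaving at most $N - 1$ slots for values on the non-null cells $T_2, \dots, T_m$. The $\siminf_{N,f}$ optimization places no constraint on the null $T_1$, so a competitor $h^*$ can deploy all $N$ slots on non-null cells; since $f \notin F_N$ means the non-null cells outnumber $N - 1$, a refinement argument shows the optimum with $N$ slots strictly exceeds that with $N - 1$, contradicting $h$'s maximality.

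The main obstacle is the strict EU gap in the necessity direction: one must verify that whenever non-null cells outnumber available slots, splitting a cell of $h$'s partition to use one additional distinct image value strictly increases $\int E_h u \, dP$. This is a combinatorial slot-splitting argument on the coarsening of $f$'s partition realized by $h$. A subtlety in sufficiency is that the null-set modification must preserve $|v \circ h(\Omega)| \leq N$, which works only because $u_1$ already lies in $h$'s image --- itself a consequence of $T_1$ being non-null.
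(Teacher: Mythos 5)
Your sufficiency argument is correct and is essentially the paper's own (very terse) proof made rigorous: because $T_1$ is non-null, optimality forces some non-null cell of $h \in \siminf_{N,f}$ to take the value $u_1 = \inf E_fu(\Omega)$, and the null set where $h$ exceeds $f$ can then be patched with the value $u_1$. Your observation that the patch stays $N$-simple only because $u_1$ already lies in $h$'s image is exactly the content of the paper's one-line argument (``it could be removed by specifying $E_{\siminf_{N,f}}u(\omega') = E_{\siminf_{N,f}}u(\omega)$'').

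The necessity direction, however, contains a step that fails: the inference ``$f \notin F_N$ means the non-null cells outnumber $N-1$.'' That $E_fu$ takes more than $N$ distinct values says nothing about how many of the level sets $T_2,\dots,T_m$ are non-null. Concretely, take $\Omega = \{1,\dots,5\}$ with $P(\{4\}) = P(\{5\}) = 1/2$ and $P(\{i\}) = 0$ for $i \leq 3$, let $E_{f(\omega)}u = \omega$, and let $N = 3$. Then $f$ is $3$-complex and $T_1 = \{1\}$ is null, yet the act with utility $1$ on $\{1,2,3\}$, $4$ on $\{4\}$ and $5$ on $\{5\}$ is $3$-simple, pointwise dominated by $f$, and attains the maximal value $4.5$, hence lies in $\siminf_{3,f}$ and witnesses $f \geq \siminf_{3,f}$ even though the right-hand condition fails. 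The slot-counting contradiction you want is therefore not available: devoting one value to the null cells meeting $T_1$ costs nothing unless the non-null level sets of $f$ number at least $N$, which is an additional hypothesis not implied by $f \notin F_N$. (To be fair, the paper's own one-sentence necessity argument --- that keeping $\siminf_{N,f}$ below $f$ at some state is ``strictly sub-optimal'' --- glosses over exactly the same point; your more explicit version has the virtue of exposing the missing hypothesis.) Relatedly, your dismissal of the case $f \in F_N$ as vacuous is backwards: there $f \in \siminf_{N,f}$ makes the left-hand side of the biconditional hold automatically, so the ``only if'' direction asserts something (that $T_1$ is non-null) rather than nothing, and it fails for the same reason.
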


The conditions of Proposition \ref{prop:sim_char} are met, for example, if $P$ has full support on $\Omega \subseteq \mathbb{R}$ and $E_fu$ is continuous, or if its partition has no null elements, or it is a convex combination of any acts with these properties. Even if the conditions for statewise dominance fail, it is easy to see where violations will occur. The ``Lebesgue approach'', discussed below, helps clarify this point. 

Theorem \ref{thm3.1} includes the conclusion that $\siminf_{N,f}$ and $\simsup_{N,f}$ are non-empty for all $f$. A natural concern from a modeling perspective is that this may impose constraints on the other parameters of the representation. This would be the case if existence failed for some specification of $u, P$ or $N$. The following result states that this is not the case. Let $B(\Omega)$  be the set of bounded measurable functions on $\Omega$, and $B_N(\Omega)$ be the set of $N$-simple measurable functions on $\Omega$. Abusing notation, define
\begin{equation*}
\siminf_{N,w,P} \equiv \argmax_{\{b \in B_N(\Omega): b \leq^0 w\}} \int_{\Omega} w(\omega) dP(\omega)
\end{equation*}

\begin{proposition}\label{thm3.2}
For any $w \in B(\Omega)$, $\siminf_{N,w,P}$ and $\simsup_{N,w,P}$ are non-empty for all $N,P$. 
\end{proposition}
A full discussion of \Cref{thm3.2} can be found in \Cref{sec:existenceproofs}. The proof is instructive, as it makes use of a dual approach to the problem of finding simple upper and lower bounds; rather than look for functions defined by partitions on $\Omega$, I define a dual problem in terms of partitions of $w(\Omega)$. Using this dual ``Lebesgue approach'' finding the $\siminf_{N,w,p}$ and $\simsup_{N,w,P}$ for an arbitrary act $f$ can be reduced to the problem of finding a the $\siminf$ and $\simsup$ for the identity function on the unit interval, with a suitably defined distribution $\hat{P}$.

This dual approach is useful for comparative statics (Appendix \ref{app:compstat}). Moreover, it is easy to use this approach to show that finding $\simsup_{N,w,P}$ and $\siminf_{N,w,P}$ functions can be reduced to the computationally easy problem of finding the maximum-weight path in a suitably defined directed acyclic graph. It also implies a convenient structure of the sets $\siminf_{N,w,P}$ and $\simsup_{N,w,p}$.

\begin{lemma}\label{lem:lattice}
$\siminf_{N,w,P}$ and $\simsup_{N,w,P}$ are lattices. 
\end{lemma}

The partial order with respect to which \Cref{lem:lattice} holds is discussed in \Cref{app:compstat}. For $\Omega = [0,1]$, the result follows from \Cref{prop202}. The extension to more general spaces follows from the dual approach of \Cref{sec3.1}.

\subsection{Theorem \ref{thm3.1} discussion}\label{sec:thm1disc}

Simple Bounds preferences characterize a decision maker who 1) understands some set of acts well, and 2) uses the acts they understand well to bracket those they do not, allowing them to make some comparisons involving poorly understood acts. The characterization provided by \Cref{thm:simplicity_char} of which acts are well understood is sharp: the well-understood property is defined by a cut-off in the number of elements of an act's partition. This does not mean, however, that perturbing an act slightly to increase the size of its partition will cause a discontinuous change in choice behavior. If act $f'$ takes values very close to those of $f$ (uniformly over $\Omega$), where the latter is well understood while the former is not, the simple lower and upper bounds of $f'$ will be almost identical to $f$. Thus the sharp characterization of what is well understood has smoother implications for choice. This is in contrast to models in which complexity enters the representation as an additive cost, such as \cite{puri2020simplicity}. 

A natural question is how the incomplete preferences characterized above relate to those of \cite{bewley2002knightian}.\footnote{\cite{bewley2002knightian} in fact gives a representation of the strict preference. The exact unanimity representation discussed here is due to \cite{gilboa2010objective}.} So called ``Bewley preferences'' have the following representation: there exists a non-empty, closed, and convex set $C^*$ of probabilities on $\Sigma$ and a non-constant function $u: Z \rightarrow \mathbb{R}$ such that $f \succsim g$ if and only if
\begin{equation*}
    \int_{\Omega} E_{f(\omega)}u dp(\omega) \geq \int_{\Omega} E_{g(\omega)}u dp(\omega) \ \ \forall \ p \in C^*.
\end{equation*}

There is no simple relationship between the two representations. The Bewley preferences satisfy the usual independence axiom. However it is easy to see that the representation in Theorem \ref{thm3.1} does not satisfy independence. If incompleteness \`{a} la Bewley is interpreted as reflecting complexity then it must be that mixtures do not increase the complexity of a comparison. For example, let the state space be $[0,1]$ and for any act $k$ identify $u(k(\omega))$ with $k(\omega)$. Let $f$ and $g$ be binary acts, with $f =  0,  g = 1$ on $[0,1/2)$ and $f = 1, g = 0$ on $(1/2,1]$. In the Bewley model, if $f$ and $g$ are comparable then so are $1/2 f + 1/2 h$ and $1/2 g + 1/2 h$ for $h(\omega) = 10 \omega^2$. Decision makers may find the later comparison, which involves acts with a greater range of values and larger partitions, more difficult. Relaxing Independence and explicitly modeling complexity allows for a model in which mixtures can increase complexity and lead to incomparability. 

\subsection{A generalization}\label{sec:generalized_representation}

It is worth noting that \Cref{thm3.1} is a special case of a more general characterization, in which we impose part 1 of the Simplicity Conditions, but not part 2. I focus on the Simple Bounds representation as defined above, in which act complexity is a function of the cardianlity of the partition, because it is tractable in applications, and consistent with empirical evidence (see \Cref{sec:thm1_evidence} and \Cref{sec:experimental_ambiguity}). 

Endow the space of partitions of $\Omega$ with the refinement partial order: write $\tau''R\tau'$ if $\tau''$ is a refinement of $\tau'$. A set $\mathcal{T}$ of partitions of $\Omega$ is \textit{downward closed} if $\tau'' \in \mathcal{T}$ and $\tau''R\tau'$ implies $\tau' \in \mathcal{T}$. For any $f \in F$, let $\gsiminf_{\mathcal{T},f}$ be the set of undominated acts among all $g$ such that $g \leq^0 f$ and $g$ measurable with respect to a partition in $\mathcal{T}$. 

\begin{definition}
Preference $\succsim$ has a \textbf{Generalized Simple Bounds} representation if there exists a downward directed set $\mathcal{T}$, probability $P$ on $\Sigma$, and a non-constant function $u: Z \rightarrow \mathbb{R}$ such that, for every $f,g \in F$, $f \succsim g$ if and only if at least one of the following holds:
\renewcommand{\theenumi}{\roman{enumi}}
    \begin{enumerate}
        \item $f \geq g$.
        \item
        \begin{equation*}
            \int_{\Omega} E_{\gsiminf_{\mathcal{T},f}(\omega)}u \ dP(\omega) \geq \int_{\Omega} E_{\gsimsup_{\mathcal{T},g}(\omega)}u \ dP(\omega) 
        \end{equation*}
    \end{enumerate}
\end{definition}
\noindent I refer to such preferences as Simple Bounds preferences.

\begin{theorem}\label{thm:general_rep}
The following statements are equivalent:
\renewcommand{\theenumi}{\roman{enumi}}
\begin{enumerate}
    \item $\succsim$ satisfies the \textbf{Basic Conditions}, \textbf{A0-A2}, \textbf{S-Independence}, and \textbf{Uniform Comparability}.
    \item $\succsim$ has a Generalized Simple Bounds representation, with parameters $P,u,\mathcal{T}$. Moreover, $P$ is unique, $u$ is unique up to positive affine transformations, and for all $f \in F$, $\gsiminf_{\mathcal{T},f}$ and $\gsimsup_{\mathcal{T},f}$ are non-empty.
\end{enumerate}
\end{theorem}

\section{Learning motivation for the simplicity conditions}\label{sec:learning}

I present here a simple procedural model of a decision maker using data to inform their choice between uncertain acts. The purpose of this section is twofold. First, it demonstrates that the Simplicity Conditions are satisfied under natural assumptions on the learning model. Second, it formalizes the intuition that acts with coarser partitions are easier to understand. 

The setting is that of a standard frequentist inference problem. A decision maker is endowed with a dataset $\chi$ of $K$ i.i.d. draws from an unknown distribution $P$.\footnote{This sampling procedure can be interpreted literally as sampling from an unknown distribution. Alternatively, we can think of it as a reduced form model of contemplation in which the DM accesses a latent belief $P$, similar in spirit to drift-diffusion models of cognition.} They must compare an arbitrary simple real valued act $f$ and constant act $c$. These can be thought of as the utility images of acts with outcomes in an arbitrary space. The DM uses their data to estimate the expected value of $f$, as well as the risk arising from sampling uncertainty. Let the empirical distribution of the sample be $\hat{P}_{\chi}$. The empirical expectation of $f$ is
\begin{equation}
    \hat{\e}_{\chi}[f] = \sum_{i=1}^N f(T_i)\dfrac{1}{K}\sum_{x\in \chi} \mathbbm{1}\{x \in T_i \}.
\end{equation}
The error due to sampling uncertainty is $\varepsilon_{f}(\chi) := \hat{\e}_{\chi}[f] - \e[f]$. Denote the true distribution of $\varepsilon_f(\chi)$ across different samples by $G_f$. $G_f$ is unknown to the DM, as it depends on the unknown distribution $P$. Instead, the DM uses an estimate $\hat{G}_f$ of $G_f$ when making decisions. I assume that the DM uses the bootstrap to estimate $\hat{G}_f$.\footnote{The bootstrap procedure for estimating $G_f$ is as follows. Draw a sample $\bar{\chi}$ of size $K$ from the empirical distribution (i.e. sample the data with replacement) and calculate $\bar{\varepsilon}_f(\bar{\chi}) := \hat{E}_{\bar{\chi}}[f] - \hat{E}_{\chi}[f]$. Do this repeatedly and use the resulting empirical distribution of $\bar{\varepsilon}_f(\bar{\chi})$ as the estimate of $G_f$. The idea is to treat the empirical distribution $\hat{P}_{\chi}$ as if it were the true distribution, for the purposes of estimating $G_f$. Since $n^{-1/2}\varepsilon_f$ is an asymptotically pivotal statistic, the bootstrap estimator provides an asymptotic refinement of the normal approximation, and therefore performs better in finite samples (see \cite{horowitz2001bootstrap}). Thus it is reasonable for the DM to use the bootstrap to calculate $\hat{G}_f$.} To complete the description of the environment, assume that the DM uses a decision rule that maps $\hat{\e}_{\chi}(f)$ and $\hat{G}$ to a ranking between $f$ and $c$. I also allow the DM to declare that $f$ and $c$ are un-rankable. This is a standard problem of statistical inference. I will show that any protocol that is consistent, in a weak sense, with second order stochastic dominance rankings of the error distributions will satisfy A1. I will show further that reasonable decision rules in this environment satisfy all of the Simplicity Conditions. 

First, I will discuss how A1 is related to second order stochastic dominance rankings of the error distributions. Let $\tau = \{ T_i\}_{i=1}^N$ be a act $f$'s partition, and let $\tau' = \{T', T_3,\dots, T_N\}$ where $T' = T_1 \cup T_2$. It seems intuitive that learning about acts measurable with respect to $\tau'$ will be easier than learning about those measurable with respect to $\tau$, since the set of events to which the DM must assign probabilities is strictly smaller (in the inclusion order).  Proposition \ref{prop3.2} formalizes the sense in which this is true. Of course, the difficulty of a comparison depends not only on the partitions of the acts involved, but also on their values. Let $\tilde{f}$ be an act such that $\tilde{f}(x) = \e[f|T']$ for $x\in T'$, and $\tilde{f}(x) = f(x)$ otherwise. 

For any act $h$, let $\varepsilon_h(\chi) = \hat{\e}_{\chi}[h] - \e[h]$, and let $G_{h}$ be the distribution of $\varepsilon_h$. Say that distribution $F$ \textit{strictly second-order stochastically dominates} distribution $H$ ($F >_{SOSD} H$) if $\int u(x)dF(x) \geq \int u(x)dH(x)$ for all concave $u$, with strict inequality if $u$ is strictly concave. 
\begin{proposition}\label{prop3.2}
$G_{\tilde{f}}$ strictly second-order stochastically dominates $G_{f}$.
\end{proposition}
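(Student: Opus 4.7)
The plan is to exhibit $\varepsilon_f$ as a mean-preserving spread of $\varepsilon_{\tilde{f}}$ and then invoke the Rothschild--Stiglitz equivalence of mean-preserving spread and second-order stochastic dominance reversal. The natural tool is conditioning on the multinomial counts outside of $T_1\cup T_2$ together with the total count inside $T'$.

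First I would parameterize both empirical expectations in terms of the random counts $n_i := |\{x \in \chi : x \in T_i\}|$, so that $(n_1,\dots,n_N)$ is multinomial with parameters $K$ and $(p_1,\dots,p_N)$, where $p_i := P(T_i)$. Writing $\mu := \e[f \mid T'] = (f(T_1)p_1 + f(T_2)p_2)/(p_1+p_2)$, a direct computation gives
\begin{equation*}
\varepsilon_f - \varepsilon_{\tilde{f}} \;=\; \tfrac{1}{K}\bigl[(f(T_1)-\mu)\,n_1 + (f(T_2)-\mu)\,n_2\bigr],
\end{equation*}
and by construction $\e[f] = \e[\tilde{f}]$, so both errors have mean zero over $\chi$. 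Now condition on $\mathcal{G} := \sigma(n_1+n_2, n_3, \ldots, n_N)$: under this conditioning $\varepsilon_{\tilde{f}}$ is $\mathcal{G}$-measurable, while $n_1 \mid \mathcal{G}$ is binomial with parameters $(n_1+n_2, p_1/(p_1+p_2))$, so $\e[n_1 \mid \mathcal{G}] = (n_1+n_2)\,p_1/(p_1+p_2)$. Plugging in and using the identity $(f(T_1)-\mu)p_1 + (f(T_2)-\mu)p_2 = 0$, one sees $\e[\varepsilon_f - \varepsilon_{\tilde{f}} \mid \mathcal{G}] = 0$, i.e.\ $\e[\varepsilon_f \mid \mathcal{G}] = \varepsilon_{\tilde{f}}$.

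This identifies $\varepsilon_f$ as a mean-preserving spread of $\varepsilon_{\tilde{f}}$. For any concave $u$, Jensen's inequality applied to the conditional distribution gives $\e[u(\varepsilon_f) \mid \mathcal{G}] \le u(\varepsilon_{\tilde{f}})$, and taking expectations yields $\int u\, dG_{\tilde f} \ge \int u \, dG_f$, which is exactly SOSD. For strictness when $u$ is strictly concave, I need the conditional distribution of $\varepsilon_f$ given $\mathcal{G}$ to be non-degenerate with positive probability. Since $\tau$ is the coarsest partition measurably refining $f$, we have $f(T_1)\neq f(T_2)$, and when both $T_1,T_2$ are non-null we have $p_1,p_2 > 0$, so the binomial conditional distribution of $n_1$ has positive variance whenever $n_1+n_2 \ge 1$ (an event of positive probability). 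Thus strict Jensen holds on a set of positive measure, giving the strict SOSD ranking.

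The bookkeeping in the first step is the only real obstacle; once the right sigma-algebra $\mathcal{G}$ is chosen, everything reduces to Jensen on a binomial conditional. A minor subtlety, worth flagging, is the non-nullness hypothesis needed for strictness: if one of $T_1,T_2$ were null then $\mu$ would coincide with the value of $f$ on the other cell and $\varepsilon_f = \varepsilon_{\tilde f}$ almost surely, collapsing the strict ranking to an equality.
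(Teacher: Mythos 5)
Your proof is correct and follows essentially the same route as the paper's: both decompose $\varepsilon_f = \varepsilon_{\tilde f} + \xi$ with the identical remainder $\xi = \tfrac{1}{K}\bigl[(f(T_1)-\mu)n_1 + (f(T_2)-\mu)n_2\bigr]$, show it is conditionally mean zero, and conclude by Jensen. The only difference is presentational: where the paper asserts conditional mean independence of $\hat{\e}_{\chi}[f|T']$ given $\bigl(\hat{\e}_{\chi}[f|T'^c], \hat{P}_{\chi}(T'^c)\bigr)$ from i.i.d.\ sampling, you derive it explicitly from the binomial conditional law of $n_1$ given $\mathcal{G}$, and your treatment of strictness (via non-nullness of $T_1,T_2$ and $f(T_1)\neq f(T_2)$) is in fact more careful than the paper's one-line claim that $\xi$ is non-degenerate.
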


The intuition for this result is straightforward. Consider the errors made in the estimation of $\e[f]$ versus $\e[\tilde{f}]$. Notice that for all datasets $\chi$, $\hat{\e}_{\chi}[\tilde{f}|T'] = \e[\tilde{f}|T']$. The randomness of $\hat{\e}_{\chi}[f|T']$ simply adds noise to the distribution of errors from the estimation of $\e[f]$, relative to those of $\e[\tilde{f}]$. A similar conclusion holds for $\bar{f}$, where $\bar{f} = \hat{\e}_{\chi}[f|T']$ on $T'$, and $\bar{f}= f$ elsewhere.

\begin{corollary}\label{lem:hatGsosd}
$\hat{G}_{\bar{f}} >_{SOSD} \hat{G}_f$
\end{corollary}

This follows immediately from Proposition \ref{prop3.2}, since the bootstrap estimator treats the empirical distribution as if it were the true distribution. We return now to the DM's problem of choosing between the act $f$ and a constant act $c$. A natural assumption about decisions in this framework is that greater uncertainty about $\e[f]$ makes it harder for the DM to compare $f$ to constant acts (comparable in this setting meaning that the DM is willing to state a preference for one of the two acts).

\vspace{1.5mm}
\noindent CONFOUNDING SAMPLING UNCERTAINTY: For any acts $f, f'$ with $\hat{\e}_{\chi}[f] = \hat{\e}_{\chi}[f']$ and $\hat{G}_{f'} >_{SOSD} \hat{G}_{f}$ and any constant act $c$, if $f$ is comparable to c then so is $f'$.

\begin{proposition}\label{prop:3.3}
Any decision rule satisfying Confounding Sampling Uncertainty will satisfy A1
\end{proposition}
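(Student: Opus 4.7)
The strategy is to construct, for each coarser partition $\tau'$, an explicit candidate act by pointwise averaging $f$ over the cells being merged, and then transfer well-understoodness from $f$ via Lemma \ref{lem:hatGsosd} and CSU. Since any coarsening of $\tau$ can be realized as a finite sequence of two-cell merges, it suffices to treat $\tau' = \{T',T_3,\ldots,T_N\}$ with $T':=T_1\cup T_2$ and iterate (each intermediate act produced is itself well-understood, so the next merge applies to it).

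For this single-merge step, take $\bar{f}$ as in Lemma \ref{lem:hatGsosd}: $\bar{f}(\omega)=\hat{\e}_{\chi}[f|T']$ on $T'$ and $\bar{f}(\omega)=f(\omega)$ elsewhere. The two hypotheses of CSU need to be checked. First, $\hat{\e}_{\chi}[\bar{f}]=\hat{\e}_{\chi}[f]$ follows from the identity $\hat{\e}_{\chi}[f|T']\cdot\hat{P}_{\chi}(T')=\hat{\e}_{\chi}[f\mathbbm{1}_{T'}]$, so replacing $f$ by the constant $\hat{\e}_{\chi}[f|T']$ on $T'$ leaves the empirical mean unchanged. Second, Lemma \ref{lem:hatGsosd} delivers $\hat{G}_{\bar{f}}>_{SOSD}\hat{G}_f$. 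Applying CSU, every constant $c$ to which $f$ is comparable is also comparable to $\bar{f}$.

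The last step is to convert comparability into the existence of a certain equivalent for $\bar{f}$. Since $f$ is well-understood, $f\sim c^{*}$ for some constant $c^{*}$, so $\bar{f}$ is in particular comparable to $c^{*}$. More is true: by Monotonicity, $f$ is comparable to every constant that pointwise dominates or is pointwise dominated by $f$, and CSU transfers all of these comparabilities to $\bar{f}$. Taking $c_+$ a constant above and $c_-$ a constant below $\bar{f}$ obtained in this way, Archimedean Continuity applied to the path $\alpha c_+ +(1-\alpha)c_-$ produces, via the closed sets $\{\alpha:\alpha c_+ +(1-\alpha)c_-\succsim\bar{f}\}$ and $\{\alpha:\bar{f}\succsim\alpha c_+ +(1-\alpha)c_-\}$, an $\alpha^*$ at which the mixture is indifferent to $\bar{f}$, giving the required $c^{**}$. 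A minor wrinkle is that $\bar{f}$'s partition collapses below $\tau'$ if $\hat{\e}_{\chi}[f|T']$ coincides with $f(T_i)$ for some $i\geq 3$; an arbitrarily small perturbation of the $T'$-value, which preserves comparability with the relevant constants by continuity, recovers partition $\tau'$ exactly.

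The main obstacle is the third step: Lemma \ref{lem:hatGsosd} and CSU yield only comparability with those constants to which $f$ is already comparable, whereas well-understoodness requires indifference with some constant. The delicate point is ensuring that the set of such constants is rich enough to furnish both an upper and a lower bracket for $\bar{f}$, so that Archimedean Continuity can close the loop. This is exactly where Monotonicity, C-Completeness, and the fact that $f$ is finite-valued combine with CSU: together they guarantee the bracketing constants needed to extract the certain equivalent.
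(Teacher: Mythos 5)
Your argument follows the paper's route exactly: the paper declares Proposition \ref{prop:3.3} an immediate consequence of Lemma \ref{lem:hatGsosd}, and your proof simply fills in the details of that implication --- applying Confounding Sampling Uncertainty to the averaged act $\bar{f}$, reducing a general coarsening to iterated two-cell merges, and using C-Completeness, transitivity, Monotonicity, and Archimedean Continuity to upgrade comparability with constants into the existence of a certain equivalent. The proposal is correct, and your explicit handling of the comparability-versus-indifference gap and of the degenerate partition collapse is in fact more careful than what the paper records.
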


Proposition \ref{prop:3.3} is an immediate implication of \Cref{lem:hatGsosd}. As a concrete example of a protocol satisfying Confounding Sampling Uncertainty, consider the following. Given data set $\chi$ the DM concludes that $f \succsim c$ if and only if $\e_{\hat{G}_f}\left[\phi\left(\hat{\e}[f]\right)\right] \geq \phi(c) - k$ for some increasing, strictly concave function $\phi$ and constant $k > 0$. Moreover $c \succsim f$ if $c \geq \hat{\e}_{\chi}[f]$. I will refer to this as the \textit{smooth sampling uncertainty} model. It is similar in spirit to the smooth ambiguity model of \cite{klibanoff2005smooth}, where here the higher order uncertainty derives from sampling uncertainty, rather than subjective ambiguity. Additionally, the smooth sampling uncertainty model allows the DM to express incomplete preferences. 

Clearly, $f$ will have a certainty equivalent only if $\hat{G}_f$ is not too dispersed. In particular, $f$ will have a certainty equivalent $c$ if and only if $i$) $\hat{\e}_{\chi}[f] = c$, and $ii$) $\e_{\hat{G}_f}\left[\phi\left(\hat{\e}[f]\right)\right] \geq \phi(c) - k$. \Cref{lem:hatGsosd} implies that the smooth sampling uncertainty model satisfies Confounding Sampling Uncertainty since $\phi$ is concave. In fact, the model satisfies all of the Simplicity Conditions.

\begin{proposition}\label{prop:smoothsampling}
Smooth sampling uncertainty satisfies the Simplicity Conditions. 
\end{proposition}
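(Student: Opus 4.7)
The plan is to verify A0--A4 in turn, using the characterization that in the smooth sampling uncertainty model an act $f$ is well-understood iff $\phi(\hat{\e}_\chi[f]) - \e_{\hat G_f}[\phi(\hat{\e}_{\bar\chi}[f])] \leq k$ (the two conditions for a certain equivalent collapse to this Jensen-type inequality at $c = \hat{\e}_\chi[f]$). Throughout, let $\mu_f := \hat{\e}_\chi[f]$ and $Y_f := \hat{\e}_{\bar\chi}[f]$ (the random bootstrap sample mean); analogously for $g$ and $h$. For A0, given any event $E$ I would construct a binary bet with utility payoffs $b > a$ and $b-a$ small: the bootstrap variance of its empirical mean is of order $(b-a)^2 \hat p_E (1-\hat p_E)$, so a second-order Taylor expansion of $\phi$ yields a Jensen gap of order $(b-a)^2$, which can be driven below $k$. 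For A1, Proposition \ref{prop:3.3} reduces matters to checking Confounding Sampling Uncertainty: if $\hat G_{f'} >_{SOSD} \hat G_f$ and $\mu_f = \mu_{f'}$, concavity of $\phi$ gives $\e_{\hat G_{f'}}[\phi(Y_{f'})] \geq \e_{\hat G_f}[\phi(Y_f)]$, so comparability of $f$ to any constant $c$ transfers to $f'$.

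For A3, start from the bootstrap-mean replacement $\bar f$ equal to $\hat{\e}_\chi[f \mid T \cup T']$ on $T \cup T'$ and to $f$ elsewhere, which is well-understood by Lemma \ref{lem:hatGsosd}. Since $\bar f$ is constant on $T \cup T'$, perturb its value on the prescribed sub-cells $T_1', T_2'$ by $\pm\epsilon$ (chosen so the empirical mean is unchanged). The Jensen gap is continuous in the act's values, and the strict SOSD dominance in Proposition \ref{prop3.2} provides slack, so for $\epsilon$ small the perturbation is still well-understood and has the required partition $\tau' \cup (\tau \setminus \{T, T'\})$. For A4, null events have zero empirical probability (otherwise some bet on the event would induce strict preferences between constant acts), so the bootstrap almost surely draws no observations from them; consequently $g_E f$ and $f$ have the same $\mu$ and the same bootstrap distribution, and thus the same certain equivalent.

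The main obstacle is A2. Since the AA mixture distributes linearly over the bootstrap, $Y_h = \alpha Y_f + (1-\alpha) Y_g$ pointwise and $\mu_h = \alpha \mu_f + (1-\alpha) \mu_g$. Pointwise concavity of $\phi$ and the well-understood hypotheses on $f$ and $g$ yield $\e[\phi(Y_h)] \geq \alpha \e[\phi(Y_f)] + (1-\alpha) \e[\phi(Y_g)] \geq \alpha \phi(\mu_f) + (1-\alpha) \phi(\mu_g) - k$. Upgrading this to the required $\e[\phi(Y_h)] \geq \phi(\mu_h) - k$ is delicate, because concavity of $\phi$ gives $\phi(\mu_h) \geq \alpha \phi(\mu_f) + (1-\alpha) \phi(\mu_g)$, the wrong direction. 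Closing the gap must use the structural hypothesis on $g$: since $g_i \in \{f_i, a\}$, the difference $W := Y_f - Y_g = \sum_{i \in A} \bar p_i (f_i - a)$, where $A = \{i : g_i = a\}$, is a specific linear combination of bootstrap cell frequencies, so $Y_f$, $Y_g$ and $Y_h = Y_f - (1-\alpha) W$ are tightly coupled through a single bootstrap realization. I would seek a refined bound exploiting this coupling---either via a joint Jensen argument for the concave map $(y, w) \mapsto \phi(y - (1-\alpha) w)$, or via a monotonicity property of the Jensen-gap functional under the ``contractions toward $a$'' that the mixture induces on the cells in $A$. This is where the substantive content of A2 lies, and it is the step I expect to be the most demanding.
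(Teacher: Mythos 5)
Your verification of A0, A1, A3, and A4 tracks the paper's own proof almost step for step: A0 via bets that are nearly constant, so the Jensen gap falls below $k$; A1 via Confounding Sampling Uncertainty (Proposition \ref{prop:3.3}) together with Lemma \ref{lem:hatGsosd}; A3 by placing a near-constant binary act on the two-element partition of $T\cup T'$ with the empirical conditional mean preserved, using the strict inequality from Proposition \ref{prop3.2} as slack; and A4 because null events carry no data and hence cannot affect the bootstrap. On those four axioms your write-up is, if anything, more explicit than the paper's.

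The genuine gap is A2, which you leave open. The paper disposes of it with the single sentence that it ``follows immediately from Jensen's inequality,'' but your own computation shows why a one-line Jensen argument cannot suffice: with $\mu_f = \hat{\e}_{\chi}[f]$ and $Y_f$ the bootstrap mean, concavity of $\phi$ yields $\e[\phi(Y_h)] \geq \alpha\phi(\mu_f)+(1-\alpha)\phi(\mu_g)-k$, whereas the certain-equivalent condition for $h=\alpha f+(1-\alpha)g$ requires $\e[\phi(Y_h)] \geq \phi(\alpha\mu_f+(1-\alpha)\mu_g)-k$, and concavity places $\phi(\alpha\mu_f+(1-\alpha)\mu_g)$ on the \emph{wrong} side of $\alpha\phi(\mu_f)+(1-\alpha)\phi(\mu_g)$ whenever $\mu_f\neq\mu_g$, which the hypotheses of A2 do not rule out. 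Your two candidate repairs (a joint Jensen argument for $(y,w)\mapsto\phi(y-(1-\alpha)w)$, or monotonicity of the Jensen-gap functional under contractions toward $a$) are only named, not executed, and neither is obviously available: the functional $f\mapsto \phi(\mu_f)-\e[\phi(Y_f)]$ is a difference of two concave functions of the act and is not convex in general, so some additional structural input---presumably exploiting that $g(\Omega)\subseteq f(\Omega)$, that $f$ and $g$ share a partition, and that $Y_f$ and $Y_g$ are driven by the same bootstrap cell frequencies---must be supplied. Until that step is filled in, the proposal does not establish A2 and therefore does not prove the proposition; you have correctly diagnosed where the difficulty sits, but diagnosis is not yet a proof.
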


Smooth sampling uncertainty is by no means the only model for which the Simplicity Conditions will be satisfied. As the proof of Proposition \ref{prop:smoothsampling} shows, A1 and A3 are very close from a learning perspective. Given Proposition \ref{prop:3.3}, A3 will be satisfied so long as the decision rule is suitably continuous in the estimated error distribution $\hat{G}$.

\section{Extension: Completing preferences}\label{sec4}

In many settings the decision maker is forced to make a choice. In such cases we would like to be able to make predictions about behavior even when the environment contains pairs of alternatives that are not ranked according to the incomplete preferences above. Following \cite*{gilboa2010objective}, henceforth GMMS, I assume that the decision maker is characterized by a pair of binary relations ($\succsim, \succsim'$), interpreted as \textit{objective rationality} and \textit{subjective rationality} relations respectively.\footnote{According to GMMS, a choice is objectively rational if the DM can convince others that they, the DM, is right in making it. It is subjectively rational if the DM cannot be convinced that they are wrong in making it.} I consider decision makers with objectively rational preferences that can be represented as in Theorem \ref{thm3.1}, and consider various assumptions on the subjective relation that lead to distinct complete preference relations and representations. This approach allows me to separate the decision makers ability to compare acts from their attitude towards choices between acts that they do not know how to compare. The first point is addressed by Theorem \ref{thm3.1}. Attitudes towards the unknown can be captured by intuitive axioms. Following GMMS, I first make the natural assumption that the subjective relation never reverses the objective. 

\vspace{1.5mm}
\noindent CONSISTENCY: $f \succsim g$ implies $f \succsim' g$.

\vspace{1.5mm}
This notion of consistency allows for indifference according to the subjective relation between acts that are strictly ranked according to the objective relation. A stronger notion of consistency rules out such differences when the acts in question are well-understood. Below I discuss the reason for imposing Strong Consistency only on well-understood acts. 

\vspace{1.5mm}
\noindent STRONG CONSISTENCY FOR SIMPLE ACTS: For any two well-understood acts $f,g$, $f \succsim g \Leftrightarrow f \succsim' g$.

\vspace{1.5mm}
Finally, I assume a cautious approach to incomparable alternatives. 

\vspace{1.5mm}
\noindent CAUTION: For all $f \in F$ and $h \in F_c$, if $f \not \succsim h$ then $h \succsim' f$. 

\vspace{1.5mm}
Caution is exactly the axiom used by GMMS to derive max-min expected utility (MEU) as the subjective relation given Bewley objective preferences. When the objective relation is Simple Bounds, Caution yields a representation in which acts are evaluated according to their simple lower bounds.

\begin{definition}
Preference $\succsim'$ has a \textbf{Cautious} representation if there exists an integer $N$, probability $P$ on $\Sigma$, and a non-constant function $u: Z \rightarrow \mathbb{R}$ such that, for every $f,g \in F$
    \begin{equation*}
       f \succsim' g \ \ \text{iff} \ \ \int_{\Omega} E_{\siminf_{N,f}(\omega)}u \ dP(\omega) \geq \int_{\Omega} E_{\siminf_{N,g}(\omega)}u \ dP(\omega).
    \end{equation*}
\end{definition}

\noindent Call preferences that admit a Cautious representation Cautious preferences.

\begin{theorem}\label{thm4.1}
The following statements are equivalent:
\renewcommand{\theenumi}{\roman{enumi}}
\begin{enumerate}
    \item $\succsim$ satisfies the \textbf{Basic Conditions}, \textbf{Simplicity Conditions}, \textbf{S-Independence}, and \textbf{Uniform Comparability}; $\succsim'$ satisfies \textbf{Archimedean Continuity}; and $\succsim$ and $\succsim'$ jointly satisfy \textbf{Consistency}, \textbf{Strong Consistency for Simple Acts} and \textbf{Caution}.
    \item $\succsim$ has a Simple bounds representation and $\succsim'$ has a Cautious representation, with common parameters $P,u,N$. Moreover, $P$ is unique, $u$ is unique up to positive affine transformations, and for all $f \in F$, $\siminf_{N,f}$ and $\simsup_{N,f}$ are non-empty.
\end{enumerate}
\end{theorem}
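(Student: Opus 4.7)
The plan is to separate the two implications. Direction (ii)$\Rightarrow$(i) is a routine verification: Theorem \ref{thm3.1} supplies all the axioms on $\succsim$; Consistency follows because $f \succsim g$ under Simple Bounds implies $\int \siminf_{N,f}u\,dP \geq \int \siminf_{N,g}u\,dP$ (either $f\geq g$, in which case any candidate for $\siminf_{N,g}$ is also a candidate for $\siminf_{N,f}$, or the clause (ii) of Simple Bounds sandwiches the value through $\simsup_{N,g}$); Strong Consistency on $F_{CE}$ is immediate since $\siminf_{N,f}\sim f$ whenever $f$ is well-understood; and Caution is the defining inequality of the Cautious representation. Archimedean Continuity of $\succsim'$ reduces to continuity of $\lambda\mapsto\int\siminf_{N,\lambda f+(1-\lambda)g}u\,dP$, which I would obtain from a Lipschitz estimate in the sup norm on utility images.

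The work is in (i)$\Rightarrow$(ii). By Theorem \ref{thm3.1}, $\succsim$ has a Simple Bounds representation with parameters $P,u,N$, and $\siminf_{N,f}$ is non-empty for every $f$. The key claim I will establish is
\begin{equation*}
f \sim' \siminf_{N,f} \qquad \text{for every } f \in F.
\end{equation*}
Given this, for any $f,g$ the well-understood acts $\siminf_{N,f},\siminf_{N,g}$ are ranked the same way by $\succsim$ and by $\succsim'$ (Strong Consistency for Simple Acts), and the EU structure on $F_{CE}$ inherited from Theorem \ref{thm3.1} identifies that ranking with comparison of $\int\siminf_{N,f}u\,dP$ and $\int\siminf_{N,g}u\,dP$. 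Combining with transitivity of $\succsim'$ (taken for granted as for any preference relation) delivers the Cautious representation.

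The heart of the argument is showing $f \sim' \siminf_{N,f}$. One direction, $f \succsim' \siminf_{N,f}$, follows from Consistency together with the Simple Bounds representation (clause (ii) holds with equality). For the reverse, fix $\bar h \in F_c$ with $u(\bar h) = \int\siminf_{N,f}u\,dP$, so that $\bar h \sim \siminf_{N,f}$ by Simple Bounds. For any constant $h$ with $u(h) > \int\siminf_{N,f}u\,dP$, I claim $f \not\succsim h$: if $f \geq h$ held then $h$ itself, being a constant act with $h\leq^0 f$, would be a feasible $N$-simple dominated act, forcing $u(h)\leq \int\siminf_{N,f}u\,dP$, a contradiction; and clause (ii) of Simple Bounds fails by construction. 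Caution then yields $h \succsim' f$. Applied to $h_\lambda = \lambda h + (1-\lambda)\bar h$ for any $\lambda \in (0,1]$ and any fixed $h$ with $u(h) > u(\bar h)$, this gives $h_\lambda \succsim' f$ throughout $(0,1]$; Archimedean Continuity of $\succsim'$ closes the set and forces $\bar h \succsim' f$, and Strong Consistency for Simple Acts transfers this to $\siminf_{N,f} \succsim' f$.

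The main obstacle I anticipate is the converse-direction check that the Cautious functional $V(f) := \int\siminf_{N,f}u\,dP$ is continuous along mixtures, since mixing can enlarge partitions and change which simple dominated acts are candidates. I expect this is handled by a Lipschitz bound: given any $h\in\simsup_{N,f}$ (equivalently $\siminf$), one can perturb it by a constant-act mixture to produce a dominated $N$-simple act for $f' = \lambda f + (1-\lambda)g$ whose expected utility differs from $V(f)$ by $O(\|u\circ f - u\circ f'\|_\infty)$, yielding continuity in $\lambda$. The forward-direction core step itself (Caution plus Archimedean approximation through certain equivalents) closely parallels the GMMS Caution-to-MEU argument and should proceed cleanly once the Simple Bounds infrastructure is in hand.
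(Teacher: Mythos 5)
Your proof of the key step $f \sim' \siminf_{N,f}$ --- Consistency for one direction, then approximating the certain equivalent of $\siminf_{N,f}$ from above by constant acts, applying Caution to each, and closing the set with Archimedean Continuity before transferring via Strong Consistency for Simple Acts --- is exactly the paper's argument, and your subsequent identification of the Cautious representation through the EU structure on $F_{CE}$ matches the paper as well. The paper's own proof is in fact terser than yours: it does not spell out the $(ii)\Rightarrow(i)$ verification or the continuity of the Cautious functional along mixtures that you flag as an obstacle.
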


Cautious preferences capture a particular attitude towards the unknown. Faced with a difficult choice, the decision maker takes a worst case view of the set of payoffs that they consider reasonable. The properties of Cautious preferences, and their relationship to models of ambiguity aversion, are discussed in detail in \Cref{sec4.2}. In brief, the Cautious completion is the most ambiguity averse completion of Simple Bounds preferences, in the sense of \cite{ghirardato2002ambiguity}. Importantly, Cautious preferences do not satisfy the Uncertainty Aversion axiom of \cite{gilboa1989maxmin}, but do satisfy a modified version, $N$-Ambiguity Aversion, discussed in \cite{hartmann2019hierarchy}. Under some conditions, Cautious preferences are a special case of the Revealed Reasoning model of \cite{saponara2020reasoning}. This connection is discussed in detail in \Cref{sec10}.

\subsection{Properties of Caution}\label{sec4.2}

Cautious preferences can also be understood through the violations of Independence that they exhibit. A decision maker with Cautious preferences may violate independence when the mixture of two acts leads to an act that is harder to approximate from below by simple acts. In particular, the mixture of two $N$-simple acts between which the decision maker is indifferent will in general not be $N$-simple, and may thus be considered inferior to the original acts. Cautious preferences are ambiguity averse in the sense of \cite{ghirardato2002ambiguity}.\footnote{Simple Bounds preferences are also ambiguity averse, if we extend the notion to incomplete preferences.} In particular, for fixed $u$ and $P$, decision makers with a higher capacity are less ambiguity averse than those with lower capacity. 

Given the discussion of \cite{bewley2002knightian} above and the formal parallel between this paper and GMMS, it is natural to consider the relationship between Cautious preferences and \cite{gilboa1989maxmin} MEU preferences. Preferences have an MEU representation if there exists a convex set $C$ of probabilities on $\Sigma$ and a non-constant function $u: Z \rightarrow \mathbb{R}$ such that $f$ is preferred to $g$ if and only if 
\begin{equation*}
    \min_{p \in C} \int_{\Omega} E_{f(\omega)}dp(\omega) \geq \min_{p \in C} \int_{\Omega} E_{g(\omega)}dp(\omega).
\end{equation*}
As GMMS show, such preferences are derived from Bewley preferences through the Consistency and Caution axioms, just as Cautious preferences are derived from Simple Bounds. These representations do not coincide however. This can be seen most easily by considering the Uncertainty Aversion axiom of \cite{gilboa1989maxmin}.

\vspace{1.5mm}
\noindent UNCERTAINTY AVERSION: For every $f,g \in F$, if $f \sim g$ then $(1/2) f + (1/2)g \succsim g$. \vspace{1.5mm}

Uncertainty Aversion captures the notion that mixing acts smooths payoffs, and thus reduces exposure to uncertainty. It is easy to see that this axiom is violated by Cautious preferences. As noted above, mixing any two $N$-simple acts between which the decision maker is indifferent leads to an act which is no better than either of the original acts. This is because the mixed act may not be $N$-simple, and will thus be approximated from below, whereas the original acts were perfectly understood. More generally, whenever the two acts considered are comonotonic, Cautious preferences will satisfy the reverse  Uncertainty Aversion: if $f \sim g$ then $g \succsim (1/2) f + (1/2)g $. Two acts $f,g$ are said to be comonotonic if there is no $\omega, \omega' \in \Omega$ such that $f(\omega) \succ f(\omega')$ and $g(\omega') \succ g(\omega)$. \footnote{This can be seen as follows. For a set $I \subseteq \Omega$ and act $f$ define $\und{f}(I) = \inf\{u\circ f(\omega): \omega \in I\}$. Then the value of $\siminf_{N,f}$ is given by 
\begin{equation*}
\max_{\{I_i\}_{i=1}^N \in T^N(\Omega)} \sum_{i=1}^N P(I_i)\und{f}(I).
\end{equation*}
When $f$ and $g$ are comonotonic $\und{f \alpha g}(I) = \alpha \und{f}(I) + (1-\alpha)\und{g}(I)$ for all $I$, where $f \alpha g = \alpha f + (1-\alpha)g$. The claim follows.} The preference against the mixture will be strict unless there are elements of $\siminf_{N,f}$ and $\siminf_{N,g}$ which share the same partition. The intuition for this reversal is that mixing between comonotonic acts does not smooth payoffs across states in the same way that mixing between non-conmonotonic acts can. On the other hand, since the two acts may have had a lot of variations on different regions of the state space, the mixed act will be more difficult to approximate from below than any of the two acts individually. This contrast illustrates that MEU and Cautious preferences capture very different notions of aversion to uncertainty. 

Uncertainty Aversion is satisfied by Cautious preferences whenever the mixture of $f$ and $g$ is $N$-simple, and indeed the inequality may be strict. A modification of Uncertainty Aversion along these lines is studied by \cite{hartmann2019hierarchy}, who state the following axiom

\vspace{1.5mm}
\noindent N-AMBIGUITY AVERSION: $f_1, \dots, f_n \in F$, $\alpha_1,\dots, \alpha_n \geq 0$, $\sum_{i=1}^n \alpha_i$, $\sum_{i=1}^n \alpha_i f_i = f \in F_N$ such that $f_1 \sim f_2 \dots \sim f_n$ implies $f \succsim f_1$.  

\vspace{1.5mm}
The preferences studied by \cite{hartmann2019hierarchy} differ from Cautious preferences, primarily in that they satisfy the Comonotonic Independence axiom of \cite{schmeidler1989subjective}.  

\vspace{1.5mm}
\noindent COMONOTONIC INDEPENDENCE: For all pairwise comonotonic acts $f,g,h$, and all $\alpha \in (0,1)$, $f \succ g$ implies $\alpha f +(1-\alpha)h \succ \alpha g + (1-\alpha)h$. 

\vspace{1.5mm}
Schmeidler's motivation for Comonotonic Independence, as opposed to the usual Independence axiom, is similar to my motivation of $S$-Independence. \cite{schmeidler1989subjective} notes that arbitrary mixtures may lead to acts which define a much finer (larger) algebra than the original acts, and thus violations of Independence may occur. However, if ``$f$, $g$ and $h$ are pairwise comonotonic, then the comparison of $f$ to $g$ is not very different from the comparison of [$\alpha f +(1-\alpha) h$ to $\alpha g + (1-\alpha)h$]''. In some circumstances however, comonotonicity may be too weak a notion to guarantee that mixtures do not alter the the acts under consideration in ways that lead to violations of independence. Consider the following simple example. Let the state space be the interval $[0,1]$ and identify $u(f(\omega))$ with $f(\omega)$. Let $f = 2$ on $[0,1/2)$ and $f = 4$ on $(1/2, 1]$. Let $g = 1$ on $[0,2/3)$ and $g = 6$ on $(2/3,1]$. Suppose $f \succ g$. The mixture $1/2 f + 1/2 g$ has both a greater spread between its highest and lowest outcomes and a larger partition than $f$. Both these changes may lead the DM to favor $g$. The Simplicity Conditions, along with $S$-Independence, can be seen as making precise the notion of ``not very different''. Cautious preferences capture aversion to uncertainty in the sense that the decision maker, faced with an act that she does not fully understand, assigns to the act the minimum value consistent with monotonicity and her preferences over the acts that she understands well. Uncertainty here stems directly from the complexity of the acts under consideration. However, cautious preferences satisfy the following condition, capturing a notion of aversion to comonotonic mixtures.\footnote{In ongoing work, I characterize a broad class of preferences satisfying Comonotonic Mixture Aversion. I show that within this class, cautious preferences are simultaneously \textit{i}) maximally ambiguity averse, in the sense of \cite{ghirardato2002ambiguity}, \textit{ii}) maximally N-Ambiguity averse, and \textit{iii}) maximally averse to mixtures of $N$-simple acts. Details available upon request.}

\vspace{1.5mm}
\noindent COMONOTONIC MIXTURE AVERSION: For all comonotonic acts $f,g$ with $f \sim g$ and $\alpha \in (0,1)$, $f \succsim \alpha f + (1-\alpha)g $.
\vspace{1.5mm}

\subsection{Empirical evidence}\label{sec:experimental_ambiguity}

Cautious preferences display features of ambiguity aversion. One of the useful features of this model is that it explicitly separates the decision makers understanding of the environment from her attitude towards the unknown. This novel model of ambiguity aversion can explain some experimental findings at odds with existing models. 

\cite{chew2017partial} find that subjects are averse to increases in the number of possible compositions (the size of the state space in my framework) of a deck of cards on which bets are made. For example, consider an individual betting on red in a deck of red and black cards, where the number of red cards is known to be between $n$ and $100 - n$. The authors find that subjects aversion to the ambiguous deck, measured by the difference between their certainty equivalents for an ambiguous act and the corresponding compounded lottery, is decreasing in $n$. The data contradict the predictions of the recursive expected utility model (\cite{klibanoff2005smooth}, \cite{seo2009ambiguity}), which predicts an aversion to increasing the number of possible compositions only in certain cases. Similarly, \cite{viscusi1992bayesian} document increasing ambiguity aversion as the range of uncertain outcomes increases.

\subsection{Alternative approaches}
There may be situations in which decision makers take the opposite approach to what they recognize as their limited understanding of the objects of choices.

\vspace{1.5mm}
\noindent ABANDON: For all $f \in F$ and $h \in F_c$, if $h \not \succsim f$ then $f \succsim' h$.

\vspace{1.5mm}
Replacing Caution with Abandon yields the obvious converse representation, which I call Reckless preferences, whereby acts are evaluated according to their $\simsup$ rather than $\siminf$. Such preferences seem to contradict common assumptions about risk and ambiguity aversion. They have a flavor of, but are distinct from, optimism or overconfidence. The decision maker does not behave exactly as if she thought high payoff states are more likely to occur, but rather as if the underlying act is as good as possible without violating her limited understanding of the situation. Such a model may be useful for understanding the behavior of decision makers who seem to favor nebulous prospects with high potential over those that are well-understood. I will sometimes use the notation $U(f)$ or $U(f,P)$ to denote perceived utility. If the agent is cautious $U(f,P) = E_P[E_{\siminf_{N,f}(\omega)}u]$, and if the agent is reckless $U(f,P) = E_P[E_{\simsup_{N,f}(\omega)}u]$.

An alternative approach to incompleteness would be to proceed as in \cite{bewley2002knightian} and assume that there is a default option and the DM satisfies an inertia condition, whereby an alternative act is chosen over the default if and only if it is preferred, according to the underlying incomplete preference. While a full coverage of inertia is beyond the scope of this paper, the results presented here are helpful for understanding such a model. What matters for choice under inertia is the $\simsup$ of the default act and the $\siminf$ of the alternative. The behavior of these objects is studied in the comparative statics and applications sections.\footnote{The earlier working paper version of this paper includes further discussion of inertia.}

\subsection{A note on Consistency}

Given the interpretation of the subjective preference as an extension of the incomplete objective preference, it would seem natural to impose Strong Consistency everywhere: for all $f,g \in F$, $f \succsim g$ implies $f \succsim' g$, and $f \succ g$ implies $f \succ' g$. Consistency allows the decision maker to be indifferent under $\succsim'$ between some acts that are strictly ranked under $\succsim$.\footnote{In fact, under Strong Consistency for Simple Acts, this occurs for acts $f,g$ only if a) either $f \geq g$ or $g \geq f$, and b) $\siminf_{N,f} \cap \siminf_{N,g} \neq \varnothing$.} It turns out that Strong Consistency everywhere is incompatible with Archimendean Continuity of $\succsim'$. This conflict between Caution, Strong Consistency, and Continuity is not unique to this setting. GMMS face the same trade-off obtaining MEU preferences as the Cautious completion of Bewley preference (see Section \ref{sec4.2}), and the incompatibility holds for a broad class of incomplete preference models.\footnote{As a technical note, Strong Consistency for Simple Acts renders imposition of further basic conditions on $\succsim'$ redundant, as these are inherited from $\succsim$ on the set of well-understood acts.} 
Rather than relaxing consistency to retain continuity, as in Theorem \ref{thm4.1}, one can drop the continuity requirement for $\succsim'$ and impose Strong Continuity everywhere. This yields ``lexicographic'' preferences: acts are first ranked by statewise dominance, and then according to their $\siminf$ if they are not ranked according to statewise dominance. The difference between the lexicographic model and that of Theorem \ref{thm4.1} is one of continuity; the two models have nearly identical properties and predictions.

\section{Applications}\label{sec:applications}

The subsequent applications focus on Cautious and Reckless preferences. However the results are relevant for Simple Bounds preferences more generally, as these are also characterized by the $\siminf$ and $\simsup$. For example in the insurance choice application of \Cref{sec8} under inertia, an individual chooses some plan $f$ over the default act $f$ if and only if $\siminf_{N,f}$ is preferred to $\simsup_{N,g}$. Thus the results on insurance valuation for a Cautious individual in \Cref{sec8} also answer the question of how an individual with inertia will value new offerings relative to their default plan. 

\subsection{Consumption-Savings}\label{sec9}

In this section I study consumption-savings decisions in a two period model. The following observation is useful for understanding the role of complexity in this setting. Let $f$ be an act on $\Omega = \mathbb{R}$. Let $P$ be the cdf of the agent's belief and $u$ the DM's utility function. Fix a selection $s$ from $\siminf_{N,f}$, and recall that since $f$ is monotone, $s$ is defined by a vector of cut-offs $(t_i )_{i=0}^N$. Define the \textit{lower-perceived distribution} $P^s_N$ as the distribution satisfying $E_{P^s_N}[f] = E_P[\siminf_{N,f}]$. Similarly for $s \in \simsup_{N,f}$ defined by cut-offs $(t_i )_{i=0}^N$, the \textit{upper-perceived distribution} $\tilde{P}^s_N$ is defined as the distribution satisfying $E_{\tilde{P}^s_N}[f] = E_P[\simsup_{N,f}]$.\footnote{So if $P$ has no mass points, $P^s_N$ places mass of $P(t_{i+1}) - P(t_{i})$ on $t_i$, for $i = 0,\dots,N-1$ and is zero elsewhere, and $\tilde{P}^s_N$ places mass of $P(t'_{i}) - P(t'_{i-1})$ on $i = 1,\dots,N$. In other words, \Cref{obs9.1} is simply the observation that for an increasing act, the $\siminf$ ($\simsup$) is a step function, where the steps touch $u\circ f$ at their leftmost (rightmost) endpoints.}

\begin{observation}\label{obs9.1}
Let $f\circ u$ be increasing. Then $\tilde{P}^{s'}_N \succsim_{FOSD} P \succsim_{FOSD} P_N$ for any $s \in \siminf_{N,f}$ and $s' \in \simsup_{N,f}$. If, moreover, $u\circ f$ is integrable then $E_P[u\circ f] = \lim_{M\rightarrow \infty} E_{P_M}[u \circ f] = E_{\tilde{P}_M}[u \circ f]$.
\end{observation}

Observation \ref{obs9.1} can be extended to more general environments using the Lebesgue approach of section \ref{sec3.1}. This observation facilitates comparisons between constrained and unconstrained agents.

\subsubsection{Portfolio choice with a safe and risky asset} 

Suppose the DM is cautious and faces the problem of allocating wealth between consumption, purchase of a risk-free asset and purchase of a risky asset. I compare choices under a capacity constraint to those of a fully rational agent ($N = \infty$). Let $R_b$ be the certain return on the risk free asset, and $R_s$ the uncertain return on the risky asset with cdf $P$ and pdf $p$ on the interval $[\und{R}_s, \bar{R}_s]$. The DM with capacity $N$ solves
\begin{equation}
    \max_{b,s} u(w-b-s) + \beta V^N(R_b b + R_s s)
\end{equation}
where
\begin{equation}\label{eq32}
    V^N(R_b b + R_s s) = \max_{\{\{t_i\}_{i=0}^N \in I\}} \ \sum_{i=1}^{N}[P(t_i) - P(t_{i-1})]u(R_bb + t_{i-1}s),
\end{equation}
and $I$ is the set of interval partitions of $[\und{R}_s, \bar{R}_s]$. Note that $\lim_{N \rightarrow \infty} V^N(R_b b + R_s s) = \int u(R_bb + rs) dP(r)$.
Cautious versus unconstrained agents. As Observation \ref{obs9.1} makes clear, for any \textit{fixed} portfolio choice the unconstrained DM perceives a FOSD shift of the returns perceived by the constrained DM. Of course, as the constrained DM changes their portfolio allocation the perceived distribution of $R_s$ induced by $\siminf$ will also change, so this is not the same problem as comparing portfolio choice under two different fixed yield distributions.

\begin{proposition}\label{prop9.1}
Let $u$ be CRRA with a coefficient of relative risk aversion greater than or equal to 1. \footnote{Part (\textit{i}) holds under the weaker condition that $u'''(z)u'(z)/u''(z)^2 \geq 2$ for all $z$. I state it for CRRA for simplicity. The assumption that the risk aversion coefficient is greater than 1 is standard. Part (\textit{ii}) also holds under weaker conditions, CRRA just makes it easy to show that (\ref{eq9.9}) in the proof holds.} Then
\renewcommand{\theenumi}{\roman{enumi}}
\begin{enumerate}
    \item For a given level of aggregate savings $x > 0$, a Cautious DM allocates a greater portion of savings to the safe asset than an unconstrained DM. 
    \item A Cautious DM saves more overall then an unconstrained DM. 
\end{enumerate}
\end{proposition}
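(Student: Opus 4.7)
Plan. For part (i), I fix total savings $x$ and compare the optimal risky holdings $s$. By the envelope theorem applied to the inner maximization over cut-offs in (\ref{eq32}), the constrained DM's first-order condition for $s$ is
\[
E_{P_N^{\ast}}\big[(r - R_b)\, u'(R_b(x-s) + rs)\big] = 0,
\]
while the unconstrained FOC is the same expression with $P$ in place of $P_N^{\ast}$, where $P_N^{\ast}$ denotes the perceived distribution supported on the optimal cut-offs $\{t_i^{\ast}\}$. By Observation \ref{obs9.1}, $P$ first-order stochastically dominates $P_N^{\ast}$. The strategy is to evaluate the constrained FOC at the unconstrained solution $s^{\ast}_U$ and show that it is non-positive; together with the standard second-order concavity of the constrained objective in $s$, this yields $s^{\ast}_C \leq s^{\ast}_U$, so the constrained DM allocates a larger share of savings to the safe asset.

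The crux is showing $E_{P_N^{\ast}}[\phi] \leq E_P[\phi] = 0$ for $\phi(r) := (r - R_b)\, u'(R_b(x - s^{\ast}_U) + r s^{\ast}_U)$. If $\phi$ were monotone increasing in $r$, this would be immediate from FOSD, and a direct computation confirms $\phi'(r) > 0$ globally when $\gamma = 1$ (log). For CRRA with $\gamma > 1$, however, $\phi$ is only unimodal — this is the main obstacle. The curvature hypothesis in the footnote, $u'''u'/(u'')^2 \leq 2$ (equivalently, writing $A(z) = -u''(z)/u'(z)$ for absolute risk aversion, $A'(z) + A(z)^2 \geq 0$), is satisfied by CRRA exactly when $\gamma \geq 1$, as a short calculation using $A(z) = \gamma/z$ shows. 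The plan is to combine this condition with the specific ``leftward projection'' structure of the $\siminf$ transformation (all mass within an interval $[t_{i-1}^{\ast}, t_i^{\ast})$ is sent to the lower endpoint) and the first-order conditions characterizing the optimal cut-offs, writing
\[
E_{P_N^{\ast}}[\phi] - E_P[\phi] = \sum_i \int_{t_{i-1}^{\ast}}^{t_i^{\ast}}\big[\phi(t_{i-1}^{\ast}) - \phi(r)\big]\, dP(r),
\]
and showing that the cut-off FOCs force the positive contributions from intervals where $\phi$ is decreasing to be dominated by the negative contributions from intervals where $\phi$ is increasing, precisely when the curvature condition holds.

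For part (ii), I exploit the scale-invariance of CRRA. Both future-value functions factor as $V^N(x) = \frac{x^{1-\gamma}}{1-\gamma}\, m_N$ and $V^{\infty}(x) = \frac{x^{1-\gamma}}{1-\gamma}\, m_{\infty}$, where $m_N$ and $m_{\infty}$ are the optimized per-unit perceived utilities (over $\alpha$ and, for $m_N$, over cut-offs), independent of $x$. The savings FOC then reduces to
\[
\frac{x}{w - x} = (\beta m)^{1/\gamma},
\]
so it suffices to show $m_N \geq m_{\infty}$. For each fixed $\alpha > 0$ and $\gamma > 1$, the function $r \mapsto (R_b + (r - R_b)\alpha)^{1-\gamma}$ is decreasing (since $1 - \gamma < 0$), so Observation \ref{obs9.1} and FOSD give $\sum_i [P(t_i) - P(t_{i-1})] (R_b + (t_{i-1} - R_b)\alpha)^{1-\gamma} \geq E_P[(R_b + (r - R_b)\alpha)^{1-\gamma}]$ for every partition, hence for the constrained optimum; minimizing over $\alpha$ preserves the inequality, yielding $m_N \geq m_{\infty}$ and $x_C \geq x_U$, strictly under mild non-degeneracy of $P$. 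For $\gamma = 1$, the factorization reduces to $V^N(x) - V^{\infty}(x)$ being a constant in $x$, so savings is distribution-independent and the inequality collapses to equality — the strict conclusion in (ii) relies on $\gamma > 1$.
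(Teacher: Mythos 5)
Your part (i) follows the same skeleton as the paper's proof --- envelope theorem for the inner cut-off maximization in (\ref{eq32}), then a comparison of marginal values using $P \succsim_{FOSD} P_N$ --- and, as you say, everything reduces to the sign of $E_{P_N}[\phi]-E_P[\phi]$ for $\phi(r)=(r-R_b)\,u'(R_b b+rs)$, which would follow from FOSD if $\phi$ were increasing. But your treatment of the case $\gamma>1$, where you correctly note that $\phi$ need not be monotone, is only a plan: you assert that the cut-off first-order conditions will ``force the positive contributions \ldots to be dominated by the negative contributions'' without giving the argument. The cut-off optimality conditions involve $u$, not $\phi$, and it is not at all obvious that they control the sign of $\sum_i\int_{t_{i-1}}^{t_i}[\phi(t_{i-1})-\phi(r)]\,dP(r)$; as written, your part (i) is proved only for $\gamma=1$. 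For reference, the paper closes this step by claiming $\phi$ is \emph{globally} increasing whenever $u'''u'/(u'')^2\ge 2$, obtained by comparing the slopes of the two sides of (\ref{eq9.5}). (You quote the footnote's inequality reversed; and your underlying worry is in fact well-founded, since CRRA gives $u'''u'/(u'')^2=1+1/\gamma$, which satisfies the paper's stated condition only for $\gamma\le 1$ --- but spotting an obstacle is not the same as resolving it.) A secondary problem: you invoke ``standard second-order concavity of the constrained objective in $s$,'' but $V^N$ is a pointwise maximum over cut-off vectors of concave functions and is not obviously concave. The paper sidesteps this by using concavity of the \emph{unconstrained} objective only: if the constrained marginal value of $\alpha$ lies below the unconstrained one at every $\alpha$, the constrained objective is decreasing beyond $\alpha(x,\infty)$, which bounds the constrained maximizer without any concavity of $V^N$.

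Your part (ii) is correct and takes a genuinely different route. The paper establishes increasing differences of $V(x,\cdot)$ via the envelope theorem, $P\succsim_{FOSD}P^{\alpha,x}$, the condition $-u''(z)z/u'(z)\ge 1$ used twice, and part (i) itself (needed to order $\alpha(x,N)$ against $\alpha(x,\infty)$, culminating in (\ref{eq9.9})). You instead exploit homotheticity: under CRRA the continuation value factors as $x^{1-\gamma}$ times an $x$-independent optimized constant, the savings first-order condition collapses to $x/(w-x)=(\beta m)^{1/\gamma}$, and the comparison $m_N\ge m_\infty$ is nothing more than the statement that the cautious perceived value lies below the true expectation for every $\alpha$ --- immediate from the definition of $\siminf$ or from Observation \ref{obs9.1} --- with the sign flipped into the right direction precisely because $1-\gamma\le 0$. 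This is cleaner, does not depend on part (i), and makes transparent both why $\gamma\ge 1$ matters and why savings coincide at $\gamma=1$; the cost is that it leans entirely on CRRA homogeneity, whereas the paper's increasing-differences argument is the one that generalizes beyond homothetic utility.
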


As the proof illustrates, the assumption of CRRA with coefficient greater than 1, or the weaker condition in the footnote, is by no means necessary.

\subsubsection{Equilibrium Asset Prices}

Consider a representative agent model in which the agent chooses between a safe and risky asset, which must both be in zero net supply in equilibrium. Returns are as in the previous section. Normalize the price of the safe asset to 1 and let the price of the risky asset be given by $p$. I assume that the agent has an endowment of $w$ in each period. Let $P^{b,s}$ be the perceived distribution given investment $b$ in the safe asset and $s$ in the risky asset. The constrained DM solves
\begin{equation*}
    \max_{b,s} u(w - b -ps) + \beta \int u(w + bR_b + sr)dP^{b,s}(r).
\end{equation*}
For the unconstrained DM the problem is identical, except that $P^{b,s}$ is replaced with $P$. 
It is intuitive that a cautious (reckless) constrained DM should be biased towards the safe (risky) asset, relative to the unconstrained DM. The constraint simply coarsens the DM's understanding of the stochastic payoff of the risky asset, which leads the cautious agent to undervalue it and reckless agent to overvalue. In fact, when considering assets that must be in zero net supply in equilibrium, we can extend this intuition to make comparisons between intermediate capacity levels. 
\begin{proposition}\label{prop9.3}
The equilibrium risky asset price with a Cautious (Reckless) representative agent is increasing (decreasing) in the agent's capacity. 
\end{proposition}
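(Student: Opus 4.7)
The plan is to use the representative-agent first-order condition at the no-trade equilibrium. Market clearing in zero net supply forces $b = s = 0$, so the price $p$ must make $s = 0$ optimal for the representative agent. For the cautious agent with $b = 0$ and a small long position $s > 0$, the payoff $w + sr$ is strictly increasing in $r$, so its $\siminf$ on any $N$-interval partition $\{t_0 < t_1 < \cdots < t_N\}$ equals $w + s\, t_{i-1}$ on $(t_{i-1}, t_i]$. Hence
\begin{equation*}
V^N(0,s) \;=\; \max_{\{t_i\}} \sum_{i=1}^N [P(t_i) - P(t_{i-1})]\, u(w + t_{i-1} s).
\end{equation*}
By the envelope theorem (as licensed in the paper beneath display (\ref{eq32})), the right-derivative at $s = 0^+$ equals $u'(w) \cdot a_N$, where $a_N := \max_{\{t_i\}_0^N} \sum_{i=1}^N [P(t_i) - P(t_{i-1})]\, t_{i-1}$. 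Matching marginal benefit to marginal cost of buying the risky asset yields $p_N^{\text{cautious}} = \beta\, a_N$.

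The core of the argument is a one-line refinement lemma: $a_N$ is strictly increasing in $N$. Given any $N$-partition, insert a new cut-off $t' \in (t_{i-1}, t_i)$; the change in the sum is
\begin{equation*}
[P(t') - P(t_{i-1})]\, t_{i-1} + [P(t_i) - P(t')]\, t' - [P(t_i) - P(t_{i-1})]\, t_{i-1} \;=\; [P(t_i) - P(t')]\,(t' - t_{i-1}) \;>\; 0,
\end{equation*}
so $a_{N+1} > a_N$ and $p_N^{\text{cautious}}$ is strictly increasing in $N$. The reckless case runs symmetrically: for $s > 0$, the tightest $N$-simple upper bound of $w + sr$ equals $w + s\, t_i$ on $(t_{i-1}, t_i]$, so $V^N_{\text{reckless}}(0,s) = \min_{\{t_i\}} \sum [P(t_i) - P(t_{i-1})]\, u(w + t_i s)$, and the analogous FOC gives $p_N^{\text{reckless}} = \beta\, b_N$ with $b_N := \min_{\{t_i\}_0^N} \sum [P(t_i) - P(t_{i-1})]\, t_i$. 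Inserting a cut-off changes the contribution of the split interval by $[P(t') - P(t_{i-1})]\,(t' - t_i) < 0$, so $b_N$ is strictly decreasing in $N$.

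The main technical subtlety I expect is the kink in $V^N$ at $s = 0$: because the endpoints selected by the $\siminf$ and $\simsup$ flip when the sign of $s$ reverses, $V^N$ admits only one-sided derivatives there, so the equilibrium condition pins down $p$ up to a (possibly degenerate, and in the reckless case reversed) interval rather than a unique point. The refinement lemma above delivers monotonicity in $N$ of the natural selection, namely the right-derivative FOC price -- the marginal price at which a long position ceases to be attractive -- which is exactly what the proposition asserts.
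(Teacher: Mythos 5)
Your proof is correct and follows essentially the same route as the paper: both reduce the equilibrium price to the first-order condition at the no-trade point $b=s=0$ and show that the perceived marginal value of a long position in the risky asset is monotone in $N$ because refining a partition strictly improves the $\siminf$ (resp.\ $\simsup$). Your version is in fact more careful than the paper's, which writes the equilibrium condition as an equality of derivatives and silently ignores the kink of $V^N$ at $s=0$ that you correctly identify; your explicit computation of $a_N$ and the one-line refinement inequality make precise the paper's appeal to ``$P_N^{0,s}$ is the solution to a maximization problem.''
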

When the agent is cautious, Proposition \ref{prop9.3} gives an explanation for the so called ``equity premium puzzle''.

\subsection{Insurance valuation}\label{sec8}

I now compare the salience of various features of insurance plans for a complexity-constrained versus a fully rational ($N = \infty$) individual. I focus on a Cautious individual, but the analysis is equally relevant for an individual who faces inertia with respect to some fixed default plan, as discussed above.

I take as given a class of basic insurance contracts that are characterized by a premium $p$, i.e. price for the plan; a deductible $d$, below which the individual bears all losses; a coverage rate $c$ specifying the fraction of losses above the deductible covered by the plan; an out-of-pocket expenditure cap $m$, which is the maximum amount that an individual will have to pay, excluding the premium. I do not provide a foundation for the use of these piece-wise linear contracts, but they are by far the most common form of insurance contract. These contracts have the feature that the individual's ex-post wealth is decreasing in the realized loss in $[0,\Bar{\omega}]$. Recall that in this case $\siminf$ and $\simsup$ will be measurable with respect to $N$-element interval partitions, which I will describe by the cut-off states $t_0 = 0, t_1 \dots, t_N = \Bar{\omega}$. 

\subsubsection{The setting}

Consider an individual facing a bounded loss distribution on $\Omega = [0,\Bar{\omega}]$ with absolutely continuous CDF $P$. The loss distribution will be fixed throughout. Let $w$ be the individual's endowment wealth. In autarky the ex-post wealth is $w-\omega$, where $\omega$ is the realized loss. 

I do not consider here the choice of the contract by the insurer, only the valuation by the individual. Say that a capacity-constrained individual \textit{over-reacts} (\textit{under-reacts}) to a change from one plan to another if the magnitude of the difference between their values for the two plans is greater (less) than that of a fully rational individual.

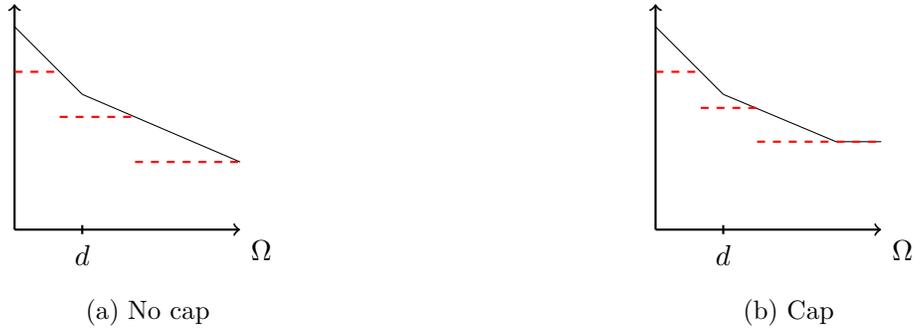
\begin{figure}[ht]
    \centering
    \begin{subfigure}{.45\textwidth}
    \centering
    \begin{tikzpicture}[scale = .6]
        \draw[thick,->] (0,0) -- (0,5);
        \draw[thick,->] (0,0) -- (5,0) node[anchor = north west] {$\Omega$};
        \draw[thick] (1.5,.1) -- (1.5,-.1) node[anchor = north] {$d$};
        \draw (0,4.5) -- (1.5,3);
        \draw (1.5,3) -- (5,1.5);
        \draw[red,thick,dashed] (0,3.5) -- (1,3.5);
        \draw[red,thick,dashed] (1,2.5) -- (2.67,2.5);
        \draw[red,thick,dashed] (2.67,1.5) -- (5,1.5);
    \end{tikzpicture}
    \caption{No cap}
    \label{fig:subim1}
    \end{subfigure}
    \hfill
    \begin{subfigure}{0.45\textwidth}
    \centering
    \begin{tikzpicture}[scale = .6]
        \draw[thick,->] (0,0) -- (0,5);
        \draw[thick,->] (0,0) -- (5,0) node[anchor = north west] {$\Omega$};
        \draw[thick] (1.5,.1) -- (1.5,-.1) node[anchor = north] {$d$};
        \draw (0,4.5) -- (1.5,3);
        \draw (1.5,3) -- (4,1.95);
        \draw (4,1.95) -- (5, 1.95);
        \draw[red,thick,dashed] (0,3.5) -- (1,3.5);
        \draw[red,thick,dashed] (1,2.7) -- (2.25,2.7);
        \draw[red,thick,dashed] (2.25,1.95) -- (5,1.95);
    \end{tikzpicture}
    \caption{Cap}
    \label{fig:subim2}
    \end{subfigure}
    \caption{$\siminf$ with and without out-of-pocket maximum}
\label{fig:image2}
\end{figure}

\subsubsection{Salient plan features}

I will consider changes in the deductible and in the coverage rate, meaning the percentage $c$ of losses above the deductible covered by the contract ($c=1$ in the full insurance case). Proposition \ref{prop8.1} states that a cautious agent over-reacts to changes in both parameters when there is no out-of-pocket expenditure cap. On the other hand, when there is a binding cap the agent's valuation is unaffected by changes in the coverage rate, provided the coverage rate is sufficiently high. 

\begin{proposition}\label{prop8.1}
Consider marginal changes to a contract with no out-of-pocket expenditure cap. A \textbf{cautious} agent:
\begin{enumerate}
    \item over-reacts to changes in the deductible.
    \item over-reacts to changes in the coverage rate. 
\end{enumerate}
\end{proposition}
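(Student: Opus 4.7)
The plan is to differentiate the cautious and rational perceived utilities with respect to the relevant contract parameter and show the cautious derivative has strictly larger magnitude. Let $f(\omega)$ denote the individual's ex-post wealth under the contract, so $f(\omega) = w - p - \omega$ on $[0,d]$ and $f(\omega) = w - p - cd - (1-c)\omega$ on $(d,\bar\omega]$; in particular $f$ is strictly decreasing. Because $f$ is monotone, $\siminf_{N,f}$ is described by an interval partition $0 = t_0 < \dots < t_N = \bar\omega$ taking the worst-case value $u(f(t_i))$ on $(t_{i-1}, t_i]$, so that
\begin{equation*}
V_N = \max_{\{t_i\}} \sum_{i=1}^N [P(t_i)-P(t_{i-1})]\, u(f(t_i))
\end{equation*}
while the fully rational value is $V_\infty = \int u\circ f\, dP$. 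Existence of the maximizer is guaranteed by Theorem \ref{thm3.2}, so I may apply the envelope theorem by differentiating the objective at the optimal partition.

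For part 1, $\partial f/\partial d$ equals $0$ on $[0,d)$ and $-c$ on $(d,\bar\omega]$, so the envelope theorem yields
\begin{equation*}
\frac{dV_\infty}{dd} = -c \int_d^{\bar\omega} u'(f(\omega))\, dP(\omega), \qquad \frac{dV_N}{dd} = -c \sum_{i:\, t_i > d} [P(t_i)-P(t_{i-1})]\, u'(f(t_i)).
\end{equation*}
Strict concavity of $u$ and strict monotonicity of $f$ on $(d,\bar\omega]$ make $\omega \mapsto u'(f(\omega))$ strictly increasing there. For any partition interval $(t_{i-1}, t_i] \subseteq (d,\bar\omega]$ of positive $P$-measure, this gives $u'(f(t_i))[P(t_i)-P(t_{i-1})] > \int_{t_{i-1}}^{t_i} u'(f(\omega))\, dP$, and for the single interval $(t_{i^*-1}, t_{i^*}]$ straddling $d$ the same bound follows after splitting the integral at $d$ and using $P(t_{i^*})-P(t_{i^*-1}) \geq P(t_{i^*})-P(d)$. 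Summing over $i$ and multiplying by $-c$ yields $|dV_N/dd| > |dV_\infty/dd|$.

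Part 2 is parallel: since $\partial f/\partial c = \omega - d$ on $(d,\bar\omega]$, the envelope theorem gives
\begin{equation*}
\frac{dV_\infty}{dc} = \int_d^{\bar\omega} u'(f(\omega))(\omega-d)\, dP(\omega), \quad \frac{dV_N}{dc} = \sum_{i:\, t_i > d} [P(t_i)-P(t_{i-1})]\, u'(f(t_i))(t_i - d).
\end{equation*}
The integrand $u'(f(\omega))(\omega - d)$ is a product of strictly positive, strictly increasing functions on $(d,\bar\omega]$ and is therefore strictly increasing there, so the same right-endpoint dominance argument produces $dV_N/dc > dV_\infty/dc > 0$, which is over-reaction to the coverage rate. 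The main technical obstacle in both parts is the partition interval straddling the kink of $f$ at $\omega = d$; this is handled by splitting the integral at $d$ and noting that the pre-deductible sub-interval contributes zero to the envelope derivative for $V_\infty$ while its $P$-mass is attached, in $V_N$, to the strictly larger right-endpoint value of $u'(f)$. Strictness of both conclusions requires $u$ strictly concave and $(d,\bar\omega]$ of positive $P$-measure, the natural non-degeneracy conditions.
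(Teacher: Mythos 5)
Your overall strategy---envelope theorem at the optimal cut-off vector, then right-endpoint dominance of the Riemann--Stieltjes sum over the integral using concavity of $u$ and monotonicity of $f$---is exactly the paper's argument. But there is a genuine gap in how you obtain \emph{strict} over-reaction, and it bites precisely in the case the paper cares most about. Your two sources of strictness are (a) strict monotonicity of $u'\circ f$ on $(d,\bar\omega]$, and (b) the straddling cell, for which you only assert the \emph{weak} inequality $P(t_{i^*})-P(t_{i^*-1})\geq P(t_{i^*})-P(d)$. Source (a) silently assumes $c<1$: when $c=1$ (full coverage above the deductible, the baseline for the low-deductible-bias discussion and for Propositions \ref{prop8.2} and \ref{prop8.6}), $f$ is \emph{constant} on $(d,\bar\omega]$, so $u'\circ f$ is constant there and every right-endpoint bound above $d$ collapses to an equality. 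In that case the only possible source of strictness for part 1 is the straddling cell, and that requires knowing $t_{i^*-1}<d$ strictly, i.e.\ that the optimal partition never places a cut-off exactly at the kink. You never establish this, so your argument delivers only $|dV_N/dd|\geq|dV_\infty/dd|$ when $c=1$.

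The paper closes this hole with Lemma \ref{lem8.1}: because ex-post wealth has a convex kink at $d$ ($f'_+(d)=-(1-c)>-1=f'_-(d)$ whenever $c>0$), the first-order conditions for an interior cut-off at $d$ would force $f'_+(d)\leq f'_-(d)$, a contradiction; hence the highest cut-off below $d$ is strictly below $d$, the mass $P(d)-P(t_{n^*})$ is strictly positive, and the cautious derivative is strictly more negative even at $c=1$. (The paper explicitly notes that its second inequality holds with equality iff $c=1$, so all the strictness there comes from this lemma.) The same fact also legitimizes your envelope-theorem step, since the objective is not differentiable in $d$ at parameter values where some cut-off equals $d$. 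For $c<1$ your argument does go through and is essentially the paper's; to cover the proposition as stated you need to add the no-cut-off-at-the-kink argument.
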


Proposition \ref{prop8.1} is silent on which of the distortions is relatively larger. If we focus on marginal changes to a baseline plan with full insurance above the deductible, then the over-reaction to the coverage rate will be relatively larger than that to the deductible if the decision maker believes that high losses are sufficiently likely. In this case the highest cut-off defining the $\siminf$ will be close to $d$ (but always strictly below $d$, by  Lemma \ref{lem8.1}), and so the response to a deductible change will be close to that of a fully rational DM. However caution will still cause the DM to drastically over-react to reductions in coverage rate. Similarly, when $N$ is sufficiently high the highest cut-off defining the $\siminf$ will be close to $d$, so again the over-reaction to the coverage rate will be the larger of the two when $N$ is high enough.

Restricting attention to contracts with near full insurance above the deductible, it is easy to show a stronger comparative statics result.

\begin{proposition}\label{prop8.2}
Consider a contract with deductible $d$ and $c=1$. For both the deductible and the coverage rate the magnitude of the response of a cautious agent to marginal changes is decreasing in $N$.
\end{proposition}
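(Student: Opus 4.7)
The plan is to compute the marginal effects of $d$ and $c$ at $c=1$ via the envelope theorem, reduce both magnitudes to an expression proportional to $1-P(\tau(N))$ where $\tau(N):=t^{*}_{N-1}$ is the largest cut-off defining $\siminf_{N,f}$, and then show that $\tau(N)$ is non-decreasing in $N$.

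By Lemma \ref{lem8.1}, at $c=1$ the optimal $\siminf$ partition satisfies $\tau(N)<d$, all other interior cut-offs lie in $(0,d)$, and the last interval $[\tau(N),\bar{\omega}]$ contributes $\siminf$-value $u\bigl(w-p-d-(1-c)(\bar{\omega}-d)\bigr)$ (which reduces to $u(w-p-d)$ at $c=1$). This gives
\[
V^{N}(d,c)=\max_{0<t_{1}<\cdots<t_{N-1}<d}\ \sum_{i=1}^{N-1}[P(t_{i})-P(t_{i-1})]\,u(w-p-t_{i})+[1-P(t_{N-1})]\,u\bigl(w-p-d-(1-c)(\bar{\omega}-d)\bigr).
\]
Since the optimum is interior (and all inner cut-offs are independent of $c$ there), the envelope theorem yields
\[
\frac{\partial V^{N}}{\partial d}\bigg|_{c=1}=-[1-P(\tau(N))]\,u'(w-p-d),\qquad \frac{\partial V^{N}}{\partial c}\bigg|_{c=1}=[1-P(\tau(N))]\,u'(w-p-d)(\bar{\omega}-d),
\]
so both magnitudes are proportional to $1-P(\tau(N))$, and the proposition reduces to showing that $\tau(N)$ is non-decreasing in $N$.

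To establish this I use monotone comparative statics. Write $W(M,t)$ for the value of the optimal $M$-interval $\siminf$-style quantization of $[0,t]$ applied to the strictly decreasing function $u(w-p-\cdot)$, and let $\sigma(M,t)$ be its last interior cut-off. Then $\tau(N)=\argmax_{t}F_{N-1}(t)$ where $F_{K}(t):=W(K,t)+[1-P(t)]u(w-p-d)$. Since $F_{N}-F_{N-1}=\Delta_{N}(t):=W(N,t)-W(N-1,t)$, Topkis's theorem delivers $\tau(N+1)>\tau(N)$ provided $\Delta_{N}$ is strictly increasing in $t$.

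I prove $\Delta_{N}$ strictly increasing by induction on $N$. Envelope applied to $W(M,\cdot)$ gives $\Delta_{N}'(t)=u'(w-p-t)[P(\sigma(N,t))-P(\sigma(N-1,t))]$, so strict monotonicity of $\Delta_{N}$ is equivalent to $\sigma(N,t)>\sigma(N-1,t)$. Using the Bellman decomposition $W(M,t)=\max_{s}\{W(M-1,s)+[P(t)-P(s)]\,u(w-p-t)\}$, the optimizers $\sigma(M+1,t)$ and $\sigma(M,t)$ solve problems whose objectives differ only by the term $\Delta_{M}(s)$; by the inductive hypothesis this is strictly increasing in $s$, so Topkis yields $\sigma(M+1,t)>\sigma(M,t)$. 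The base case $\sigma(2,t)>\sigma(1,t)=0$ follows from the interior FOC for the single-cut-off subproblem. The main obstacle is bookkeeping this two-layer comparative-statics induction: the inner induction on $\sigma(\cdot,t)$ supplies the single-crossing property of $W(M,t)$ in $(M,t)$ that powers the outer application to $\tau$.
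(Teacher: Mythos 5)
Your proof is correct, and its first half --- the envelope-theorem computation showing that both marginal responses at $c=1$ are proportional to $1-P(\tau(N))$, where $\tau(N)$ is the largest cut-off (which lies strictly below $d$ by Lemma \ref{lem8.1}, with no cut-offs above $d$ since wealth is constant there) --- is exactly the paper's argument. Where you genuinely diverge is in establishing that $\tau(N)$ is non-decreasing in $N$. The paper gets this from Corollary \ref{cor8.1}, i.e.\ from the ``sandwiched cut-offs'' machinery for submodular cell functions developed in Appendix \ref{app:compstat} (Lemma \ref{lemA.14}, Corollary \ref{corA.12}, and Theorem 3 of \cite{tian2015optimal}), so its proof of the proposition proper is two lines. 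You instead re-derive the relevant piece of that result from scratch via the Bellman decomposition $W(M,t)=\max_{s}\{W(M-1,s)+[P(t)-P(s)]\,u(w-p-t)\}$ and a two-layer monotone-comparative-statics induction: $\Delta_{M}$ increasing implies $\sigma(M+1,\cdot)\geq\sigma(M,\cdot)$, which by the envelope formula implies $\Delta_{M+1}$ increasing, and the outer application to $F_{N}-F_{N-1}=\Delta_{N}$ then orders $\tau(N)$. This is a legitimate and fairly elegant self-contained alternative for the one-dimensional monotone case; what it buys is independence from the regularity condition behind the sandwiching theorem and from the appendix apparatus generally, at the cost of length. Two small cautions: the Topkis steps deliver only weak orderings of maximizers (a strictly increasing difference guarantees that every maximizer of the perturbed problem weakly exceeds every maximizer of the original, not strictly), so your ``$>$'' claims should be ``$\geq$'' --- which is all the proposition requires --- and the identity $\Delta_{N}'(t)=u'(w-p-t)\left[P(\sigma(N,t))-P(\sigma(N-1,t))\right]$ should be read almost everywhere, since $W(M,\cdot)$ need not be differentiable where the inner maximizer is non-unique.
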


Many of the settings in which biases towards low deductibles are observed involve full coverage above the deductible. In this case cautious individuals are always biased towards low deductible plans, and this bias is decreasing in their capacity. 

\begin{proposition} \label{prop8.6}
With $c=1$, the amount a cautious individual is willing to pay to lower the deductible by a given amount is decreasing in their capacity. 
\end{proposition}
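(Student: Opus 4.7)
My approach is to combine the envelope-theorem expression for the marginal WTP with Proposition~\ref{prop8.2} (which provides pointwise monotonicity of the marginal response in $N$), and then lift this to non-marginal WTP through a comparison argument on indifference curves.

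First, because $c=1$ makes the utility image $\omega\mapsto u(W-p-\min(\omega,d))$ decreasing on $[0,d]$ and constant on $[d,\bar{\omega}]$, the optimal $\siminf_{N}$ partition consists of cut-offs $0=t_0<t_1^*<\cdots<t_{N-1}^*<d$ together with a single pooled bin $[t_{N-1}^*,\bar{\omega}]$ taking value $u(W-p-d)$, as in the analysis preceding Proposition~\ref{prop8.2}. Writing $U_N(p,d)=\sum_{i=1}^{N-1}(P(t_i^*)-P(t_{i-1}^*))u(W-p-t_i^*)+(1-P(t_{N-1}^*))u(W-p-d)$ and applying the envelope theorem yields
\begin{equation*}
MWTP_N(p,d)\;=\;\frac{-\partial U_N/\partial d}{-\partial U_N/\partial p}\;=\;\frac{(1-P(t_{N-1}^*))\,u'(W-p-d)}{(1-P(t_{N-1}^*))u'(W-p-d)+\sum_{i=1}^{N-1}(P(t_i^*)-P(t_{i-1}^*))\,u'(W-p-t_i^*)},
\end{equation*}
and Proposition~\ref{prop8.2} gives $MWTP_{N+1}(p,d)\leq MWTP_N(p,d)$ at every $(p,d)$.

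Second, I parameterize the cautious-$N$ indifference curve through $(p,d)$ as $\delta\mapsto p_N(\delta)$ with $p_N(d)=p$. This curve satisfies the ODE $p_N'(\delta)=-MWTP_N(p_N(\delta),\delta)$, and $WTP_N(d\to d')=p_N(d')-p$, so it suffices to show $p_N(d')$ is non-increasing in $N$. For CARA utility this is immediate: the minimization problem $\min_{\tau}\sum\pi_i(\tau)e^{r\min(t_i(\tau),d)}$ defining the optimal partition is independent of $p$, so $MWTP_N(p,\delta)$ reduces to a function of $\delta$ alone, and $WTP_N(d\to d')=\int_{d'}^{d}MWTP_N(\delta)\,d\delta$ inherits monotonicity in $N$ directly from the integrand. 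For general strictly concave $u$, I invoke a standard ODE comparison theorem: since the right-hand side of the indifference-curve ODE is pointwise monotone in $N$ and Lipschitz in $p$ (under smoothness of $u$ and continuity of the density of $P$), the family $\{p_N(\cdot)\}_N$ is ordered with $p_N\geq p_{N+1}$ on $[d',d]$, starting from the common initial value at $\delta=d$.

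The main obstacle is the transition from CARA to general $u$: Proposition~\ref{prop8.2} yields pointwise monotonicity of $MWTP_N$ at each fixed $(p,d)$, but the indifference curves for different capacities pass through different points, and one needs the $N$-ordering to persist along both paths. The Lipschitz-based comparison argument resolves this, provided that $t_i^*(p,d)$, and hence $MWTP_N(p,d)$, depend smoothly on $p$; this follows from an implicit-function argument applied to the system of FOCs characterizing the optimal cut-offs under standard regularity assumptions on $u$ and $P$.
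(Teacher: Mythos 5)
There is a genuine gap at the pivotal step. Your argument hinges on the claim that Proposition \ref{prop8.2} delivers $MWTP_{N+1}(p,d)\leq MWTP_N(p,d)$ pointwise. It does not. Proposition \ref{prop8.2} (via Corollary \ref{cor8.1}, $t^N_N < t^{N+1}_{N+1}$) controls only the \emph{numerator} of your ratio, $\bigl|\partial U_N/\partial d\bigr| = (1-P(t^*_{N-1}))u'(W-p-d)$, which is indeed decreasing in $N$. The denominator $\bigl|\partial U_N/\partial p\bigr| = \sum_{i}(P(t^*_i)-P(t^*_{i-1}))u'(W-p-t^*_i)+(1-P(t^*_{N-1}))u'(W-p-d)$ is the expectation of $u'$ under the \emph{perceived} loss distribution, and this quantity is not monotone in $N$: the optimal partitions for capacities $N$ and $N+1$ are only sandwiched, not nested, and the induced perceived distributions are not FOSD-ordered across $N$ (the paper notes exactly this in Section \ref{sec9}, and one can check directly from the sandwich inequalities $t^{N+1}_i < t^N_i < t^{N+1}_{i+1}$ that the perceived CDFs cross). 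Since $MWTP_N = A_N/(A_N+B_N)$ with $A_N$ decreasing but $B_N$ of ambiguous monotonicity, the pointwise ordering of the money-metric marginal WTP does not follow, and the subsequent ODE comparison argument rests on an unproven premise. (The CARA special case you give is fine, precisely because there the partition problem separates from $p$; it is the general case that breaks.)

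The paper avoids this entirely by working with utility differences rather than the ratio. Its proof (given for Proposition \ref{prop8.5} and asserted to be identical here) decomposes $U^N(d) = V([0,d],C^*(N,[0,d])) + (1-P(d))u(w-p-d)$ and applies Lemma \ref{lemA.13}: for a submodular cell function with sandwiched optimal cut-offs, $W(N+1,S')-W(N,S') \geq W(N+1,S)-W(N,S)$ for $S\subseteq S'$, so the perceived-utility gain from shrinking the strictly decreasing region $[0,d]$ to $[0,d']$ is decreasing in $N$. This is a global increasing-differences argument that handles non-marginal changes in one step and requires no smoothness of the optimal cut-offs in $(p,d)$. If you want to salvage your route, you would need either to (i) restate the result in utility terms and replace your Step 2 with the Lemma \ref{lemA.13} comparison, or (ii) supply an additional argument bounding the $N$-dependence of $E_{\tilde{P}_N}[u']$ when converting to a premium-denominated WTP --- a step that, as written, neither your proposal nor Proposition \ref{prop8.2} provides.
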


Finally, I consider how a capacity constraint affects the individual's willingness to pay to decrease the out of pocket maximum. 

\begin{proposition}\label{prop8.5}
The amount a cautious agent would be willing to pay to decrease the out-of-pocket maximum is decreasing in their capacity. 
\end{proposition}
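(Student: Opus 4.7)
The plan is to compute the marginal willingness to pay via the envelope theorem applied to the cautious value function and then use the Lebesgue parametrization of Theorem \ref{thm3.2} to establish monotonicity in $N$. Set $M := w-p-m$ and $\omega^*(m) := d + (m-d)/(1-c)$, the loss threshold at which the out-of-pocket cap first binds. Since wealth $w - p - \ell(\omega; m)$ is non-increasing in $\omega$, the optimal $\siminf$-partition has cut-offs $0 = t_0^* < t_1^* < \dots < t_{N-1}^* < \omega^*(m) \leq t_N^* = \bar\omega$, with the last interval straddling $\omega^*(m)$ and contributing $u(M)$ to the perception. The value is
\begin{equation*}
V(m, N; p) = \sum_{i=1}^{N-1}[P(t_i^*) - P(t_{i-1}^*)]\, u\bigl(w-p-\ell(t_i^*; m)\bigr) + [1 - P(t_{N-1}^*)]\, u(M),
\end{equation*}
and the envelope theorem delivers $\partial V/\partial m = -[1 - P(t_{N-1}^*)]\, u'(M)$ and $\partial V/\partial p = -\mathbb{E}_{\hat P_N}[u'(W)]$, where $\hat P_N$ is the perceived wealth distribution (assigning mass $1-P(t_{N-1}^*)$ to $M$). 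Hence
\begin{equation*}
\mathrm{WTP}(m, N) = \frac{[1 - P(t_{N-1}^*)]\, u'(M)}{\mathbb{E}_{\hat P_N}[u'(W)]}.
\end{equation*}

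For monotonicity in $N$ I would switch to the Lebesgue parametrization of $\siminf$, using wealth-range cut-offs $M = z_0 < z_1 < \dots < z_N = w-p$. Writing $Q$ for the wealth distribution and $q$ for its continuous density on $(M, w-p]$, the interior first-order conditions read
\begin{equation*}
q(z_j^*)\bigl[u(z_j^*) - u(z_{j-1}^*)\bigr] = Q\bigl([z_j^*, z_{j+1}^*)\bigr)\, u'(z_j^*), \qquad j = 1, \dots, N-1,
\end{equation*}
a three-term recursion linking consecutive cut-offs. Substituting these into the expectation yields $\mathbb{E}_{\hat P_N}[u'(W)] = u'(M)\, q_N + \sum_{j=1}^{N-1} q(z_j^*)[u(z_j^*) - u(z_{j-1}^*)]$, where $q_N := Q([M, z_1^*))$. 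A monotone-comparative-statics argument on the $z_1$-FOC---intuitively, with more cut-offs in play $z_2^*$ can be brought closer to $z_1^*$, which by the FOC forces $z_1^*(N)$ downward toward $M$---shows $q_N$ is non-increasing in $N$, i.e., the numerator of $\mathrm{WTP}$ falls.

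The main obstacle is that the denominator above is also non-increasing in $N$, so the monotonicity of the ratio is not automatic. I would close the gap by rewriting $\mathrm{WTP}_N = u'(M)/[u'(M) + S_N/q_N]$ with $S_N$ the FOC-substituted sum, and arguing $S_N/q_N$ is non-decreasing in $N$. The key fact, to be verified using the comparative statics of Appendix \ref{app:compstat}, is that any optimality-preserving transfer of mass from the capped state (where the marginal utility is $u'(M)$) to a non-capped cut-off (where the marginal utility is strictly smaller) raises $S_N/q_N$, and this effect survives the simultaneous rearrangement of the other cut-offs dictated by the FOC recursion. Combining the monotonicity of $q_N$ with this inequality yields $\mathrm{WTP}_N$ non-increasing in $N$.
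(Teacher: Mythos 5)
There is a genuine gap. Your reformulation of the willingness to pay as the ratio $[1-P(t_{N-1}^*)]\,u'(M)\,/\,\mathbb{E}_{\hat P_N}[u'(W)]$ creates exactly the difficulty you then leave unresolved: both numerator and denominator are (at best) non-increasing in $N$, and the claim that $S_N/q_N$ is non-decreasing in $N$ --- on which the whole argument hinges --- is explicitly flagged as ``to be verified.'' Nothing in Appendix \ref{app:compstat} delivers that inequality directly: the FOC recursion ties all the cut-offs together, there is no FOSD ordering of the perceived distributions across finite capacities (the paper notes around Section \ref{sec13.1.1} that optimal partitions need not be ordered by refinement as $N$ grows), and the assertion that the relevant effect ``survives the simultaneous rearrangement of the other cut-offs'' is precisely the hard part, not a detail. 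As written, the proposal establishes only that $1-P(t_{N-1}^*)$ falls with $N$ (which is recoverable from the sandwiching results), not the proposition.

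The paper's proof avoids the ratio entirely by measuring the value of a cap reduction as a utility difference. It writes $U^N(m) = V\bigl([0,\und{l}(m)], C^*(N,[0,\und{l}(m)])\bigr) + (1-P(\und{l}(m)))u(w-m)$, where $\und{l}(m)$ is the loss at which the cap first binds. Lowering $m$ to $m-\varepsilon$ shrinks the interval $[0,\und{l}(m)]$ on which the cell function is non-trivial while leaving the cell function unchanged on the smaller interval, so the entire $N$-dependence of $U^N(m-\varepsilon)-U^N(m)$ is $W(N,S_\varepsilon)-W(N,S_0)$ with $S_\varepsilon \subseteq S_0$; Lemma \ref{lemA.13} (marginal returns to capacity are larger on larger intervals, with the required sandwiched selections supplied by regularity, Theorem 3 of \cite{tian2015optimal}, and Berge's theorem) then gives monotonicity in $N$ immediately. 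To rescue your marginal/FOC approach you would need either to prove the ratio inequality you postulate, or to drop the division by $\mathbb{E}_{\hat P_N}[u'(W)]$ and compare utility differences directly, as the paper does.
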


The finding that lower capacity individuals place a greater value on the out of pocket maximum may help explain the widely documented bias towards full insurance, for example by \cite{shapira2008preference}. Full insurance plans are easy to understand, since the loss is independent of the state. In general, constrained individuals will overvalue full insurance plans relative to those for which the realized losses are a more complicated function of the state.

\subsubsection{Dominated choices}

\cite{bhargava2017choose} find that many individuals choose dominated plans, and that the propensity to do so is positively correlated with both high expected losses and earnings levels.\footnote{Interestingly, the authors find that the number of parameters needed to describe a plan, an alternative measure of complexity encountered in the literature, does not predict dominated choices. This suggest that the partitional notion of complexity may be more relevant in this setting.} The cautious model can provide an explanation for these observations.\footnote{ My model does not predict strictly dominated choices, but allows for indifference between pairs of plans ordered by weak dominance, even for full support beliefs. If dominated choices are related to act complexity then the empirical evidence is informative about what types of acts are perceived to be complex. Additional factors, such as difficulty understanding how plans map states to payments, may combine with complexity considerations to produce strictly dominated choices.}  The result depends on the nature of plan dominance. These results also demonstrate the usefulness of the comparative statics properties discussed in the appendix. \cite{bhargava2017choose} observe individuals  choosing low deductible plans even when the increase in the premium relative to a high deductible plan (holding other plan features constant) is greater than the maximum possible savings from the lower deductible. Figure \ref{fig2} illustrates a situation in which the maximum possible savings from the low deductible, high premium plan is equal to the increase in the premium. Call this as a \textit{weakly dominated low-deductible plan}.

\begin{figure}[h]
 \centering
    \begin{tikzpicture}[scale = .7]
        \draw[thick,->] (0,0) -- (0,3.75);
        \draw[thick,->] (0,0) -- (5,0) node[anchor = north west] {$\Omega$};
        \draw[thick] (1,.1) -- (1,-.1) node[anchor = north] {$d$};
        \draw[thick] (1.75,.1) -- (1.75,-.1) node[anchor = north] {$d'$};
        \draw (0,3.5) -- (1.75,1.75);
        \draw (1.75,1.75) -- (5,0.5);
        \draw[very thick, dotted] (0,3.05) -- (1,2.05);
        \draw[very thick, dotted] (1, 2.05) -- (1.75, 1.75);
        \draw[red,thick,dashed] (0,1.65) -- (2,1.65);
        \draw[red,thick,dashed] (2,1.05) -- (3.5,1.05);
        \draw[red,thick,dashed] (3.5,0.5) -- (5,0.5);
    \end{tikzpicture}
    \caption{Indifference with a weakly dominated low-deductible plan}
    \label{fig2}
\end{figure}
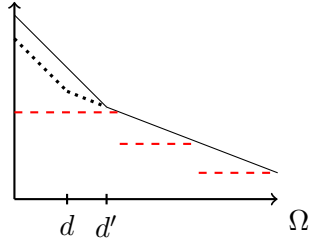

In Figure \ref{fig2} the solid black line is the high deductible, low premium plan, and the black dashed line is a low premium, high deductible plan. Both plans have the same coverage rate. The red dotted line is the $\siminf$, which in this case is the same for both plans. This occurs whenever the lowest cut-off for the high deductible plan is above its deductible. Let $l(\omega|d,c)$ be the amount paid by the consumer when the loss is $\omega$ given a contract with deductible $d$ and coverage rate $c$ (this is stated formally in Lemma \ref{lem8.4} in the Online Appendix). More interesting than the fact that dominated plans can be chosen is are the conditions which are conducive to such mistakes. \cite{bhargava2017choose} observe that dominated choices are correlated with both expected losses and earnings levels. Proposition \ref{cor8.2} predicts the former. To the extent that earnings are correlated with the capacity to evaluate acts, Proposition \ref{cor8.3} predicts that low earners will be more prone to mistaken indifference. The following corollary establishes a single-crossing property of dominated choice which implies that individuals who are more pessimistic about their losses are more prone to make dominated choices (in both cases, all other model parameters are held constant).

\begin{proposition}\label{cor8.2}
If an individual with belief $P$ chooses a weakly dominated low-deductible plan then so does one with belief $P'$ if $P' \succeq_{MLR} P$. 
\end{proposition}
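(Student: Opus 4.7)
My plan is to reduce the claim to a monotone comparative statics property for the optimal $\siminf$ partition as a function of the belief. From the discussion preceding the proposition and Figure \ref{fig2}, the cautious agent is (weakly) indifferent between the high-deductible plan and the weakly dominated low-deductible plan precisely when every cut-off $t_i^*$ of the agent's optimal $\siminf$ partition for the high-deductible plan is at least $d$: in that regime the two contracts induce identical $\siminf$ ex-post wealth at each cut-off state, because the premium difference $c(d-d')$ exactly offsets the coverage gap on $[d,\bar{\omega}]$, so both plans yield identical perceived values. Since the cut-offs are ordered, this condition is equivalent to $t_1^*(P)\geq d$. Thus it suffices to show that $t_1^*(P)\geq d$ implies $t_1^*(P')\geq d$ whenever $P' \succeq_{MLR} P$.

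For this I apply a Topkis / Athey monotone comparative statics argument to the cautious objective $V(\mathbf{t};P)=\sum_{i=1}^N [P(t_i)-P(t_{i-1})]\,u(w-l(t_i\mid d,c)-p)$, which I rewrite as $V(\mathbf{t};P)=\int_\Omega \phi(\omega;\mathbf{t})\,dP(\omega)$ with $\phi(\omega;\mathbf{t})=u(w-l(t_{I(\omega;\mathbf{t})}\mid d,c)-p)$ the utility at the right endpoint of the partition cell containing $\omega$. Two properties do the work. First, $V$ is supermodular in $\mathbf{t}$: the mixed partial $\partial^2 V/\partial t_i\partial t_j$ vanishes for $|i-j|>1$, while $\partial^2 V/\partial t_i\partial t_{i+1}=-p(t_i)\,u_{i+1}'(t_{i+1})>0$ since $t \mapsto u(w-l(t\mid d,c)-p)$ is decreasing. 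Second, $\phi$ satisfies Athey's single-crossing condition in $(\omega,t_i)$ for each $i$: holding the remaining cut-offs fixed and raising $t_i$ to $t_i'$ strictly lowers $\phi(\omega;\cdot)$ for $\omega\leq t_i$ (the right endpoint of the containing cell moves from $t_i$ to $t_i'$, degrading utility), strictly raises it for $\omega\in(t_i,t_i']$ (these states are absorbed into the cell ending at $t_i'$ rather than $t_{i+1}$, improving utility), and leaves it unchanged for $\omega>t_i'$. Hence $\phi(\omega;t_i')-\phi(\omega;t_i)$ is single-crossing from below in $\omega$.

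Athey's theorem then implies that $V(\mathbf{t};P)$ has increasing differences in $(\mathbf{t},P)$ under the MLR order on $P$, and combined with the supermodularity in $\mathbf{t}$, Topkis's theorem yields that the set of optimizers $\mathbf{t}^*(P)$ is monotone non-decreasing in the strong set order as $P$ rises in MLR. In particular the smallest optimal value of the first cut-off is weakly increasing, so $t_1^*(P')\geq t_1^*(P)\geq d$, and the $P'$-agent is likewise indifferent between the two plans, thus (weakly) choosing the dominated low-deductible plan whenever the $P$-agent does. The most delicate step is verifying Athey's single-crossing for $\phi$; once its piecewise-constant structure in $\omega$ is written out carefully, the sign pattern above is immediate and the rest is standard lattice-theoretic monotone comparative statics.
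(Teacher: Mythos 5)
Your overall architecture matches the paper's: reduce the claim to the statement that the lowest cut-off of the high-deductible plan's $\siminf$ is weakly increasing in the belief under MLR shifts (this is the content of Lemma \ref{lem8.4}), and then obtain that monotonicity from lattice-theoretic comparative statics on the supermodular coarse value. The reduction step and the supermodularity computation are fine. The gap is in the comparative-statics step. You assert that Athey's theorem converts the single-crossing property of $\phi(\cdot;t_i')-\phi(\cdot;t_i)$ into \emph{increasing differences} of $V$ in $(\mathbf{t},P)$. It does not: single crossing of the integrand only yields sign preservation of $\int\bigl(\phi(\omega;\mathbf{t}'')-\phi(\omega;\mathbf{t}')\bigr)dP(\omega)$ under MLR improvements, i.e.\ the ordinal single-crossing-differences condition, not the cardinal inequality Topkis requires. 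One could try to fall back on Milgrom--Shannon, which only needs single-crossing differences plus quasisupermodularity, but that condition must hold for \emph{all} ordered pairs $\mathbf{t}''\geq\mathbf{t}'$ in the lattice, and your verification covers only single-coordinate increases. For a multi-coordinate increase the difference $\phi(\omega;\mathbf{t}'')-\phi(\omega;\mathbf{t}')$ has the sign pattern $-,+,-,+,\dots,0$ in $\omega$ (each cut-off that moves contributes its own down-then-up segment), so it is not single crossing and the Milgrom--Shannon hypothesis fails.

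This is precisely the failure that the interval dominance order is designed to repair, and it is the route the paper takes: Proposition \ref{prop5.1} (via Tian 2016, Proposition 3) shows that under an MLR shift the coarse value satisfies $V(\Omega,\cdot\,|P')\succeq_I V(\Omega,\cdot\,|P)$, and Quah--Strulovici's Theorem 1, together with supermodularity of $V$ in the cut-offs (Lemma \ref{lem1} and Proposition \ref{prop202}), then delivers $C^*(N,\Omega|P')\geq_{SSO}C^*(N,\Omega|P)$, which is Corollary \ref{cor5.1}; the proposition follows as in your final step. To close your proof you would need either to invoke that IDO machinery or to give a separate argument that the relevant selection of the lowest cut-off is monotone, rather than appealing to Topkis with an increasing-differences property that has not been (and cannot be) established. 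A minor further caveat: the deductible threshold in the indifference condition should be the \emph{high} deductible $d'$, not $d$, and the ``precisely when'' equivalence requires the single-valuedness caveat noted in the footnote to Lemma \ref{lem8.4}.
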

\begin{proposition}\label{cor8.3}
If an individual with capacity $N'$ chooses a weakly dominated low-deductible plan then so is one with capacity $N$ if $N' \geq N$. 
\end{proposition}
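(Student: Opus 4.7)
The plan is to reduce the claim to a monotonicity property of optimal $\siminf$ cut-offs in $N$ and then apply it.

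First I would establish a structural characterization of when a cautious agent chooses the weakly dominated low-deductible plan. Under $p - p' = c(d'-d)$, direct algebra shows that ex-post wealth under the two plans coincides on $[d', \bar\omega]$ and is strictly greater under the high plan on $[0, d')$; this is the picture in Figure~\ref{fig2}. Since the wealth functions are decreasing in $\omega$, the $\siminf$ is determined by an interval partition $0 = t_0 < t_1 < \cdots < t_N = \bar\omega$ with cautious value $\sum_i [P(t_i) - P(t_{i-1})]\, u(w(t_i))$. Comparing cut-off by cut-off, $V_N^{\text{low}} \leq V_N^{\text{high}}$ always, and taking the maximum over partitions shows that equality holds if and only if some $V_N^{\text{high}}$-optimal partition lies entirely in the coincidence region $[d', \bar\omega]$.

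Second I would prove the central monotonicity: the smallest non-trivial cut-off $t_1^*(N)$ of the optimal high-plan $\siminf$ is non-increasing in $N$. Intuitively, additional capacity lets the agent refine existing cells by introducing cut-offs at lower loss levels; conversely, shrinking $N$ forces the agent to concentrate their scarce cut-offs in the high-loss region where each cut-off does the most work toward improving the lower bound. Formally, this is a monotone comparative statics statement that follows from the Lebesgue-representation techniques of Appendix~\ref{app:compstat}, where the $\siminf$ selection reduces to a one-parameter choice per cell in $w(\Omega)$ and the resulting objective has a supermodular structure in $N$ viewed as an additional degree of freedom.

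Third, combining the two steps: if the $N'$-capacity agent chooses the weakly dominated plan then by the first step $t_1^*(N') \geq d'$, and by the monotonicity in the second step $t_1^*(N) \geq t_1^*(N') \geq d'$ for any $N \leq N'$. Applying the first step again at capacity $N$ shows that the $N$-capacity agent is also indifferent and chooses the weakly dominated plan.

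The main obstacle will be the monotonicity in the second step: one must rule out the possibility that an additional cut-off allows the optimum to rearrange in a way that pushes $t_1^*$ higher rather than lower. A robust alternative route bypasses this issue by extending Proposition~\ref{prop8.6} from $c=1$ to all $c \in (0,1]$, giving a direct statement that the willingness to pay to reduce the deductible from $d'$ to $d$ is non-increasing in $N$; the ``weakly dominated plan chosen'' condition is then equivalent to $\text{WTP}(N) \geq c(d'-d)$, which is automatically preserved as $N$ decreases, yielding the result immediately.
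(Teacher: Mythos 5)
Your reduction is the right one and matches the paper's: ``chooses the weakly dominated low-deductible plan'' is equivalent to indifference, which (since the two wealth profiles coincide on $[d',\bar\omega]$ and the high-deductible profile is strictly higher below $d'$) holds exactly when some optimal $\siminf$ partition for the high-deductible plan has its lowest interior cut-off weakly above $d'$ --- this is Lemma \ref{lem8.4}. Your third step, chaining this with monotonicity of the lowest cut-off in $N$, is also how the paper concludes. The problem is that your second step, the claim that the lowest optimal cut-off $t_1^*(N)$ is non-increasing in capacity, is precisely the substantive content of the proposition, and you do not prove it: you appeal to ``a supermodular structure in $N$ viewed as an additional degree of freedom,'' but $N$ is not a lattice variable of the coarse value $V(\Omega,\cdot)$ --- supermodularity (Proposition \ref{prop202}) is in the cut-off vector for a \emph{fixed} number of cells, and you yourself flag the possibility that the optimum rearranges upward when a cut-off is added as ``the main obstacle.'' The paper closes exactly this gap with Corollary \ref{corA.12}: for a submodular cell function, any $C'\in C^*(N,[a,b])$ admits a weakly sandwiched $C''\in C^*(N+1,[a,b])$, proved by an explicit meet/join rearrangement ($\tilde C=(\{t_i''\}_{i\le N-1}\wedge C')\cup t_N''$, then $\hat C=(\{\tilde t_i\}_{i\ge 2}\vee C')\cup\tilde t_1$, with optimality preserved via supermodularity of $V$ in cut-off vectors). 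That interleaving argument, not a statement about $N$ as a choice variable, is the missing idea; without it or an equivalent (e.g.\ the regularity-based Theorem 3 of \cite{tian2015optimal}), the proof is incomplete.

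Your ``robust alternative route'' does not bypass the difficulty, it relocates it. The equivalence between choosing the weakly dominated plan and $\mathrm{WTP}(N)\ge c(d'-d)$ is fine (indeed $\mathrm{WTP}(N)\le c(d'-d)$ always, by monotonicity of $\siminf$ in the act, so the condition is $\mathrm{WTP}(N)=c(d'-d)$). But monotonicity of $\mathrm{WTP}$ in $N$ for general $c\in(0,1]$ is an increasing-differences statement across capacities whose proof in the paper (Propositions \ref{prop8.2} and \ref{prop8.6}) leans on the $c=1$ structure (ex-post wealth constant above $d$, so the top cut-off sits below $d$ and Corollary \ref{cor8.1} applies) and ultimately on the same sandwiching machinery (Lemma \ref{lemA.14}, Corollary \ref{corA.12}, Lemma \ref{lemA.13}). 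Extending it to $c<1$ is not a routine modification and would itself require the cut-off comparative statics you have left unproved.
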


\subsubsection{Discussion}

This section relates to the large empirical literature documenting behavioral phenomena in choices of complicated contracts. The model is able to explain many observed choice patterns which differ from the predictions of standard EU theory. \cite{abaluck2011choice} find that consumers underweight out-of-pocket spending relative to premiums. Moreover, cautious agents will respond more than the fully rational to changes in the coverage rate when there is an out-of-pocket expenditure cap and the coverage rate is high (see Proposition \ref{prop8.1}). \cite{cutler2004extending} document a bias towards low deductible plans. This is consistent with a cautious decision maker (see Proposition \ref{prop8.1}). The results of this section highlight the dependence of the qualitative nature of comparisons to the rational model on details of the insurance plans under consideration. Such variation forms an interesting basis for further empirical work, and provides a more nuanced perspective on ``behavioral biases''.\footnote{In Online Appendix \ref{sec:MEUinsurance} I discuss how the predictions of the Cautious model in this setting relate to those of MEU.}

\subsection{Contracting under moral hazard}\label{sec:principal_agent} 

An agent exerts effort $a$ which induces a distribution of output $P_a$, with support on a bounded set $\Omega$. The principal observes output but not effort, and so offers the agent a wage schedule $w : \Omega \rightarrow \mathbb{R}$. The agent's ex-post payoff is given by $u(w,a)$. The agent's outside option payoff is normalized to $0$.

The principal chooses $w$ to maximize $\e_{P_{a^*(w)}}[v(\omega, w(\omega))]$, where $a^*(w)$ is the agent's effort choice given contract $w^*$ (assume the agent chooses the principal's preferred effort level when the agent's problem has multiple solutions). Assume that $w \mapsto u(w,a)$ is continuous and increasing for all $a$, and that $ w \mapsto v(\omega,w)$ is decreasing for all $\omega$.\footnote{The results extend immediately to the case of multi-dimensional payment spaces $W$, provided the monotonicity conditions on $u$ and $v$ hold with respect to the same order on $W$ for all $a$, and that the order does not depend on $a$.}

\subsubsection{Cautious Agents}

Let the agent be Cautious, with capacity $N$. Suppose that the principal offers a contract $w$ and the agent chooses effort $a$. I will refer to $( w,a )$ as the act induced by $w$ and $a$ (where the only uncertainty is on the $w$ coordinate). If the principal offers the agent a contract $\hat{w}$ such that $(\hat{w},a) \in \siminf_{N,( w,a ),P_a}$ then agent is no worse off; the agent can always choose the same effort level and is indifferent between the acts $( w,a )$ and $(\hat{w},a ) \in \siminf_{N,( w,a ), P_a}$ given output distribution $P_a$. In fact the nature of Cautious preferences implies that the agent's optimal effort choice will not change. 

\begin{proposition}\label{prop6.1}
Given a contract $w$ that induces optimal effort choice $a$ by the agent, any contract $\hat{w}$ such that $( \hat{w},a ) \in \siminf_{N,( w,a ),P_a}$ induces the same effort level and has the same value for the agent. 
\end{proposition}

\begin{proof}
Suppose the agent is offered contract $w$ and chooses effort $a$. Assume $\siminf_{N,( w,a ),P_a}$ is unique (the same argument applies to any selection). If the principal instead offers $\Hat{w}$ such that $( \hat{w},a ) = \siminf_{N,w,P_a}$ and effort remains unchanged then the agent's perceived payoff is unchanged, $U(( w,a ),P_a) = U(( \hat{w},a ), P_a)$.

Since $w \mapsto u(w,a)$ is increasing, it must be that $w(\omega) \geq \hat{w}(\omega)$ for all $\omega$. So for any $a'$, $( \hat{w}, a')$ is an $N$-simple act that satisfies $u(w(\omega),a') \geq u(\hat{w}(\omega),a')$ for all $\omega$. In other words, for all $a'$, $( \hat{w}, a')$ is in the set of acts for which $\siminf_{N,( w,a' ), P_{a'}}$ is maximal. Then by the definition of $\siminf$, we have $U(( \hat{w}, a' ), P_{a'}) \leq U(\siminf_{N,( w,a'),P_{a'}}, P_{a'}) = U(( w,a'), P_{a'})$ for all $a'$. Since the agent originally chose $a$ it must be that $U(( w,a') ,P_{a'}) \leq U(( w,a),P_a)$. Combining the inequalities in the two preceding lines, we have $U(( \hat{w}, a' ), P_{a'}) \leq U(( w, a ), P_{a}) = U(( \hat{w},a), P_a)$, where the final equality follows from the definition of $\hat{w}$. This proves that effort $a$ remains optimal.
\end{proof}

Since $w$ statewise dominates $\hat{w}$ and $w \mapsto v(w,\omega)$ is decreasing we have the following immediate corollary of Proposition \ref{prop6.1}.  
\begin{corollary}\label{cor6.1}
If the agent has capacity $N$ and is cautious then all principal-optimal contacts are $N$-simple.
\end{corollary}

\subsubsection{Reckless Agents}

Unsurprisingly, when the agent is reckless the principal is able to exploit the agent by confusing them with a complex contract.

\begin{lemma}\label{lem:reckless_complex}
For any $N$-simple contract $f$, there exists an $N$-complex $g$ such that the Reckless agent is indifferent between $f$ and $g$, and exerts the same effort. 
\end{lemma}

More importantly, we can say more about the types of contracts that can be used to exploit a Reckless agent: the principal will offer ``prizes'', discontinuous jumps in the wage schedule at high-payoff states. The agent sees these prizes and over-reacts, overvaluing the resulting payoffs. 

To illustrate this feature, assume that $\Omega = [0,1]$ and $P_a$ is absolutely continuous and has full support for all $a$. For simplicity, restrict attention to increasing contracts. Say that $f$ is an $\varepsilon$-\textit{bait contract} if it is discontinuous in an $\varepsilon$ neighborhood of $\omega = 1$. A simple contract is one with a finite range. 

\begin{proposition}\label{prop6.3}
For any increasing contract $f$ that is either simple or continuous, and any $\varepsilon > 0$, there exists an $\varepsilon$-bait contract $g$ such that the Reckless agent is indifferent between $f$ and $g$, and exerts the same effort.
\end{proposition} 

This phenomenon is similar in spirit to that studied by \cite{viero2014bait}, in that the principal exploits the limited understanding and ``optimism'' of the agent by offering bait contracts.

\section{Related Literature}\label{sec10}

Other papers have built on the idea that acts with many outcomes may be difficult for a decision maker to evaluate. \cite{neilson1992some} proposes a model of choice under risk in which the decision maker uses a different utility function when computing expectations for lotteries with different support sizes. 

\cite{puri2020simplicity} axiomatizes a ``Simplicity Representation'' of choice under risk, in which a lottery $p$ is evaluated according to $E_p[u(x)] - C(|support(p)|)$, for some increasing function $C$.\footnote{Related ideas appear in the menu choice literature. \cite{ortoleva2013price} axiomatizes a model of preferences over lotteries of menus in which, similar to the model of \cite{puri2020simplicity}, the decision maker attaches a cost to lotteries with more menus in their support.  } While this model also relates support size to complexity, its empirical content is quite different; the Simplicity Representation and Simple bounds are very far from being ``dual'' in the sense one might expect at first glance. Most importantly, the Simplicity Representation makes a sharp separation between the values on which a lottery is supported and its complexity, as measured by support size. As a consequence, the Simplicity Representation predicts potentially extreme preference reversals resulting from small changes; arbitrarily small perturbations of a lottery that increase the size of its support can dramatically change its complexity cost. This is not the case in the Simple Bounds model. Unsurprisingly, the Simplicity Representation also predicts violations of first order stochastic dominance, analogous to violations of Monotonicity in the current setting. 

\cite{gul2014expected} study a model in which there is a $\sigma$-algebra of ``ideal'' events $\mathcal{E}$, which can be thought of as well-understood, and acts are bracketed by their upper and lower bounds among those measurable with respect to $\mathcal{E}$. In contrast, A0 implies that in the Simple Bounds model, all events are well-understood. Complexity here is about aggregation of payoffs across different events, rather than the events themselves. There is no $\sigma$-algebra $\mathcal{E}$ for which the set of $N$-simple acts is the set of $\mathcal{E}$-measurable acts (aside from the trivial case of $N=\infty$). Moreover, \cite{gul2014expected} consider an aggregation of the bounding acts which corresponds to complete preferences. 

\cite{saponara2020reasoning} axiomatizes a Revealed Reasoning model which is close in spirit to the Cautious model. In this model a decision maker is characterized by a set of partitions $\mathcal{P}$, and evaluates an act $f$ according to the best act that is uniformly below $f$ and measurable with respect to some partition $P \in \mathcal{P}$. $\mathcal{P}$ need not be characterized by a fixed number of elements; instead the axioms impose that $\mathcal{P}$ satisfy a richness condition.\footnote{In some cases, this richness condition is in fact incompatible with $\mathcal{P}$ being equal to the set of $N$-element partitions, for some $N$.} Moreover, preferences over the set of acts that are measurable with respect to some $P\in \mathcal{P}$ may not have an expected utility representation. Thus the Revealed Reasoning model differs from Caution in important ways, the latter providing more choice-based structure on the DM's behavior. On a more technical note, \cite{saponara2020reasoning} implicitly begins with the assumption that an act's complexity is determined by its partition. One of the contributions of the current paper is to derive this conclusion from assumptions on choice behavior (primarily in Proposition \ref{prop3.4}). These differences aside, the relationship between the Simple Bounds and Revealed Reasoning models is similar to that between Bewley preferences and MEU.

\cite{ahn2010framing} also study preferences in which partitions play a central role. In their partition-dependent expected utility (PDEU) representation the decision maker uses a different belief to evaluate acts depending on the partition used to describe the state space. This can lead to preference reversals between acts $f$ and $g$ when different partitions (with respect to which $f$ and $g$ are measurable) are used to describe the state space. 

There are formal similarities between this paper and \cite*{gilboa2010objective} and \cite{lehrer2014extension}. As in \cite{lehrer2014extension}, I begin by characterizing preferences on a small set of acts and extend these to a larger subset, although not necessarily to all acts. As in GMMS, given a characterization of incomplete preferences, I show that under additional assumptions a complete preferences relation consistent with it exists, and takes a specific form. In GMMS the incomplete and complete preference relations admit representations \`{a} la \cite{bewley2002knightian} and \`{a} la \cite{gilboa1989maxmin} respectively. The incomplete preferences arising from the partition size notion of complexity do not admit a Bewley representation. Similarly the Cautious completion is not an MEU preference, although it still captures a notion of ambiguity aversion. 

One objective of the current paper is to explore the connection between Ellsberg type phenomenon (\cite{ellsberg1961risk}) arising in the presence of ambiguity and the complexity of decision making problems under uncertainty. As in \cite{segal1987ellsberg} and \cite{klibanoff2005smooth}, bets on ambiguous urns are viewed as a two stage act, where the first stage is subject to uncertainty. That ambiguity may arise from complexity in such an environment is not a new idea. \cite{gilboa1989maxmin} states ``One conceivable explanation of this phenomenon [Ellsberg-type preferences] which we adopt here is as follows: \dots the subject has too little information to form a prior. Hence (s)he considers a set of priors as possible.'' My paper is novel however in that it derives ambiguity averse preferences by explicitly characterizing subjective complexity. \cite{bewley2002knightian} and \cite{gilboa1989maxmin} relax completeness and independence respectively. I do both, but in a way that is driven by explicit assumptions about perceived complexity. 

My applications relate to a number of papers studying bounded rationality and ambiguity aversion. There is a large literature on the simplicity of contracts observed in reality. \cite{mukerji1998ambiguity} uses ambiguity aversion to explain contract incompleteness. \cite{anderlini1994incomplete} use a similar notion of contract complexity to my partition size definition. However they essentially impose that contracts must be simple in this sense, where as I show that such contracts are optimal from the principal's perspective when facing a certain type of agent.

\appendix
\section{Appendix: Omitted Proofs}\label{app:omittedproofs}

\subsection{Proposition \ref{prop3.4}} 

\begin{proof} 
Suppose an act $f$, with partition $\{T_i \}_{i=1}^N$, has a certainty equivalent. I wish to show that any other act $g$ with the same partition also has a certainty equivalent. Throughout the proof, label the partition so that $g(T_{i+1}) \succ g(T_{i})$ for all $i$. For notational simplicity, I will identify each act $f$ with it's utility image $v\circ f$. Finally, assume that there exists a constant act $\bar{c}, \und{c}$ such that $\bar{c} \succ g(\omega) \succ \und{c}$ for all $\omega$ (in the end we will establish existence of a certainty equivalent when such a $\bar{c},\und{c}$ do not exist). The proof will proceed by induction on $N$. The induction hypothesis for each $K < N$ is that all acts measurable with respect to a $K$-element coarsening of $\{T_i \}_{i=1}^N$ have certainty equivalents. Note that, by the Basic Conditions, preferences on $F_c$ have an expected utility representation.

Before proceeding to the induction proof, I show the following claim.

\noindent\textit{Claim 1}: if $\lambda g + (1-\lambda)c_1 \sim c_2$ for $c_1,c_2 \in F_c$ with $\bar{c} \succsim c_1 \succsim \und{c}$ and $\lambda \in (0,1)$, then $g$ has a certainty equivalent. The proof of Claim 1 is as follows. By Weak C-Independence, it suffices to show that there exists $c_3\in F_c$ such that $\lambda c_3 + (1-\lambda)c_1 \sim c_2$. If $c_1 \sim c_2$ then we are done. Suppose $c_1 \succ c_2$. By Monotonicity and $c_1 \succsim \und{c}$, $c_2 \succ \und{c}$. Then there exists $\alpha \in (0,1)$ such that $\alpha \und{c} + (1-\alpha)c_1 \sim c_2$. Let $c_3 = \frac{\alpha}{\lambda}\und{c} + \frac{\lambda - \alpha}{\lambda}c_1$. Since $g(\omega) \succ \und{c}$, and given the expected utility representation on $F_c$, Monotonicity implies that $\lambda > \alpha$, so $c_3$ is well defined. If $c_2 \succ c_1$ replace $\und{c}$ with $\bar{c}$. 

Now for the induction proof. I first show that all binary acts have certainty equivalents. For any event $E$, let $f$ be a bet on $E$ that has a certainty equivalent (which exists by A0), and $g$ be another arbitrary bet on $E$. Let $E^c = \Omega \setminus E$. There are a few cases to consider. Suppose $f(E) \succ g(E) \succ g(E^c) \succ f(E^c)$. Then, using the expected utility representation on $F_c$, $ \exists \ \lambda \in (0,1)$  and $\hat{c} \in F_c$ such that $ \lambda f(E^c) + (1-\lambda)\hat{c} \sim g(E^c)$ and $\lambda f(E) + (1-\lambda)\hat{c} \sim g(E)$ ($\lambda = (u(g(E)) - u(g(E^c)))/(u(f(E)) - u(f(E^c)))$). By Weak C-Independence $\lambda f + (1-\lambda) \hat{c}$ is well-understood, and since this act is payoff equivalent to $g$, $g$ is as well. Suppose instead that $g(E) \succ f(E) \succ f(E^c) \succ g(E^c)$. Then, as before, there exist $\lambda,\hat{c}$ such that $\lambda g(E) + (1-\lambda)\hat{c} \sim f(E)$ and $\lambda g(E^c) + (1-\lambda) \hat{c} \sim f(E^c)$, so $\lambda g + (1-\lambda)\hat{c}$ has a certainty equivalent. Then $g$ has a certainty equivalent as well, by Weak C-Independence.

Suppose $g(E) \succsim f(E) \succsim g(E^c) \succsim f(E^c)$. Then $\exists \ \lambda_1, \lambda_2 \in  [0,1)$ such that $\lambda_1 \bar{c} + (1-\lambda_1) f(E) \sim g(E)$, and $\lambda_2 \bar{c} + (1-\lambda_2) f(E^c) \sim g(E^c)$. Let $\lambda = \min\{\lambda_1, \lambda_2 \}$, and suppose WLOG that this is equal to $\lambda_1$. By Weak C-Independence, $f' := \lambda \bar{c} + (1-\lambda)f$ has a certainty equivalent. Note $f'(E) \sim g(E)$. By continuity $\exists \ \alpha \in [0,1]$ such that $\alpha f'(E) + (1-\alpha) f'(E^c) \sim g(E^c)$. Then $\alpha f'(E) + (1-\alpha) f'$ is payoff equivalent to $g$ and has a certainty equivalent by Weak C-Independence, so $g$ has a certainty equivalent. The remaining cases are analogous. Since by A0 every event has a well understood bet, we can conclude that any binary act has a certainty equivalent. This is the first step in the induction. 

Now suppose that $f$'s partition has $N > 2$ elements, and $f$ has a certainty equivalent. Then by A1 and the induction hypothesis, all acts $g'$ measurable with respect to coarser partitions than $f$, and satisfying $\bar{c} \succ g'(\omega) \succ \und{c}$ for all $\omega$, have certainty equivalents. I now show that it is without loss to assume that $f$ is increasing with respect to the same order as $g$, i.e. $f(T_{i+1}) \succ f(T_i)$ for all $i$. To see this, it suffices to show that if $f(T_a) \succ f(T_b) \succ f(T_c)$ then there is an act $f'$ with the same partition that has a certainty equivalent and such that $f'(T_a) \succ f'(T_c) \succ f'(T_b)$ (a symmetric argument shows that there is $f''$ such that $f''(T_{b}) \succ f''(T_c) \succ f''(T_a)$). Define $h$ by $h(T_a) = h(T_c) = f(T_a)$ and $h = f$ on $\Omega\setminus(T_a\cup T_c)$. Then $h$ has a certainty equivalent by the induction hypothesis. Moreover, by continuity there exists $\alpha \in (0,1)$ such that $\alpha h(T_c) + (1-\alpha) f(T_c) \succ f(T_b)$. Moreover, by A2, $f' = \alpha h + (1-\alpha)f$ has a certainty equivalent, as desired. So from now on, assume WLOG that $f$ and $g$ are comonotone. 

By Claim 1, it is without loss to consider $g$ such that $f(T_N) \succ g(T_N)$ and $g(T_1) \succ f(T_1)$. Then there exists $\lambda \in (0,1)$ such that $\lambda f(T_N) + (1- \lambda)f(T_1) \sim g(T_N)$. Define $h_N$ by $h_N = f(T_1)$ on $T_N$ and $h_N = f$ elsewhere. Then $h_N$ has a certainty equivalent by the induction hypothesis. Let $f'_N = \lambda f + (1- \lambda)h_N$. Then $f'_N$ has a certainty equivalent by A2. Moreover, by Monotonicity, $f_N$ defined as $f_N = g(T_N)$ on $T_N$ and $f_N = f'_N$ elsewhere also has a certainty equivalent. Then there also exists $c_{N-1} \in \{f_N(T_N), f_N(T_1) \}$ and $\lambda' \in (0,1)$ such that $\lambda' f_N(T_{N-1}) + (1- \lambda')c_{N-1} \sim g(T_{N-1})$. Define $h_{N-1}$ by $h_{N-1} = c_{N-1}$ on $T_{N-1}$ and $h_{N-1} = f_N$ elsewhere. Then by the induction hypothesis $h_{N-1}$ has a certainty equivalent, and by A2 $f'_{N-1} = \lambda' f_N + (1 -\lambda')h_{N-1}$ also has a certainty equivalent. As before, define $f_{N-1}$ by $f_{N-1} = g(T_{N-1})$ on $T_{N-1}$ and $f_{N-1} = f'_{N-1}$ elsewhere. Then $f_{N-1}$ also has a certainty equivalent. Proceeding in this way, we arrive at an act $f_1 =g$ which has a certainty equivalent, as desired.  

I now need to address the assumption that there exist constant acts $\und{c}, \bar{c}$ such that $\bar{c} \succ g(\omega) \succ \und{c}$ for all $\omega$. Suppose that we have established existence of a certainty equivalent for all acts $f$, measurable with respect to a given partition, that satisfy this interiority assumption. Suppose $g$ is such that $g(T_N) \succsim c \succsim g(T_1) \ \forall \ c \in F_c$ (recall the ordering of $T_i$). Let $g' = \frac{1}{2}g + \frac{1}{2}\left(\frac{1}{2}g(T_N) + \frac{1}{2}g(T_1)\right)$. Then $g'$ satisfies the interiority assumption, and so has a certainty equivalent by hypothesis, which can be written as $\lambda g(T_1) + (1-\lambda) g(T_N)$, where $\lambda \in \left[\frac{1}{4}, \frac{3}{4} \right]$ by Monotonicity. The existence of a certainty equivalent for $g$ will follow from Weak C-Independence if we can show that there exists a $\kappa \in [0,1]$ such that $\frac{1}{2}(\kappa g(T_N) + (1-\kappa)g(T_1)) + \frac{1}{2}\left(\frac{1}{2}g(T_N) + \frac{1}{2}g(T_1)\right) \sim \lambda g(T_1) + (1-\lambda) g(T_N)$. Equating coefficients, this holds for $\kappa = 2 \lambda - \frac{1}{2}$, which is well defined since $\lambda \in \left[\frac{1}{4}, \frac{3}{4} \right]$.
\end{proof}

\subsection{Theorem \ref{thm:simplicity_char}}
\noindent Theorem \ref{thm:simplicity_char} is immediate from \Cref{lem3.1}, A3, A4, and Proposition \ref{prop3.4}.

\begin{proposition}\label{lem3.1}
For any two $N$-element partitions $\tau = \{T_j\}_{j=1}^N, \tau' = \{T'_j\}_{j=1}^N$ of $\Omega$, there is a finite sequence of $N$-element partitions $\{\tau^i\}_{i=1}^K$, starting with $\tau$ and ending with $\tau'$, such that between $\tau^i$ and $\tau^{i+1}$, $N-2$ of the partition cells remain unchanged.
\end{proposition}
\begin{proof}
Let $\tau^i = \{T_j^i\}_{j=1}^N$ be the last partition in the sequence constructed thus far (we begin with $\tau^1 = \tau$). Assume $\tau^i \neq \tau'$, otherwise we are done. The there exists $T' \in \tau'$ such that $T' \cap T^i_1$ and $T' \cap T_2^i$ are both non-empty for some $T_1^i,T_2^i \in \tau^i$. There are two cases to consider. In \textit{Case 1} we can choose $T',T_1^i, T_2^i$ such that $T^i_1 \cup T^i_2 \neq T'$; in \textit{Case 2} we cannot.  

Consider first Case 1. To generate the next partition in the sequence, fix all elements of $\tau^i$ other than $T_1^i, T_2^i$. Choose some sets $r,r' \in \tau^i \vee \tau'$ with $r \subseteq T_1^i, r' \subseteq T_2^i$ such that $r \cup r' \subseteq T'$.\footnote{$\tau \vee \tau'$ is the coarsest common refinement.} Such sets exist since $T' \cap T_1^i$ and $T' \cap T_2^i$ are both non-empty by assumption. If $r \neq T_1^i$ then define $\tau^{i+1}$ to be the modification of $\tau^i$ in which $r$ is merged with $T_2^i$, and all other elements are the same; to be precise, $\tau^{i+1} := \{T_1^i\setminus r, T_2^i\cup r, T_3^i, \dots, T_N^i \}$. If $r = T_1^i$ then define $\tau^{i+1} := \{r\cup r', (T_1^i \cup T_2^i)\setminus(r \cup r' ), T_3^i, \dots, T_N^i \}$. 

Consider now Case 2. We have $T_1^i \cup T_2^i = T'$. Then there exist $T^i_3 \in \tau^i$ and $T_3' \in \tau'$ such that $T_3' \subset T_3^i$. Then define $\tau^{i+1} := \{T_1^i, T_2^i \cup T_3', T_3^i\setminus T_3', T_4^i,\dots,T_N^i\}$ and $\tau^{i+2} := \{T_1^i \cup T_2^i, T_3',T^i_3\setminus T_3', T^i_4,\dots, T^i_N \} = \{T', T_3', T^i_3\setminus T_3', T^i_4,\dots, T^i_N \}$.

As long as $\tau^i \neq \tau'$ the algorithm above delivers $\tau^{i+1} \neq \tau^i$ in Case 1, or $\tau^{i+2} \neq \tau^i$ in Case 2. Moreover, $\tau^k$ is a coarsening of $\tau \vee \tau'$ for all $k$. Since there are at most $N^2$ elements of $\tau \vee \tau'$, the set of possible coarsenings is finite. Thus the algorithm eventually delivers $\tau'$ as long as there are no cycles. But cycles cannot occur. To see this, first notice that if $T \in \tau^i$ and $T \in \tau'$ then $T \in \tau^{i+k}$ for all $k \geq 1$. Therefore the algorithm can only arrive at Case 2 finitely many times along any sequence, since each time it does so it delivers $\tau^{i+2}$ which has an additional 2 cells in common with $\tau'$. Between the steps at which Case 2 is reached there can be no cycles; once $r$ and $r'$ are merged into the same cell in a Case 1 step they are never divided.  
\end{proof}

\subsection{Proof of Theorem \ref{thm3.1} and \texorpdfstring{\Cref{thm:general_rep}}{}}

I prove \Cref{thm:general_rep}, \Cref{thm3.1} is an immediate corollary.

By Theorem \ref{prop3.4}, we know that $F_{\mathcal{T}} \subseteq F_{CE}$ for some downward direct set of partitions $\mathcal{T}$ ($F_{\mathcal{T}}$ be the set of acts measurable with respect to some partition in $\mathcal{T}$). By A0, we can choose such $\mathcal{T}$ which contains all binary partitions.  The first part of the proof shows that preferences on $F_{\mathcal{T}}$ have an expected utility representation under the Basic Conditions and S-Independence. Then Uniform Comparability implies the desired representation.

\noindent \textit{Part 1.}  The restriction of $\succsim$ on $L$ satisfies the von Neumann-Morgernstern axioms, and so is represented by $v(l) = E_l[u]$. Without loss of generality assume that $v(L) \supset [-1,1]$. Fix a partition $\tau = \{ T_i \}_{i=1}^N \in \mathcal{T}$. The restriction of $\succsim$ to acts measurable with respect to $\tau$ satisfies the standard SEU axioms (see for example \cite{gilboa1989maxmin}). Therefore there exists a probability measure $P^{\tau}$ on $\tau$ such that for all $\tau$-measurable acts $f,g$, $f \succsim g$ iff $E_{P^{\tau}}[v\circ f] \geq E_{P^{\tau}}[v\circ g]$. 

It remains to show that there exists a probability $P$ on $\Sigma$ such that $P^{\tau}(A) = P(A)$ for all $\tau \in \mathcal{T}$ and $A \in \Sigma$. For any $A \in \Sigma$ and any $\tau',\tau'' \in F_{\mathcal{T}}$ such that $A \in \tau'$, $A\in \tau''$, it must be that $P^{\tau'} = P^{\tau''}$, since the certainty equivalent for the act $\mathbbm{1}_{A}$ must be the same regardless of which of $P^{\tau'}, P^{\tau''}$ is used to represent preferences. Thus $P(A) = \{P^{\tau}(A): \tau \in \mathcal{T} \}$ is a well defined function. $P$ is non-negative since each $P^{\tau}$ is. 

To show additivity of $P$, it suffices to consider acts in $F_2$. I show now how additivity of $P$ is implied by S-Independence. To see this, it is sufficient to consider $2$-element partitions and $\Omega = \{\omega_1, \omega_2, \omega_3\}$. Let $\tau^1 = \{\{\omega_1 \}, \{\omega_2, \omega_3\}\}$, $\tau^2 = \{\{\omega_1,\omega_3 \}, \{\omega_2\}\}$, and $\tau^3 = \{\{\omega_1,\omega_2 \}, \{\omega_3\} \}$. Let $\mathbbm{1}_A$ be an act such that $v\circ f(\omega) = 1$ for $\omega \in A$, and $0$ otherwise. Then $P^{\tau^i}(\omega_i) = E_{P^{\tau^i}}[\mathbbm{1}_{\{\omega_i\}}]$ for $i \in \{1,2,3\}$. We wish to show that $P$ defined as $P(\omega_i) = P^{\tau^i}(\omega_i)$ for $i \in \{1,2,3\}$ is a well-defined probability on $\Omega$ that is consistent with $P^{\tau^1}, P^{\tau^2}$, and $P^{\tau^3}$. To see this, let $f_1,f_3 \in F_2$ be acts such that $v\circ f_1(\{\omega_2, \omega_3\}) = 1$ and $ v\circ f_1({\omega_1}) = -1$; and $v\circ f_3(\{\omega_1, \omega_2\}) = 1$ and $ v\circ f_3({\omega_3}) = -1$. Then $\frac{1}{2} f_1 + \frac{1}{2} f_3 = \mathbbm{1}_{\omega_2}$, so $E_{P^{\tau^2}}[\frac{1}{2} f_1 + \frac{1}{2} f_3] = P^{\tau^2}(\omega_2)$. But by $S$-Independence we also know that $\frac{1}{2} f_1 + \frac{1}{2} f_3 \sim \frac{1}{2} U(f_1) + \frac{1}{2}U(f_3)$, where $U(f_i)$ is a constant act such that $v(U(f_i)) = E_{P^{\tau^i}}[f_i]$. Then, we also have, with the usual abuse of notation, that $\frac{1}{2} f_1 + \frac{1}{2} f_3 \sim \frac{1}{2}(P^{\tau^1}(\{\omega_2,\omega_3\}) - P^{\tau^1}(\omega_1) + P^{\tau^3}(\{\omega_1,\omega_2\}) - P^{\tau^3}(\omega_3) )$. Thus $\frac{1}{2}(P^{\tau^1}(\{\omega_2,\omega_3\}) - P^{\tau^1}(\omega_1) + P^{\tau^3}(\{\omega_1,\omega_2\}) - P^{\tau^3}(\omega_3) )= P^{\tau^2}(\omega_2)$. Using this equality it is easy to see that $P$ is well defined and consistent; for example $P(\omega_2) = P^{\tau^2}(\omega_2) = 1 - P^{\tau^1}(\omega_1) - P^{\tau^3}(\omega_3) = 1 - P(\{\omega_1,\omega_3\})$. 

By exactly the same argument, we can show that $P$ as defined above is consistent with $P^{\tau'}$ and $P^{\tau''}$ for any $2$-element partitions $\tau'',\tau'$. This gives the desired representation of $\succsim$ for $N$-simple acts. 

\noindent \textit{Part 2.} 
That Uniform Comparability implies the representation in Theorem 1 is immediate; when $\siminf_{\mathcal{T},f}$ and $\simsup_{\mathcal{T},g}$ both exist with $\siminf_{\mathcal{T},f} \succsim \simsup_{\mathcal{T},g}$, they play the role of $h$ and $k$ in the axiom respectively. 

I now show that $\siminf_{\mathcal{T},f}$ must be non-empty for all $f$. The proof for $\simsup_{\mathcal{T},f}$ is analogous. Suppose $\siminf_{\mathcal{T},f}$ is empty. Since $f$ is bounded, there exist constant acts $\bar{f},\und{f} \in F_c$ such that $\bar{f} \succsim f \succsim \und{f}$. Let 
\begin{equation*}
s = \sup_{h \in \{h \in F_{\mathcal{T}} : \  f \geq^0 h\}} \int_{\Omega} E_{h}u \ dP(\omega)
\end{equation*}
By Uniform Comparability, there exists an $h \in \{h \in F_{\mathcal{T}} : \  f \geq^0 h\}$ such that $h \succsim \und{f}$. Choose $\lambda^*$ such that $\lambda^* \int_{\Omega} E_{\bar{f}} u dP(\omega) + (1-\lambda^*) \int_{\Omega} E_{\und{f}} u dP(\omega) = s$. Since, by hypothesis, there are no $h \in F_{\mathcal{T}}$ with $f \geq^0 h$ that have expected utility equal to $s$, Uniform Comparability implies $\neg(f \succsim \lambda^* \bar{f} + (1-\lambda^*) \und{f})$. But this violates Archimedian Continuity. 

\subsection{Proof of Theorem \ref{thm4.1}}

Take $f \not \in F_{CE}$. I first show that $f \sim' \siminf_{N,f}$. Suppose $\exists \  c \in F_c$ such that $c \succ f$ (otherwise the claim is trivial). By consistency $f \succsim' \siminf_{N,f}$. Suppose $\siminf_{N,f} \not \succsim' f$. By continuity there exist constant acts $c_1 \succ c_2$ and $\lambda^* \in [0,1)$ such that $\lambda^* c_1 + (1-\lambda^*)c_2 \sim \siminf_{N,f}$. Then for all $\lambda > \lambda^*$ it follows from Theorem \ref{thm3.1} that $f \not \succsim \lambda c_1 + (1-\lambda) c_2$. Caution then implies that $\lambda c_1 + (1-\lambda) c_2 \succsim' f$. By continuity of $\succsim'$, $\lambda^* c_1 + (1-\lambda^*) c_2 \succsim' f$. The claim follows. Strong consistency for simple acts implies that the restrictions of $\succsim$ and $\succsim'$ to $F_{CE}$ have the same representation. The theorem follows.

\subsection{Existence of \texorpdfstring{$\simsup$ and $\siminf$}{}}\label{sec:existenceproofs}

Rather than look for functions defined by partitions on $\Omega$, I define a dual problem in terms of partitions of $w(\Omega)$. I show that $\simsup_{N,w,P}$ and $\siminf_{N,w,P}$ can be mapped to increasing functions on $w(\Omega)$, and then exploit this monotonicity and fact that $w(\Omega)$ is a bounded interval of $\mathbb{R}$. Put another way, $\Omega$ inherits both an order and a topology from $\mathbb{R}$ and the measurable function $w$, which greatly simplifies the problem of finding simple bounds.

\subsubsection{``Lebesgue Approach''}\label{sec3.1}

Let $\tau(\Omega)$ be the set of all partitions of $\Omega$, and $\tau^N(\Omega)$ the set of $N$-element partitions. When $\Omega$ is a partially ordered set, say that $\tau = \{T_i \}_{i=1}^N$ is an \textit{interval partition} if $T_i$ is an interval for all $i$.\footnote{By an interval in a partially ordered space $(J,\geq)$ I mean a set $I \subseteq J$ such that for all $x,y \in I$ and all $z \in J$ such that $x \geq z \geq y$, $z \in I$. I do not define an interval to be closed, as is sometimes done.}

Fix $w \in B(\Omega)$. For any $h \in B_N(\Omega)$ and let $\tau_h = \{ T_i\}_{i=1}^N$ be $h$'s partition. Define $T'_i := \{r \in w(\Omega) : w(\omega) = r \text{ for some } \omega \in T_i\}$. Define $\tau_h' := \{ T_i'\}_{i=1}^N$ as the cover of $w(\Omega)$ \textit{induced} by $h$. Say that $h$ \textit{induces an interval partition} of $w(\Omega)$ if $\tau_h'$ is a partition of $w(\Omega)$ and $T_i'$ is an interval for all $i$. 

Let $Q$ be the law of $w$, defined by $Q(A) = P(w^{-1}(A))$ for any Borel set $A$. Let $S$ be $Q$'s support. Since $w$ is measurable, $Q$ is a Borel measure on $\mathbb{R}$, and hence a Radon measure. Thus $Q(A) = 0$ for any $A \in \Omega \setminus S$ (see \cite{parthasarathy2005probability}, ch 2). The idea behind the proof is to look at $N$-simple functions on $S$, rather than on $\Omega$. To do this, I need to show that it is possible to move between $B_N(\Omega)$ and $B_N(S)$. I will focus on the existence of $\siminf_{N,w,P}$, as the argument for $\simsup$ is exactly symmetric. 

The following lemma shows that $\siminf_{N,w,P}$, if it exists, will live in the subset of $B_N(\Omega)$ that induce an interval partitions of $S$. The lemma has a symmetric counterpart for $\simsup_{N,w,P}$. 

\begin{lemma}\label{lem3.2}
For any $w \in B(\Omega)$, let $h \in B_N(\Omega)$ be such that $w \geq^0 h$. If $h$ does not induce an interval partition of $S$ then there exists an $N$-simple function $\hat{h}$ that does, and such that $w \geq^0 \hat{h}$ and $E_P[\Hat{h}] \geq E_P[h]$.
\end{lemma}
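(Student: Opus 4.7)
The plan is to construct $\hat{h}$ by ``pulling each state down to the nearest value in $h$'s range from below, using $w$ as the reference.'' Let $v_1 < v_2 < \cdots < v_K$ (with $K \leq N$) enumerate the distinct values of $h$, and let $E = \{\omega : h(\omega) > w(\omega)\}$, which is null by $w \geq^0 h$. For $\omega \in \Omega \setminus E$, define
\begin{equation*}
\hat{h}(\omega) := \max\{v_j : v_j \leq w(\omega)\},
\end{equation*}
noting that the set is nonempty because $h(\omega) = v_i$ for some $i$ and $v_i \leq w(\omega)$. On the null set $E$ set $\hat{h}(\omega) = v_1$. Measurability of $\hat{h}$ follows from $\hat{h}^{-1}(v_j) = w^{-1}\bigl([v_j, v_{j+1})\bigr) \cap (\Omega \setminus E)$ (with $v_{K+1} := +\infty$), up to modification on $E$.

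Four properties then need to be checked. First, $\hat{h}$ takes at most $K \leq N$ distinct values, so it is $N$-simple. Second, by definition $w(\omega) \geq \hat{h}(\omega)$ on $\Omega \setminus E$, so $w \geq^0 \hat{h}$. Third, for $\omega \in \Omega \setminus E$ the value $h(\omega)$ itself lies in $\{v_j : v_j \leq w(\omega)\}$, hence $\hat{h}(\omega) \geq h(\omega)$; since $E$ is null and both functions are bounded, this yields $E_P[\hat{h}] \geq E_P[h]$. Fourth, the induced cover of $w(\Omega)$ is, up to a $Q$-null set, $\{[v_j, v_{j+1}) \cap w(\Omega)\}_{j=1}^{K}$, and each such set is an interval in $\mathbb{R}$; intersecting with $S$ gives an interval partition of $S$ in the sense defined in Section \ref{sec3.1}.

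The only conceptual obstacle is the behavior on the null set $E$, because a pointwise definition $\hat{h}(\omega) = \max\{v_j : v_j \leq w(\omega)\}$ could fail there (the max could be undefined if $w(\omega) < v_1$). Assigning $\hat{h} = v_1$ on $E$ resolves this: the statewise-dominance and dominance-over-$h$ conclusions are required only ``mod null sets'' (for the former explicitly via $\geq^0$, for the latter because expectation under $P$ ignores $E$), and the induced partition on $S$ is likewise insensitive to a $Q$-null modification because $S$ is the support of $Q$. If in the original partition $\tau_h$ some cell $T_i$ is itself null under $P$, the corresponding $v_i$ simply may or may not appear as a value of $\hat{h}$, which is harmless since $\hat{h}$ is only required to be $N$-simple and to induce \emph{some} interval partition of $S$.
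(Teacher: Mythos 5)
Your global construction---pulling each state down to the largest value of $h$ weakly below $w(\omega)$---is a genuinely different and arguably cleaner route than the paper's, which instead repairs the induced cover by local modifications: merging cells that overlap at some $r\in S$ by assigning $\max\{h(T_i),h(T_j)\}$ on $w^{-1}(r)$, and un-crossing pairs $\omega_1,\omega_2$ with $w(\omega_1)<w(\omega_2)$ but $h(\omega_1)>h(\omega_2)$. Your single formula performs all of these repairs at once, and your verifications of $N$-simplicity, of $w\geq\hat{h}$ off $E$, and of $\hat{h}\geq h$ off $E$ are all correct.

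The gap is in the treatment of $E$. Setting $\hat{h}=v_1$ on all of $E$ and then arguing that the induced partition of $S$ is ``insensitive to a $Q$-null modification'' conflates $P$-nullity of $E\subseteq\Omega$ with $Q$-nullity of $w(E)\subseteq\mathbb{R}$, and the latter can fail: $Q(w(E))=P\bigl(w^{-1}(w(E))\bigr)$, and $w^{-1}(w(E))$ can be much larger than $E$. Concretely, take $E=\{\omega_0\}$ with $P(\{\omega_0\})=0$ but $w(\omega_0)=r_0$ where $P(w^{-1}(r_0))>0$ and $r_0\in[v_j,v_{j+1})$ for some $j\geq 2$. Then $r_0\in S$, the states in $w^{-1}(r_0)\setminus E$ receive $v_j$ while $\omega_0$ receives $v_1$, so $r_0$ lies in two cells of the induced cover and $\hat{h}$ is not constant on the fiber $w^{-1}(r_0)$---exactly the failure the lemma must rule out, since the downstream map $\tilde{h}(r)=\hat{h}(w^{-1}(r))$ is then ill-defined at a point of positive $Q$-mass. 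The fix is to apply your formula on $E$ as well: set $\hat{h}(\omega)=\max\{v_j: v_j\leq w(\omega)\}$ whenever $w(\omega)\geq v_1$, and $\hat{h}(\omega)=v_1$ only when $w(\omega)<v_1$. Since $h\geq v_1$ everywhere, $\{\omega: w(\omega)<v_1\}\subseteq E$ is $P$-null, hence $Q((-\infty,v_1))=0$ and this open set does not meet $S$; dominance $w\geq^0\hat{h}$ and $E_P[\hat{h}]\geq E_P[h]$ survive, and $\hat{h}$ is now genuinely constant on every fiber, with cells $[v_j,v_{j+1})\cap w(\Omega)$ forming the required interval partition.
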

\begin{proof}
Let $A \subseteq \Omega$ be the set of states $\omega$ such that $h(\omega)> w(\omega)$. We can restrict attention to functions $h$  such that $P(A) = 0$, as this is a requirement for $\siminf_{N,w,P}$. To begin, assume that $A\cap w^{-1}(S) = \varnothing$, so $w \geq_{w^{-1}(S)} h$ (i.e. $w(\omega) \geq h(\omega)$ for all $\omega$ such that $w(\omega) \in S$).

Suppose that $\tau_h'$ is not a partition of $S$. For any $r \in S$ and any $T_i$, $T_j\in \tau_h$ such that $T_i\cap w^{-1}(r) \neq \varnothing$ and $T_j\cap w^{-1}(r) \neq \varnothing$, define $\Hat{h}$ as $\Hat{h}(\omega) = \max\{h(T_i), h(T_j)\} \ \forall \ \omega \in w^{-1}(r)$, and $\Hat{h} = h$ elsewhere. Then $E_P[\Hat{h}] \geq E_P[h]$, and $\tau_{\Hat{h}}'$ will be a partition of $w(\Omega)$. For any such $r$, $w \geq_{w^{-1}(S)} h$ implies $r \geq \max\{h(T_i), h(T_j)\}$. Therefore $w \geq_{w^{-1}(S)} \hat{h}$, and since $P(A) = 0$, $w \geq^0 \hat{h}$

If there are elements of $\tau_h'$ that are not intervals then there exist states $\omega_1, \omega_2 \in \Omega$ with $w(\omega_1) < w(\omega_2)$ and $h(\omega_1) > h(\omega_2)$. Then define $\Hat{h}$ such that $\Hat{h}(\omega_2) = \Hat{h}(\omega_1)= h(\omega_1)$, and $h = \Hat{h}$ elsewhere. Clearly $E_P[\Hat{h}] \geq E_P[h]$ and $w \geq^0 \Hat{h}$.

Now, I want to show that it is without loss to assume $A\cap w^{-1}(S) = \varnothing$. For any $\omega \in A$, if there exists $r \in h(\Omega \setminus A)$ such that $w(\omega) \geq r$ then we can replace $h(\omega)$ with $r$ without altering the value of $h$, or the fact that it is $N$-simple. Assume therefore that $w(\omega)< \min h(\Omega\setminus A)$ for all $\omega \in A$. Since $w \geq_{\Omega\setminus A} h$, this implies that $w(\omega) < \inf w(\Omega \setminus A)$ for all $\omega \in A$. But then $w(A) \cap S = \varnothing$, or equivalently $A\cap w^{-1}(S) = \varnothing$.
\end{proof}

\begin{corollary}\label{cor3.1}
When $\siminf_{N,w,P}$ and $\simsup_{N,w,P}$ are non-empty, they contain functions that induce interval partitions of $S$.
\end{corollary}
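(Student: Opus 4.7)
The plan is to derive the corollary directly from Lemma \ref{lem3.2} (and its symmetric counterpart for $\simsup$), exploiting the fact that elements of $\siminf_{N,w,P}$ and $\simsup_{N,w,P}$ are by definition extremizers of $\int b \, dP$ over admissible $N$-simple functions bracketing $w$.

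First, suppose $\siminf_{N,w,P}$ is non-empty and pick any $h \in \siminf_{N,w,P}$. By definition $h \in B_N(\Omega)$ and $w \geq^0 h$, and $h$ attains the supremum of $\int b \, dP$ over all such $b$. If $h$ already induces an interval partition of $S$, there is nothing to prove. Otherwise, apply Lemma \ref{lem3.2} to obtain $\hat{h} \in B_N(\Omega)$ such that $w \geq^0 \hat{h}$, $\hat{h}$ induces an interval partition of $S$, and $E_P[\hat{h}] \geq E_P[h]$. Since $\hat{h}$ lies in the feasible set over which $h$ maximizes, we must in fact have $E_P[\hat{h}] = E_P[h]$, and hence $\hat{h} \in \siminf_{N,w,P}$. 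This shows $\siminf_{N,w,P}$ contains an element inducing an interval partition of $S$.

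For $\simsup_{N,w,P}$ the argument is symmetric. One invokes the analogue of Lemma \ref{lem3.2} obtained by reversing the inequalities (replacing $\geq^0$ with $\leq^0$, taking minima instead of maxima when merging values on preimages $w^{-1}(r)$, and reversing the monotonicity argument for the non-interval case). The symmetric lemma produces, from any $h \in \simsup_{N,w,P}$, an $\hat{h} \in B_N(\Omega)$ with $\hat{h} \geq^0 w$, inducing an interval partition of $S$, and with $E_P[\hat{h}] \leq E_P[h]$. Since $h$ minimizes $\int b \, dP$ over the relevant set, equality holds and $\hat{h} \in \simsup_{N,w,P}$.

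There is no real obstacle here; the only thing to notice is that Lemma \ref{lem3.2} is stated as a one-sided improvement, so the corollary requires that we be at a maximizer in the first place to conclude the improvement is actually an equality. This is precisely what membership in $\siminf_{N,w,P}$ (resp.\ $\simsup_{N,w,P}$) supplies.
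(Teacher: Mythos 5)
Your argument is correct and is exactly the intended derivation: the paper states Corollary \ref{cor3.1} as an immediate consequence of Lemma \ref{lem3.2} (and its symmetric counterpart), and your observation that optimality of $h$ forces the weak improvement $E_P[\hat{h}] \geq E_P[h]$ to be an equality, placing $\hat{h}$ back in $\siminf_{N,w,P}$, is precisely the missing one-line step. Nothing further is needed.
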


For any function $\tilde{h} \in B_N(w(\Omega))$ we can define a function $h \in B_N(\Omega)$ by $h(\omega) = \tilde{h}(w(\omega))$. Moreover, $E_Q[\tilde{h}] = E_P[h]$ by the definition of $Q$. Lemma \ref{lem3.2} and Corollary \ref{cor3.1} are useful because they allow us to do the converse: given a function $h \in B_N(\Omega)$ that induces a partition of $w(\Omega)$ we can define a function $\tilde{h} \in B(w(\Omega))$ by $\tilde{h}(r) = h(w^{-1}(r)) $. By $h(w^{-1}(r))$ I mean the value taken by $h$ for all $\omega \in w^{-1}(r)$. For $\tilde{h}$ to be well defined it is therefore necessary that $h$ induce a partition of $w(\Omega)$. Interval partitions are easy to work with, as we will see, since they can be described by a vector of $N$ cut-offs.

\subsubsection{Existence proof}

\begin{proposition}\label{prop7}
Suppose $\Omega$ is a closed interval in $\mathbb{R}$, $P$ has full support, and $w \in B(\Omega)$ is continuous, and increasing. Then $\siminf_{N,w,P}$ is non-empty. 
\end{proposition}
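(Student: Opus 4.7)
The plan is to reduce to a finite-dimensional optimization on a compact set. By Corollary~\ref{cor3.1}, it suffices to look for $\siminf_{N,w,P}$ among $h\in B_N(\Omega)$ that induce an interval partition of $S = w(\Omega)$. Since $w$ is continuous and increasing on $\Omega = [a,b]$, we have $S = [w(a), w(b)]$, and the $w$-preimage of an interval in $S$ is an interval in $\Omega$. So every candidate $h$ is piecewise constant on an interval partition of $\Omega$, described by cutoffs $a = t_0 \leq t_1 \leq \dots \leq t_N = b$ together with values $h_i$ taken on the interval $I_i$ between $t_{i-1}$ and $t_i$. I adopt the convention $I_i = [t_{i-1}, t_i)$ for $i < N$ and $I_N = [t_{N-1}, b]$, so that any atom of $P$ coinciding with a cutoff is assigned to the right-hand interval, which carries the weakly higher $w$-value.

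With the cutoffs fixed, I would next choose the values $h_i$ optimally. Continuity of $w$ and full support of $P$ together force $h_i \leq \inf_{I_i} w = w(t_{i-1})$: any strictly larger value would make $\{w < h_i\} \cap I_i$ a non-empty open subset of $\Omega$ of positive $P$-measure, contradicting $h \leq^0 w$. Setting $h_i = w(t_{i-1})$ to maximize $\int h\,dP$, the problem reduces to maximizing
\begin{equation*}
V(t_1,\dots,t_{N-1}) \;=\; \sum_{i=1}^N w(t_{i-1})\, P(I_i)
\end{equation*}
over the compact simplex $\Delta = \{(t_1,\dots,t_{N-1}) : a \leq t_1 \leq \dots \leq t_{N-1} \leq b\}$, with $t_0 = a$ and $t_N = b$.

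The last step is to verify that $V$ attains its maximum on $\Delta$. The coefficients $w(t_{i-1})$ are continuous in the cutoffs. With the chosen interval convention and the monotonicity of $w$, each $P(I_i)$ is upper semi-continuous in the cutoffs: writing $P(I_i) = F(t_i^-) - F(t_{i-1}^-)$ with $F$ the right-continuous CDF of $P$, the function $t\mapsto F(t^-)$ is lower semi-continuous, and the coefficient of $F(t_j^-)$ in $V$ equals $w(t_{j-1}) - w(t_j) \leq 0$, so upward jumps of $F(\cdot^-)$ at atoms translate into downward jumps of $V$. Hence $V$ is upper semi-continuous on the compact set $\Delta$ and attains its supremum; the corresponding step function is the desired element of $\siminf_{N,w,P}$.

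The main technical subtlety lies in establishing upper semi-continuity of $V$ when $P$ has atoms at cutoffs, which is the reason for the asymmetric interval convention and the invocation of the monotonicity of $w$. If $P$ is non-atomic, $F$ is continuous, $V$ is continuous on $\Delta$, and the compactness argument is immediate.
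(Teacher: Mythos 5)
Your proof is correct and follows essentially the same route as the paper's: reduce to step functions on left-closed interval partitions described by a cutoff vector, set the value on each cell equal to $w$ at its left endpoint, and then combine compactness of the cutoff simplex with upper semi-continuity of the resulting objective. One small point: the reduction to interval partitions should be justified by Lemma \ref{lem3.2} rather than Corollary \ref{cor3.1}, since the corollary is stated conditionally on non-emptiness (which is exactly what you are trying to prove); beyond that, your explicit treatment of atoms of $P$ in the upper semi-continuity step is more careful than the paper's one-line assertion.
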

\begin{proof}
Since $P$ has full support and $w$ is continuous, it is without loss to assume that $w \geq \siminf_{N,w,P}$. To find $\siminf_{N,w,P}$, it will be sufficient to restrict attention to increasing functions with interval partitions. This follows since for any non-interval partition $T$ there exist states $\omega_1,\omega_2 \in \Omega$ satisfying $w(\omega_1) \leq w(\omega_2)$ and $h_T(\omega_1) > h_T(\omega_2)$. Then adding $\omega_2$ to the partition cell that contains $\omega_1$ leads to a new $N$-simple function $h'$ with $h' \leq w$ and $E_P[h'] \geq E_P[h_{\tau}]$. When working with interval partitions label the cells so that $T_{i} > T_{i-1}$ (in the obvious order). Moreover, since $w$ is increasing we can assume that each interval in $\tau$ contains its left endpoint (since it is always optimal to assign a state to the higher interval). Any such partition can by fully described by a vector $z_{\tau} = (t_1,\dots,t_{N-1})$ of $N-1$ cutoffs. Given a cutoff vector $z$, let $h_z$ be the function constructed as above using the partition described by $z$. Since the set of such cutoff vectors is compact, it only remains to show that $z \mapsto E_P[h_z]$ is upper semi-continuous. This follows from right-continuity of the CDF and continuity of $w$. 
\end{proof}

\begin{proof} (\Cref{thm3.2}).
By Corollary \ref{cor3.1}, it is without loss to look $\siminf_{N,w,P}$ in the subset of $B_N(\Omega)$ that induce interval partitions of $S$. It is therefore without loss to look for functions $\tilde{h} \in B_N(S)$ that have interval partitions, i.e. we solve $\max_{\hat{h}\in B_N(S): w \geq_S \hat{h}} E_Q[\hat{h}]$. This problem has a solution by Proposition \ref{prop7}. Let $\tilde{h}$ be the solution, and define $h = \tilde{h}\circ w$ on $S$. Let $h = c$ on $\Omega \setminus S$, for some $c \in h(S)$. Then $h$ will be $N$-simple and satisfy $w \geq^0 h$. Moreover, Corollary \ref{cor3.1} implies that $h \in \siminf_{N,w,P}$.
\end{proof}

\subsection{Proof of Proposition \ref{prop:sim_char}}
\begin{proof}
I give the proof for $\siminf$; $\simsup$ is symmetric. If $f^{-1}(A)$ is non-null for every open neighborhood of $\inf E_fu(\Omega)$ then there must be some $\omega$ such that $\inf E_fu(\Omega) \geq E_{\siminf_{N,f}}u(\omega)$. If there was a violation of statewise dominance for some $\omega'$ then it could be removed by specifying $E_{\siminf_{N,f}}u(\omega') = E_{\siminf_{N,f}}u(\omega)$. If the condition doesn't hold then $\exists \ \omega$ such that it is strictly sub-optimal to have $E_fu(\omega) \geq E_{\siminf_{N,f}}u(\omega)$. 
\end{proof}

\subsection{Proof of Proposition \ref{prop3.2}}
\begin{proof}
For any act $g$ with partition $\tau$ and any set $A$ define the random variable $\Hat{\e}_{\chi}[g|A]$ as follows:
\begin{equation*}
    \Hat{\e}_{\chi}[g|A] =
    \begin{cases}
        \frac{1}{\Hat{P}_{\chi}(A)}\sum_{i=1}^N f(T_i)\frac{1}{K}\sum_{x \in \chi} \mathbbm{1}\{x \in T_i\cap A \} \ \ \ \ \text{if }
        \Hat{P}_{\chi}(A) > 0\\
        \e[g|A] \ \ \ \ \text{otherwise}
    \end{cases}
\end{equation*}
This definition is necessary since we will be dealing with finite samples, so $\Hat{P}_{\chi}(A) = 0$ with positive probability even when $P(a) > 0$. 

Let $f_b$ be the act equal to $b$ on $T'$ and $f$ elsewhere. In what follows $T'^c = \Omega-T'$. Notice that $\varepsilon_f(\chi)$ can be written as 
\begin{align*}
    \varepsilon_f(\chi)&= \Hat{\e}_{\chi}[f|T'](1 - \Hat{P}_{\chi}(T'^c)) + \Hat{\e}_{\chi}[f|T'^c]\Hat{P}_{\chi}(T'^c)- E[f] \\
    & = \Hat{\e}_{\chi}[f|T'] - \e[f|T'] +\left(\Hat{\e}_{\chi}[f|T'^c] - \Hat{\e}_{\chi}[f|T']\right)\Hat{P}_{\chi}(T'^c) \\
    & \hspace{5em} - \left(\e[f|T'^c] - \e[f|T']\right)P(T'^c)
\end{align*}
Similarly 
\begin{equation*}
    \varepsilon_{f_b}(\chi) = \left(\Hat{\e}_{\chi}[f|T'^c] - b\right)\Hat{P}_{\chi}(T'^c) - \left(\e[f|T'^c] - b\right)P(T'^c).
\end{equation*}
Thus 
\begin{equation*}
\begin{split}
    &\varepsilon_f(\chi) =\\
    &\varepsilon_{f_b}(\chi) + \underbrace{\Hat{\e}_{\chi}[f|T'] - \e[f|T'] + \left(b - \Hat{\e}_{\chi}[f|T']\right)\Hat{P}_{\chi}(T'^c) + \left(\e[f|T'] - b\right)P(T'^c)}_{\xi(\chi)}
\end{split}
\end{equation*}
The weak inequality in the definition of SOSD will follow by Jensen's inequality if I can show that $\e[\xi | \{\chi : \varepsilon_{f_b}(\chi) = m \}] = 0$ for all $m$ in the range of $\varepsilon_{f_b}$, where the expectation is taken with respect to the measure on datasets induced by $P$ and the i.i.d. sampling procedure. The strict inequality will follow since the distribution of $\xi$ is non-degenerate. 

Notice that, because sampling is i.i.d., $\e \left[\Hat{\e}_{\chi}[f|T'] \ \Big| \ \Hat{\e}_{\chi}[f|T'^c], \Hat{P}_{\chi}(T'^c) \right] = \e[f| T']$, where we use here the specification of $\Hat{\e}_{\chi}[f|T'] = \e[f|T']$ when $\Hat{P}_{\chi}(T'^c) = 1$. Independence between $\Hat{\e}_{\chi}[f|T'], \Hat{\e}_{\chi}[f|T'^c]$, and $\Hat{P}_{\chi}(T'^c)$ does not hold for finite samples due to integer restrictions. Fortunately all that we need is the stated conditional mean independence condition. Given this condition
\begin{align*}
    \e\left[ \xi | \varepsilon_{f_b} \right] &= \e\left[\e\left[ \xi | \Hat{P}_{\chi}(T'^c) \right] \big| \varepsilon_{f_b}\right]\\
    &= \e\left[ (b - \e[f|T'])(\Hat{P}_{\chi}(T'^c) - P(T'^c)) \big| \varepsilon_{f_b} \right]
\end{align*} 
where the first equality is just the law of iterated expectations and the second follows from conditional mean independence. Clearly if $b = E[f|T']$ the expectation is zero, so we are done. 
\end{proof}

\subsection{Proof of Proposition \ref{prop:smoothsampling}}
\begin{proof}
A1 has already been shown. A2 follows immediately from Jensen's inequality. A3 follows from the fact that $\phi$ is strictly concave and $\hat{G}_{\bar{f}} >_{SOSD} \hat{G}_{f}$; for any 2-element partition $\tau''$ of $T_1 \cup T_2$ we can choose a binary act $b$ on $\tau''$ that is arbitrarily close to a constant and such that $\hat{E}_{\chi}[b|\tau''] = \hat{E}_{\chi}[f|\tau'']$. So $b$ can be chosen such that the act $f''$ that is equal to $b$ on $\tau''$ and $f$ elsewhere will satisfy $E_{\hat{G}_{f''}}\left[\phi\left(\hat{E}[f''] \right)\right] > E_{\hat{G}_{f}}\left[\phi\left(\hat{E}[f] \right)\right]$. A0 is satisfied because for any event $A$ we can choose a bet on $A$ that is arbitrarily close to constant. A4 holds since null events are those for which there is not data, and when using the bootstrap the value on such sets is irrelevant. 
\end{proof}

\subsection{Proofs for the applications}

\subsection{Proof of Proposition \ref{prop9.1}}

\begin{proof}\textit{Proposition \ref{prop9.1}(i)}.
In what follows $N$ and $x$, the level of savings, will be fixed. Assume without loss of generality that the $\siminf$ is unique for any $\alpha$ (the same proof applies for any selection). Let $P^{\alpha}$ be the perceived distribution (induced by the cutoffs in (\ref{eq32})) when the proportion of savings allocated to the risky asset is $\alpha$. For a fixed level of savings $x$ the allocation problem becomes.
\begin{equation}\label{eq9.3}
\max_{\alpha \in [0,1]} \int u((1-\alpha)xR_b + \alpha x r) dP^{\alpha}(r).
\end{equation}

Notice that for the unconstrained DM, for whom $P^{\alpha}$ is replaced with $P$, the objective is concave in $\alpha$. This means that the derivative of the objective crosses zero (at most once) from above. The result will follow if I can show that at any $\alpha$ the derivative of the objective in (\ref{eq9.3}) with respect to $\alpha$ is greater for the unconstrained DM then for the constrained one. Recall that $\siminf_{N,f}$ (equivalently $P^{\alpha}$) is defined by a maximization problem, which satisfies the conditions of the envelope theorem \citep{milgrom2002envelope}. Then, by the envelope theorem, the derivative with respect to $\alpha$ of the objective function in (\ref{eq9.3}) is given by 
$$
\int u'((1-\alpha)xR_b + \alpha x r)(xr - xR_b) dP^{\alpha}(r). 
$$
Recall that $P \succsim_{FOSD} P^{\alpha}$ for all $\alpha$. It is therefore sufficient to show that $u'((1-\alpha)xR_b + \alpha x r)(xr - xR_b)$ is increasing in $r$. To do this I show that for any $b,s \geq 0$
\begin{equation*}
    \dfrac{d}{dr} [u'(R_bb + rs)R_b] \leq \dfrac{d}{dr} [u'(R_bb + rs)r],
\end{equation*}
or equivalently
\begin{equation}\label{eq9.5}
 (r - R_b)s \leq -\dfrac{u'(R_bb + rs)}{u''(R_bb + rs)} 
\end{equation}
Notice that by assumption $R_b \in (\und{R}_s, \bar{R}_s)$, so that the left hand side of (\ref{eq9.5}) is negative for $r$ low enough, while the the right hand side is always strictly positive. So it is sufficient to show that the derivative with respect to $r$ of the LHS of (\ref{eq9.5}) is less than that of the RHS for all $r$, i.e.
\begin{equation*}
     \dfrac{d}{dr}[(r - R_b)s] \leq \dfrac{d}{dr} \left[ -\dfrac{u'(R_bb + rs)}{u''(R_bb + rs)} \right] \Longleftrightarrow \dfrac{u'''(R_bb + rs)u'(R_bb + rs)}{u''(R_bb + rs)^2} \geq 2
\end{equation*}
This holds for all $r$, for example, for CRRA utility $u(x) = x^{1-\gamma}/(1-\gamma)$ when $\gamma \geq 1$. 
\end{proof}

\begin{proof}\textit{Proposition \ref{prop9.1}(ii)}
To prove the proposition define 
\begin{equation*}
V(x,N) = u(w - x) + \beta \max_{\alpha} \int u((1-\alpha)xR_B + \alpha x r)dP^{\alpha, x}(r).
\end{equation*}
where $P^{\alpha,x}$ is the lower-perceived distribution corresponding to the $\siminf$ for the act induced by $x,\alpha$. Define
\begin{equation*}
V(x,\infty) = u(w - x) + \beta \max_{\alpha} \int u((1-\alpha)xR_B + \alpha x r)dP(r).
\end{equation*}
I will show that $V$ has decreasing differences, in the sense that for  $x'' > x'$
\begin{equation*}
V(x', \infty) - V(x'', \infty) \geq V(x', N) - V(x'', N),
\end{equation*}
A sufficient condition for the above inequality is that for all $x \in [x',x'']$
\begin{equation}\label{eq9.6}
\begin{split}
      \dfrac{d}{dx}&\left[ \max_{\alpha} \int u((1-\alpha)xR_B + \alpha x r)dP^{\alpha,x}(r) \right] \\
      &\geq \dfrac{d}{dx}\left[ \max_{\alpha} \int u((1-\alpha)xR_B + \alpha x r)dP(r) \right]  
\end{split}
\end{equation}
Let $\alpha(x,N)$, $\alpha(x,\infty)$ be the optimal allocation proportions. Let $\xi(\alpha,r) = (1-\alpha)xR_B + \alpha x r$. I prove that the inequality in (\ref{eq9.6}) holds in two parts. First 
\begin{align}
    &\dfrac{d}{dx}\left[ \max_{\alpha} \int u(\xi(\alpha,r))dP^{\alpha, x}(r) \right]\\  
    &= \int u'(\xi(\alpha(x,N),r))((1-\alpha(x,N))R_b + \alpha(x,N)r)  dP^{\alpha, x}(r) \\
    & \geq  \int u'(\xi(\alpha(x,N),r))((1-\alpha(x,N))R_b + \alpha(x,N)r)  dP(r). \label{eq9.8}
\end{align}
where the first equality follows from the envelope theorem. The inequality in (\ref{eq9.8}) will follow by $P \geq_{FOSD} P^{\alpha,x}$ if the integrand in (\ref{eq9.8}) is decreasing in $r$. Taking the derivative and rearranging we can see that this is the case if and only if the coefficient of relative risk aversion $-u''(z)z/u'(z)$ is greater than or equal to $1$. 

To complete the proof that (\ref{eq9.6}) holds I show that
\begin{align*}
    \int u'(&\xi(\alpha(x,N),r))((1-\alpha(x,N))R_b + \alpha(x,N)r)  dP(r) \\
    &\geq \dfrac{d}{dx}\left[ \max_{\alpha} \int u((1-\alpha)xR_B + \alpha x r)dP(r) \right].
\end{align*}
This will follow by the envelope theorem and Proposition \ref{prop9.1}(i), which says that $\alpha(x,\infty) \geq \alpha(x,N)$. We need only show that for $\alpha \in [\alpha(x,N), \alpha(x,\infty)]$ the expression on the LHS of the above inequality is decreasing in $\alpha$. To be precise, we need that 
$$
\int u'((1-\alpha)xR_B + \alpha x r)((1-\alpha)R_b + \alpha r) dP(r)
$$
is decreasing in $\alpha$. Taking the derivative, we need
\begin{equation}\label{eq9.9}
\int [u''(\xi(\alpha,r))((1-\alpha)xR_b + \alpha xr) + u'(\xi(\alpha,r))](r-R_b) dP(r) \leq 0.
\end{equation}
Notice that the term in brackets in the integrand, $u''((1-\alpha)xR_B + \alpha x r)((1-\alpha)xR_b + \alpha xr) + u'((1-\alpha)xR_B + \alpha x r)$, is less than or equal to zero by the assumption of $-u''(z)z/u'(z) \geq 1$, but $(r-R_b)$ is negative for low values of $r$ and positive for high values. Assuming CRRA utility, (\ref{eq9.9}) reduces to
\begin{equation}\label{eq9.12}
\int u'((1-\alpha)xR_B + \alpha x r)(xr - xR_b) dP(r) \geq 0 
\end{equation}
Notice that the LHS of (\ref{eq9.12}) is exactly the derivative with respect to $\alpha$ of expected utility. Thus for all $\alpha \leq \alpha(x,\infty)$, (\ref{eq9.12}) holds by concavity of the objective function. Then (\ref{eq9.6}) holds, as desired. 
\end{proof}

\subsection{Proof of Proposition \ref{prop9.3}}
\begin{proof}
Let $P_N^{b,s}$ be the induced distribution for the agent with capacity $N$ (either Cautious or Reckless). The first order conditions for the constrained agent's problem are given by 
\begin{align*}
    [b] \ \ \ &u'(w - b - ps) = \dfrac{d}{db}\left[\beta \int u(w + bR_b - sr) dP_N^{b,s}(r)\right] \\
    [s] \ \ \ &u'(w - b - ps)p = \dfrac{d}{ds}\left[\beta \int u(w + bR_b - sr) dP_N^{b,s}(r) \right].
\end{align*}
In equilibrium (with $b = s = 0$) the FONC for $b$ implies that $R_B = 1/\beta$, regardless of capacity or attitude. Let $p^N$ be the equilibrium risky asset price for the capacity $N$ agent. Zero net supply requires that
\begin{equation}\label{eq9.13}
    u'(w)p^{N} = \dfrac{d}{ds}\left[ \beta \int u(w + sr) dP_N^{0,s}(r) \right]_{s=0}.
\end{equation}
Clearly when $s = 0$ we have $\beta \int u(w + sr) dP_{N''}^{0,s}(r) = \beta \int u(w + sr) dP_{N'}^{0,s}(r)$ for all $N',N''$. For any $s \neq 0$ for a cautious (reckless) DM, $\beta \int u(w + sr) dP_{N'}^{0,s}(r) <  (>) \  \beta \int u(w + sr) dP_{N''}^{0,s}((r)$ for $N'' > N'$, since $P_{N}^{0,s}$ is the solution to a maximization (minimization) problem. Therefore the derivative on the RHS of (\ref{eq9.13}) is increasing (decreasing) in $N$ when the DM is cautious (reckless). The result follows.
\end{proof}

\subsection{Proofs for Section \ref{sec:principal_agent}}

\subsubsection{Proof of Lemma \ref{lem:reckless_complex}}
\begin{proof}
Fix a wage schedule $\tilde{w}$ which induces effort level $a$. Let $\tilde{u} = u \circ \simsup(\tilde{w}, P_a)$ and $\tilde{w}(\bar{\omega}) = \sup_{\omega \in \Omega} \tilde{w}(\omega)$ and $\tilde{u}(\bar{\omega}) = u(\tilde{w}(\bar{\omega}))$. Assume that $\simsup_{N,\tilde{w},P_a}$ is singleton (this is without loss; we could do everything in terms of selections from $\simsup_{N,\tilde{w},P_a}$). Order the states so that $\tilde{w}$ is increasing.

Suppose there is an $N$-simple contract $w$ that induces effort level $a$. Then it will be possible to find a cell $T_i$ of $w$'s partition and a sub-interval at the lower end of $T_i$ such that reducing the payoff slightly on this sub-interval, leaving the contract otherwise unchanged, does not change the $\simsup$ under $P_a$. In other words, if $w'$ is the contract so obtained then $\simsup_{N,w,P_a} = \simsup_{N,w',P_a}$. Moreover, since $w'$ dominated statwise by $w$, we have $U(\simsup_{N,w,P_{a'}},P_{a'}) \geq U(\simsup_{N,w',P_{a'}}, P_{a'})$ for all $a'$, so that $w'$ also induces effort level $a$.   
\end{proof}

\subsubsection{Proof of Proposition \ref{prop6.3}}
\begin{proof}

Assume that $\tilde{w}$ is simple or continuous. We want to show that the principal can improve on a simple contract by offering a discrete jump at the top. Let $a$ be the optimal effort under $\tilde{w}$. Let $\tau = (t_i )_{i=0}^{N+1}$ be a vector of cut-offs where $t_0 = 0$, $t_{N+1} = 1$. Let $T^N$ be the set of such cut-off vectors. For any $\tau \in T^N$, we can a define an $N$-simple function $f_{\tau}$ that uniformly dominates $\tilde{u}(\omega) := u(\tilde{w}(\omega),a)$ as $f_{\tau}(\omega) = \tilde{u}(t_i)$ for $\omega \in (t_{i-1}, t_{i}]$ and $i \in \{1, \dots, N+1 \}$. Clearly any element of $\simsup_{N,\tilde{u},P_a}$ can be represented as $f_{\tau}$ for some $\tau \in T^N$. 

\textit{Claim 1.} There exists $\delta > 0$ and $\kappa > 0$ such that $E_{P_a}[f_{\tau}] - E_{P_a}[\simsup_{N,\tilde{u},P_a}] > \delta$ for all $\tau \in T^N$ such that $t_N \geq 1-\kappa$. 

\textit{Proof of Claim 1.} Since $\tilde{u}$ is continuous in a neighborhood of $1$, and $P_a$ is absolutely continuous, we have $\tau \mapsto E_{P_a}[f_{\tau}]$ continuous as well. Moreover, $E_{P_a}[\simsup_{N-1,\tilde{u},P_a}] > E_{P_a}[\simsup_{N,\tilde{u},P_a}]$. The claim follows, since $\tau \in T^N$ and $t_N = 1$ implies $\tau \in T^{N-1}$.
 
Given Claim 1, we can define the desired modification of $\tilde{w}$, denoted by $\hat{w}$, as follows. Let $\varepsilon < \kappa$. For $\omega \in (1-\kappa, 1-\varepsilon)$, let $\hat{w}(\omega) = \tilde{w}(\omega) - \beta \sigma(\omega)$ for any $\sigma > 0$ such that $\tilde{w}(\omega) - \beta \sigma(\omega)$ is increasing. Let $w = \tilde{w}$ for all other output levels. Denote $\hat{u}(\omega) := u(w(\omega),a)$ Then by Claim 1, for $\beta$ small enough we have $E_{P_a}[f_{\tau}] - E_{P_a}[\simsup_{N,\hat{u},P_a}] > 0$. Thus $\simsup_{N,\hat{u},P_a} = \simsup_{N,\tilde{u},P_a}$. Moreover, since $\tilde{w}$ uniformly dominates $\hat{w}$, $\hat{w}$ will also induce effort level $a$ (as in the proof of Proposition \ref{prop6.1}). 
\end{proof}

\subsection{Proofs for Section \ref{sec8}}

Let $f'_{-}(\omega)$ and $f'_+(\omega)$ be left and right derivatives of $f$ at $\omega$ respectively. The proof of Proposition \ref{prop8.1} makes use of the following lemma. 
\begin{lemma}\label{lem8.1}
Let $f$ be a decreasing and continuous function. Suppose $f$ has a kink at $d$ ($f_+'(d) > f_-'(d)$). Then $\siminf(f)$ is constant in a neighborhood of $d$.
\end{lemma}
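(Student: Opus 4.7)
\noindent\textbf{Proof plan for Lemma \ref{lem8.1}.}

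The plan is to exploit the first-order characterization of the optimal cut-offs defining $\siminf(f)$ and show that the kink cannot sit strictly between two cells. Since $f$ is decreasing, Proposition \ref{prop:sim_char} (or the Lebesgue approach of Section \ref{sec3.1}) tells us that $\siminf(f)$ is generated by an interval partition $0=t_0\leq t_1\leq\cdots\leq t_N=\bar\omega$, and takes the value $f(t_i)$ on the cell $[t_{i-1},t_i)$ (the right endpoint, since $f$ is decreasing). The cut-offs $t_1,\dots,t_{N-1}$ are chosen to maximize
\begin{equation*}
G(t_1,\dots,t_{N-1}) \;=\; \sum_{i=1}^N \bigl[P(t_i)-P(t_{i-1})\bigr]\,u\bigl(f(t_i)\bigr).
\end{equation*}
The goal is to show that no optimal $t_i$ equals $d$; once this is established, $d$ lies strictly inside some cell $(t_{i-1},t_i)$, and $\siminf(f)$ equals the constant $f(t_i)$ on this open neighborhood of $d$, which is precisely the claim.

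Suppose for contradiction that some optimal $t_i$ equals $d$, with $t_{i-1}<d<t_{i+1}$. Holding the other cut-offs fixed, view $G$ as a function of $t_i$ alone. Since $f$ has one-sided derivatives at $d$ and $P$ is absolutely continuous with density $P'$ (by the assumption in Section \ref{sec8}), I would compute the one-sided derivatives of $G$ in $t_i$ at $d$:
\begin{align*}
G'_-(d) &= P'(d)\bigl[u(f(d))-u(f(t_{i+1}))\bigr] + [P(d)-P(t_{i-1})]\,u'(f(d))\,f'_-(d),\\
G'_+(d) &= P'(d)\bigl[u(f(d))-u(f(t_{i+1}))\bigr] + [P(d)-P(t_{i-1})]\,u'(f(d))\,f'_+(d).
\end{align*}
Subtracting gives
\begin{equation*}
G'_+(d)-G'_-(d) \;=\; [P(d)-P(t_{i-1})]\,u'(f(d))\,\bigl[f'_+(d)-f'_-(d)\bigr] \;>\; 0,
\end{equation*}
since $P(d)-P(t_{i-1})>0$ ($P$ has a density and $t_{i-1}<d$), $u'(f(d))>0$, and $f'_+(d)>f'_-(d)$ by hypothesis. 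But for $t_i=d$ to solve the maximization problem we need $G'_-(d)\geq 0\geq G'_+(d)$, i.e., $G'_+(d)\leq G'_-(d)$, contradicting the strict inequality just derived.

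The main subtlety to handle cleanly is the degenerate case in which the optimum sets $t_{i-1}=d$ (a collapsed cell), which would make $P(d)-P(t_{i-1})=0$. This case can be dismissed by a straightforward reduction: if some cell is degenerate, then the partition effectively has fewer than $N$ cells, so the argument can be applied to the relabeled partition with fewer cut-offs. Similarly, the case where $d$ falls at $0$ or $\bar\omega$ is excluded since the lemma is about interior kinks (the insurance application always has $0<d<\bar\omega$). Everything else in the proof is a routine verification.
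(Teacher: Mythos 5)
Your proposal is correct and follows essentially the same route as the paper: both set up the one-variable objective for an interior cut-off at $d$, compute the one-sided derivatives, and derive a contradiction between the optimality conditions $V'_-(d)\geq 0 \geq V'_+(d)$ and the kink condition $f'_+(d)>f'_-(d)$; your handling of degenerate cells is a small additional care the paper leaves implicit. No substantive differences to report.
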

\begin{proof}

Suppose that there is a cut-off at $d$. That is, $t_n = d$ and $\siminf(f)$ is discontinuous at $t_n$. Fix $t_{n-1}, t_{n+1}$. Denote the value generated by an cutoff in $(t_{n-1}, t_{n+1})$ by 
\begin{equation*}
    V(t) = [P(t)-P(t_{n-1})]u(f(t)) + [P(t_{n+1}) - P(t)]u(f(t_{n+1}))
\end{equation*}
Then optimality of $t_n =d$ implies that the left derivative of $V$ at $d$ must be positive and the right derivative must be negative:
\begin{equation}\label{eq8.1}
    V_-'(d) = [P(d)-P(t_{n-1})]u'(f(d))f_-'(d) + p(d)[u(f(d)) - u(f(t_{n+1}))] \geq 0
\end{equation}
\begin{equation}\label{eq8.2}
    V_+'(d) = [P(d)-P(t_{n-1})]u'(f(d))f_+'(d) + p(d)[u(f(d)) - u(f(t_{n+1}))] \leq 0
\end{equation}
Equations (\ref{eq8.1}) and (\ref{eq8.2}) imply
\begin{equation*}
    f_+'(d) \leq \dfrac{-p(d)}{u'(f(d))[P(d)-P(t_{n-1})]}[u(f(d)) - u(f(t_{n+1}))] \leq f_-'(d)
\end{equation*}
which contradicts $f_+'(d) > f_-'(d)$. \end{proof}

Let $t_N$ be the highest cut-off defining $\siminf(y|d)$. Clearly $t_N \leq d$, since $y$ is flat above $d$ for a full insurance contract. Moreover Lemma \ref{lem8.1} implies that $t_N < d$. 

\begin{proof}
\textit{Proposition \ref{prop8.1}.}
Let $l(\omega|d,c)$ be the amount paid by the consumer when the loss is $\omega$ given a contract with deductible $d$ and coverage rate $c$. Denote the perceived value to a cautious agent of an insurance contract characterized by $d,c$ as 
\begin{equation}\label{eq8.3}
    U^N(d,c) = \max_{\hat{t}_1,\dots,\hat{t}_N} \sum_{n=1}^{N+1} [P(\hat{t}_n) - P(\hat{t}_{n-1})]u(w - l(\hat{t}_n|d,c))
\end{equation}
A fully rational agent would value the contract at
\begin{equation*}
    U^{\infty}(d,c) = \int_0^d u(w - \omega)dP(\omega) + \int_d^{\Bar{\omega}} u(w - d - (1-c)(\omega -d))dP(\omega)
\end{equation*}

Let $\{ t_1, \dots, t_N\}$ be the solution to the maximization problem in \ref{eq8.3},  and let $n^* = \max\{n \in \{1, \dots, N\}  : t_n \leq d\}$ be the index of the highest cut-off below $d$. By the envelope theorem 
\begin{equation*}
    U^N_d(d,c) = -c\sum_{n = n^*+1}^{N+1} [P(t_n) - P(t_{n-1})]u'(w - d -(1-c)(t_n-d)).
\end{equation*}
Moreover $U^{\infty}_d(d,c) = -c \int_{d}^{\Bar{\omega}} u'(w - d -(1-c)(\omega-d))dP(\omega)$. Notice that
\begin{align*}
   U^N_d(d,c) &< c[P(d) - P(t_{n^*})]u'(w - d -(1-c)(t_{n^* + 1}-d)) \\
   & \ \ \ \ \ \ \ -c\sum_{n = n^*+1}^{N+1} [P(t_n) - P(t_{n-1})]u'(w - d -(1-c)(t_n-d))\\
   & \leq -c \int_{d}^{\Bar{\omega}} u'(w - d -(1-c)(\omega-d))dP(\omega) \\
   & = U^{\infty}_d(d,c) 
\end{align*}
where the first inequality follows since Lemma \ref{lem8.1} implies $t_{n^*} < d$, and the second from concavity of $u$. Notice that the second inequality holds with equality if and only if $c = 1$ for $u$ strictly concave. This proves part 1 of Proposition \ref{prop8.1}.

The proof of part 2 is similar. In this case
\begin{align*}
    U^N_c(d,c) &= \sum_{n = n^*+1}^{N+1} [P(t_n) - P(t_{n-1})]u'(w - d -(1-c)(t_n-d))(t_n - d) \\
    & > - [P(d) - P(t_{n^*})]u'(w - d -(1-c)(t_n-d))(t_n - d) \\
    & \ \ \ \ \ \ \ +\sum_{n = n^*+1}^{N+1} [P(t_n) - P(t_{n-1})]u'(w - d -(1-c)(t_n-d))(t_n - d) \\
    &\geq \int_{d}^{\Bar{\omega}} u'(w - d -(1-c)(\omega-d))(\omega-d)dP(\omega) \\
    &= U^{\infty}_c(d,c)
\end{align*}
where again the first inequality follows from Lemma \ref{lem8.1} and the second from concavity of $u$. 
\end{proof}

\subsubsection{Proof of Proposition \ref{prop8.2}}
This result follows from a similar argument as Proposition \ref{prop8.1}. The key to the proof of Proposition \ref{prop8.1} was the result of Lemma \ref{lem8.1} that $t_{n^*} < d$. When the baseline contract is full insurance above $d$ the following result allows us to draw  the analogous conclusion that the value of the greatest cut-off below $d$ is increasing in $N$.

Let $t^N = t_0^N, \dots, t_{N+1}^N $ be the cutoffs defining $\siminf(y|d)$ when the agent has capacity $N$ and $t^{N+1} = t_0^{N+1}, \dots, t_{N+2}^{N+1}$ be the cutoffs for capacity $N+1$. When $c = 1$ we have the immediate corollary of Lemma \ref{lemA.14}, which says that the largest cut-off below $d$ is increasing in $N$. Notice that for $c=1$, $n^* = N$ where $N$ is the capacity of the agent. This follows since ex-post wealth is constant above $d$, and so it would not be optimal to have a cut-off above $d$.    
\begin{corollary}\label{cor8.1}
For a full insurance contract above a deductible, $t_N^N < t_{N+1}^{N+1}$.
\end{corollary}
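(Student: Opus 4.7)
The plan is to derive the claim as an immediate consequence of Lemma~\ref{lemA.14}, specialized to the full-insurance case. The preparatory step is to show that the $c=1$ assumption collapses the $\siminf$ problem to choosing cutoffs in $[0,d]$, which places us in the setting to which Lemma~\ref{lemA.14} applies.

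First I would record two structural observations. Because $y(\omega)=w-\min\{\omega,d\}$ is constant on $[d,\bar{\omega}]$, any interior cutoff placed strictly above $d$ yields two adjacent cells on which $y$ takes the same value $w-d$; merging them frees a cutoff without changing the objective, so every optimal interior cutoff must lie in $[0,d]$. By Lemma~\ref{lem8.1}, the kink of $y$ at $\omega=d$ prevents a cutoff from sitting at $d$ itself, so every interior cutoff is \emph{strictly} below $d$. Hence, in the paper's indexing, the highest interior cutoff for the capacity-$N$ agent is $t_N^N$ and for the capacity-$(N+1)$ agent is $t_{N+1}^{N+1}$, as the text asserts.

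With these observations, the $c=1$ problem reduces to choosing $K$ interior cutoffs on $[0,d]$ to maximize
\begin{equation*}
\sum_{i=1}^{K}\bigl[P(t_i)-P(t_{i-1})\bigr]\,u(w-t_i)\;+\;\bigl[1-P(t_K)\bigr]\,u(w-d),
\end{equation*}
with $K=N$ for capacity $N$ and $K=N+1$ for capacity $N+1$. Applying Lemma~\ref{lemA.14} to this reduced problem at these two values of $K$ delivers the strict inequality $t_N^N<t_{N+1}^{N+1}$.

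The main obstacle is already absorbed into Lemma~\ref{lemA.14}: showing that the top cutoff in the reduced problem strictly increases in the number of cutoffs. Conceptually, the first-order condition for the top cutoff, $p(t_K)\bigl[u(w-t_K)-u(w-d)\bigr]=\bigl[P(t_K)-P(t_{K-1})\bigr]u'(w-t_K)$, balances the marginal extension gain against the marginal compression loss of the topmost cell; adding a cutoff strictly refines the partition below the top, shrinks the cell mass $P(t_K)-P(t_{K-1})$ for any fixed $t_K$, and forces $t_K$ upward to re-balance. Making the ``strict refinement below $\Rightarrow$ strict shift above'' implication rigorous is where strict concavity of $u$ and absolute continuity of $P$ are essential, since they rule out the degenerate case in which the extra cutoff is inserted arbitrarily close to the top without shifting it.
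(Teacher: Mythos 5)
Your preparatory reduction is exactly the paper's: for $c=1$ ex-post wealth is constant above $d$, so no optimal interior cutoff lies above $d$, and Lemma \ref{lem8.1} excludes a cutoff at $d$ itself; hence all cutoffs live in $[0,d)$ and $n^*=N$. The reduced objective you write down is also the correct coarse value for the $\siminf$ cell function of a decreasing wealth profile.

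The gap is in the load-bearing step. Lemma \ref{lemA.14} compares two cutoff vectors of the \emph{same} dimension across nested intervals $[a,b']\subseteq[a,b'']$; it is an increasing-differences statement in the interval endpoint, and it says nothing about how optimal cutoffs move when the \emph{number} of cutoffs changes from $K=N$ to $K=N+1$. (The paper's own sentence also points at Lemma \ref{lemA.14}, but the substance it relies on is elsewhere.) The result you actually need is the sandwich property of optimal cutoffs across capacities: Corollary \ref{corA.12} gives a \emph{weakly} sandwiched selection, and the strict inequality $t_N^N<t_{N+1}^{N+1}$ requires the strict sandwiching of \cite{tian2015optimal}, Theorem 3, which applies because the $\siminf$ cell function on $[0,d]$ is strictly submodular and regular ($P$ full support, wealth strictly decreasing below $d$) --- this is precisely the route taken in the proof of Proposition \ref{prop8.5}. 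Your closing first-order-condition heuristic does not substitute for this: it presumes an interior, differentiable, unique optimum and does not rule out the new cutoff being inserted strictly below the old top cutoff while leaving the top cutoff unchanged, which is exactly the degenerate case the regularity/strict-submodularity hypotheses are there to exclude. Replace the appeal to Lemma \ref{lemA.14} with an appeal to the sandwich results and verify regularity, and the argument closes.
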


We can now prove Proposition \ref{prop8.2}.
\begin{proof}
\textit{(Proposition \ref{prop8.2})} For $c = 1$ an envelope theorem implies that
\begin{align*}
    U^N_d(d,c) &= -[1 - P(t^N_N)]u'(w - d) \\
    & \leq [P(t^{N+1}_{N+1}) - P(t^N_N)]u'(w - d) -[1 - P(t^N_N)]u'(w - d) \\
    & = U^{N+1}_d(d,c)
\end{align*}
The proof for decreasing $c$ (recall that $c$ is bounded above by 1), is similar. 
\end{proof}

\subsubsection{Propositions \ref{cor8.2} and \ref{cor8.3}}
Propositions \ref{cor8.2} and \ref{cor8.3} follow immediately from Corollary \ref{cor5.1} and \ref{corA.12} respectively.

\subsubsection{Proof of Propositions \ref{prop8.5} and \ref{prop8.6}}
I give the proof Proposition \ref{prop8.5} here. The argument for Proposition \ref{prop8.6} is essentially identical. 
\begin{proof}
Let $m$ be the out of pocket maximum, and $U^N(m)$ be the perceived value, holding $c$ and $d$ fixed. Let $\und{l}(m) = \min\{\omega \in \Omega : l(\omega|d,c,m) = m \}$. Let $C^*(N,[0,\und{l}(m)])$ be the cut-offs corresponding to the elements of the plan $\siminf$. Since $P$ is absolutely continuous and $\und{l}(\cdot)$ is continuous, $C^*(N,[0,\cdot])$ is upper-hemicontinuous by Berge's maximum theorem. 

Recall that for a decreasing and continuous function $f$ on an interval of the reals the cell function corresponding to simple lower bounds is given by
\begin{equation}\label{eq6.10}
    v([a,b]) = f(b) P(\{\omega \in [a,b] \}).  
\end{equation}

Since utility is strictly decreasing on $[0,\und{l}(m)]$ and $P$ is full support the cell function generating $U^N(m)$ is strictly submodular by Lemma \ref{lem1}. Moreover, it satisfies the conditions for regularity in \cite{tian2015optimal}, so by \cite{tian2015optimal} Theorem 3, $C''$ and $C'$ are sandwiched for all $C'' \in C^*(N+1,[0,\und{l}(m)])$ and $C' \in C^*(N,[0,\und{l}(m)])$, so upper-hemicontinuity implies that for $\varepsilon$ small enough there exist sandwiched selections from $C^*(N,m)$ and $C^*(N+1, m-\varepsilon)$. 

For an interval $I \subseteq \Omega$ and $C \in C^N(I)$ let $V(I,C)$ be the coarse value corresponding to the cell function in (\ref{eq6.10}). Since utility is constant above $\und{l}(m)$, we can write  $U^N(m) = V([0,\und{l}(m)],C^*(N,[0,\und{l}(m)])) + (1 - P(\und{l}(m)))u(w - m)$ (with an abuse of notation when $C^*(N,[0,\und{l}(m)])$ is non-singleton). The result follows from Lemma \ref{lemA.13}.
\end{proof}

\subsubsection{Lemma \ref{lem8.4}}

\begin{lemma}\label{lem8.4}
Consider two plans with no out-of-pocket maximum, coverage rate $c$, deductibles $d$ and $d'$, with $d < d'$, and premiums $p,p'$ such that $p' - p = (1 - c)(d' - d)$. Then a cautious agent is indifferent between the two plans if the lowest cut-off defining $\siminf$ of the high deductible plan is (weakly) greater than $d'$. \footnote{The $\siminf$ for the high deductible plan need not be unique. I state the result in this way for simplicity, but it holds as long as the condition is satisfied for a selection from the $\siminf$. In fact, if $\siminf$ for the high deductible plan is single valued then the converse holds as well. }
\end{lemma}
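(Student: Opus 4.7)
The plan is to show directly that the cautious valuations $\int u(W_1^{\siminf})\,dP$ and $\int u(W_2^{\siminf})\,dP$ coincide, where $W_i(\omega) = w - p_i - l(\omega\mid d_i,c)$ is the ex-post wealth of plan $i$ (with $l(\omega\mid d,c) = \omega$ for $\omega \leq d$ and $l(\omega\mid d,c) = d + (1-c)(\omega - d)$ for $\omega > d$), and where $W_i^{\siminf}$ denotes any selection from $\siminf_{N,W_i}$. Equality of these two integrals is precisely indifference under the Cautious representation.

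The first step is to verify algebraically that under the stated premium relation, the two wealth functions coincide on the tail region $[d',\bar{\omega}]$, while being monotonically ordered on $[0,d')$ so that one plan statewise dominates the other globally. Both $W_i$ are continuous, weakly decreasing, and piecewise affine in $\omega$ (slope $-1$ below $d_i$, slope $-(1-c)$ above); substituting the piecewise formula for $l$ into $W_1 - W_2$ on $[d',\bar{\omega}]$ and matching coefficients identifies the exact linear relation between $p$ and $p'$ that forces the two wealths to agree throughout the tail. By monotonicity of the $\siminf$ operation in its argument, global statewise dominance immediately gives one direction of the inequality between the two cautious valuations.

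For the reverse inequality, I exhibit a candidate partition for the dominated plan that matches the $\siminf$ value of the dominant one. Let $0 = t_0 < t_1 < \cdots < t_{N-1} < t_N = \bar{\omega}$ be the cut-offs of a selection from $\siminf_{N,W_2}$, with $t_1 \geq d'$ by hypothesis. Since $W_2$ is decreasing, $\int u(W_2^{\siminf})\,dP = \sum_{k=1}^N (P(t_k) - P(t_{k-1}))\,u(W_2(t_k))$. Applying this same partition to $W_1$: when $k \geq 2$ the interval $[t_{k-1},t_k]$ is contained in the coincidence region $[d',\bar{\omega}]$, so $\min_{[t_{k-1},t_k]} W_1 = W_1(t_k) = W_2(t_k)$; on the first cell $[0,t_1]$, decreasingness of $W_1$ places its minimum at $t_1 \geq d'$, where again $W_1(t_1) = W_2(t_1)$. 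Thus the step function induced on $W_1$ by this partition is pointwise identical to $W_2^{\siminf}$, achieving the value $\int u(W_2^{\siminf})\,dP$ and yielding $\int u(W_1^{\siminf})\,dP \geq \int u(W_2^{\siminf})\,dP$. Combined with the monotonicity inequality from the previous paragraph, this gives equality.

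The main obstacle is really the algebraic bookkeeping in the first step: pinning down that the specified premium relation is exactly what makes the two piecewise-linear wealth profiles meet at $d'$ and remain coincident thereafter. Once that is in hand, the rest is a clean application of the right-endpoint characterization of $\siminf$ for a monotone wealth function, together with the hypothesis that the high-deductible plan's optimal cut-offs all lie in the coincidence region. The footnote's claim that the converse holds when $\siminf_{N,W_2}$ is single-valued would follow by a similar argument: if the lowest cut-off were strictly below $d'$, then plan 2's $\siminf$ would strictly improve by moving that cut-off to $d'$ (since moving right into the coincidence region strictly raises the min-value at the endpoint without sacrificing mass on the left), contradicting optimality of the original partition.
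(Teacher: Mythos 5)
Your strategy is exactly the paper's: statewise dominance of the high-deductible plan gives $\siminf$-value(low) $\leq$ $\siminf$-value(high) for free, and the hypothesis that the high plan's lowest cut-off lies at or above $d'$ makes its $\siminf$ step function itself weakly dominated by the low-deductible wealth profile, hence a feasible (and therefore optimal) candidate for the low plan's $\siminf$; the paper's three-sentence proof is this same two-inequality argument, which you simply spell out cell by cell. The one place to be careful is your first step: under the premium relation as literally written, $p' - p = (1-c)(d'-d)$, the two wealth profiles do \emph{not} coincide on $[d',\bar{\omega}]$ — they differ there by the constant $d'-d$, and the dominance in fact runs the wrong way. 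The coefficient-matching you propose would show that tail coincidence (and dominance of the high-deductible, low-premium plan, as in Figure \ref{fig2}) requires $p - p' = c(d'-d)$, i.e.\ the low-deductible premium exceeds the high-deductible premium by the maximum possible savings $c(d'-d)$; this is a defect of the lemma's statement rather than of your argument, but as written your step 1 asserts a verification that would not go through.
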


\begin{proof}
Let $t$ be the lowest cut-off for the $\siminf$ of the high deductible plan. Since the high deductible dominates the low deductible plan, any set of cut-offs defines a less preferred lower bound for the latter than for the former. Since the $\siminf$ of the high deductible plan is also dominated statwise by the low deductible plan it is an element of the $\siminf$ for the low deductible plan as well. 
\end{proof}


\bibliography{cu_citations}

\newpage

\section{Appendix: For online publication}\label{sec:online}

\subsection{A Note on Transitivity}
The Simplicity Conditions highlight the critical role of transitivity in the characterization. Transitivity is often regarded as a weak rationality condition. I maintain Transitivity for the usual reasons (no money pumps, normative desirability, etc.). It is worth pointing out, however, that it also rules out many non-separable models of pairwise choice. For example, consider the following model: acts $f$ and $g$ are comparable if and only if either \textit{i}) the join (coarsest common refinement) of their partitions has less than $N$ elements, or \textit{ii}) one statewise dominates the other. These preferences are consistent with Monotonicity, Archimedian Continuity, Nontriviality and the Simplicity Conditions, but not Transitivity. The idea behind this model is that the decision maker must evaluate the difference between the payoffs from the two acts. Such preferences are ruled out by Transitivity however, precisely because Transitivity implies that any two acts which are comparable to all constant acts can be compared. 

I extend preferences from the set of well-understood acts to complex acts using only the statewise dominance order, which satisfies Transitivity. Assuming Transitivity of $\succsim$ rules out extensions that violate Transitivity, although some of these do have intuitive appeal. Relaxing transitivity while accommodating incompleteness arising from subjective complexity is an interesting topic for future work.   

\subsection{MEU in insurance}\label{sec:MEUinsurance}

The insurance plans considered here all induce payoffs that are monotone in the the state. If the set $C$ of beliefs over which the DM with MEU preferences minimizes contains a first-order stochastically dominant belief then this will be the minimizing belief regardless of plan characteristics. In this sense the individual will behave exactly like an expected utility maximizer.

In some cases MEU and cautious preferences predict similar behavior. For example, in the MEU framework we can consider individuals who minimize over smaller sets of beliefs, which is analogous to the idea of increasing capacity. Consider comparing the relative value of two plans which induce ex-post wealth functions $f_1, f_2$ for two individuals, one of whom has a larger set of beliefs. If the larger set of beliefs contains an upper bound in the first-order stochastic dominance (FOSD) order, and the payoff difference $f_1 - f_2$ is increasing (decreasing) in the state then the individual with the larger belief set will value $f_1$ more (less) relative to $f_2$ than the individual with the smaller set. For example, the individual with a larger set of beliefs will value reductions in the out-of-pocket maximum more than an individual with a smaller set of beliefs.

\section{Comparative Statics}\label{app:compstat}

Given the centrality of the $\siminf$ and $\simsup$ in the representations of incomplete and complete preferences discussed above, it helpful in applications to understand how these functions vary with the parameters of the problem. This section contains a number of comparative statics results that are helpful in this regard. In addition, some of the results apply to a more general class of problems and are of independent interest. I characterize the responses $\siminf$ and $\simsup$ to changes in both the capacity and the beliefs. I also show that the marginal returns to additional capacity are higher when the relevant state space is larger. These results are used in the applications. 

I will discuss real valued acts, with the understanding that all conclusions apply to the utility images of any acts. Moreover, I will make statements about \textit{every} element of $\siminf$ and $\simsup$, with the understanding that these apply ``up to sets of measure zero under $P$''. I will first focus on properties of $\siminf_{N,w,P}$ when $\Omega = [\und{\omega}, \bar{\omega}] \subset \mathbb{R}$ and $w$ is increasing. Results in this setting can be extended in two ways. First, since $\siminf_{N,w,P} = - \simsup_{N,-w,P}$ all results regarding $\siminf$ can be translated directly to $\simsup$. Second, using the ``Lebesgue approach'' of Section \ref{sec3.1}, results for increasing functions on an interval can be translated to results for arbitrary bounded functions on an arbitrary state space. Finally, for simplicity I will assume throughout that $f \geq \siminf_{N,w,P}$. Using Proposition \ref{prop:sim_char} we know when this will hold. Moreover, Proposition \ref{prop:sim_char} tells us where violations of statewise dominance can occur. The results presented below will apply with only minor modifications when $\neg(f \geq \siminf_{N,w,P})$, as we can just ignore the zero measure set on which violations of statewise dominance occur. 
 
\subsection{Submodular cell functions}

I first present some results in a general setting. The framework, as well as many of the general results mentioned here, are from \cite{tian2015optimal} and \cite{tian2016monotone}. Let $\Omega = [\und{\omega}, \bar{\omega}] \subset \mathbb{R}$.\footnote{It matters that $\Omega$ is bounded, but the fact that it is closed here is irrelevant} Let $I(\Omega)$ be the set of interval subsets of $\Omega$. A \textit{cell function} is a function $v: I(\Omega) \rightarrow \mathbb{R}$. An \textit{interval partition} is a partition consisting only of intervals. Given an interval partition $\tau = \{T_i \}_{i=1}^N$ of $\Omega$, the \textit{coarse value} associated with cell function $v$ is defined as 
\begin{equation*}
    V(\tau) = \sum_{i=1}^N v(T_i).
\end{equation*}
I will be interested in cell functions that are submodular. 
\begin{definition}
A cell function $v$ is (strictly) \textbf{submodular} if for all intervals $I,I'$ with $I \cap I' \neq \varnothing$, we have
\begin{equation*}
 v(I \cap I') + v(I \cup I') \  (<) \leq \ v(I) + v(I').  
\end{equation*}
\end{definition}

The following Lemma says that elements of $\siminf$ and $\simsup$ will have interval partitions. The proof follows from that given for Lemma \ref{lem3.2}.

\begin{lemma}
If $w$ is increasing then every element of $\siminf_{N,w,P}$ and $\simsup_{N,w,P}$ has an interval partition. 
\end{lemma}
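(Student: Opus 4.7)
The plan is to combine Lemma \ref{lem3.2} with the monotonicity of $w$, translating the interval structure of partitions from the support $S$ back to $\Omega$. The argument is given for $\siminf_{N,w,P}$; the $\simsup$ case is symmetric via the identity $\siminf_{N,w,P} = -\simsup_{N,-w,P}$, or by an analogous direct argument.

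First I would take any $h \in \siminf_{N,w,P}$ with partition $\tau_h = \{T_i\}_{i=1}^N$ and induced cover $\tau_h'$ of $S$. By Lemma \ref{lem3.2}, if $\tau_h'$ were not an interval partition of $S$ then there would exist $\hat h \in B_N(\Omega)$ with $w \geq^0 \hat h$ and $E_P[\hat h] \geq E_P[h]$ that does induce an interval partition of $S$. Optimality of $h$ forces $E_P[\hat h] = E_P[h]$, so $\hat h$ and $h$ coincide $P$-almost everywhere. Under the ``up to $P$-null sets'' convention declared at the start of the appendix, I may replace $h$ by $\hat h$ and assume without loss that $\tau_h'$ is an interval partition of $S$.

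Next I would push this interval structure back to $\Omega$. Because $w$ is increasing, $w^{-1}$ of an interval is an interval in $w^{-1}(S)$: if $\omega_1 < \omega < \omega_2$ with $w(\omega_1), w(\omega_2) \in T_i'$, then $w(\omega) \in [w(\omega_1),w(\omega_2)] \subseteq T_i'$ by convexity of $T_i'$, so $\omega$ lies in the same cell. Hence $\{w^{-1}(T_i')\}_{i=1}^N$ is an interval partition of $w^{-1}(S)$. Since $Q$, the law of $w$, is a Radon measure on $\mathbb{R}$ supported on $S$, the set $\Omega \setminus w^{-1}(S)$ is $P$-null. Redefining $h$ on this null set by assigning each omitted point to an adjacent cell $w^{-1}(T_i')$ yields an interval partition of $\Omega$, preserves $N$-simplicity, and preserves $w \geq^0 h$ (any statewise violations created lie in the same $P$-null set, hence are absorbed by the convention).

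The main obstacle, and the reason for the two-step structure, is that monotonicity of $w$ only controls $\Omega$ through the range of $w$: intervals must first be obtained on $S$ via Lemma \ref{lem3.2} and then lifted by $w^{-1}$, with the ``up to $P$-null'' convention handling the possibly pathological complement $\Omega \setminus w^{-1}(S)$ and any non-uniqueness in how $h$ is specified there.
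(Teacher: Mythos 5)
Your argument is correct and follows the paper's intended route: the paper's entire proof is the single remark that the claim ``follows from that given for Lemma \ref{lem3.2}'', i.e.\ precisely the exchange argument you invoke, combined with monotonicity of $w$ to pull the interval structure on $S$ back to $\Omega$ and the declared ``up to $P$-null sets'' convention. The only step worth making explicit is that the $\hat{h}$ constructed in the proof of Lemma \ref{lem3.2} dominates $h$ pointwise, which is what licenses your passage from $E_P[\hat{h}] = E_P[h]$ to $\hat{h} = h$ $P$-almost everywhere.
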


For an interval $I$ let $ \und{w}_a(I) = \inf_{\omega \in I}\{w_a(\omega)\}$. Given this Lemma, we know that when $f \geq \siminf_{N,u(a),P}$, $\siminf_{N,u(a),P}$ will be defined by the partition maximizing the coarse value associated with the cell function, which I will call the \textit{$\siminf$ cell function} $v(I) = \und{w}_a(I) P(I)$. Moreover, I will show that this coarse value is submodular. I turn now to general properties of submodular cell functions. 

Submodularity can be thought of as diminishing returns to larger intervals. Indeed, this intuition is made concrete by the following observation from \cite{tian2015optimal}. For each cell $I = [a, b]$ and $c$ in $I$, let $\Delta v(c, I) = [v([a, c]) + v([c, b])] - v(I)$. Notice that for submodular cell functions $v(c,I)$ is always non-negative. We can make the following observation.

\begin{proposition} \label{prop201}
\textbf{(\cite{tian2015optimal}, Observation 1)} A cell function is (strictly) submodular if and only if for all intervals $I,I'$ with ($I' \subsetneq I$ ) $I' \subseteq I$ and any $c\in I'$ we have 
$$
\Delta v(c,I) \ (>) \geq \ \Delta v(c,I'). 
$$
\end{proposition}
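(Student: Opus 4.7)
The plan is to prove the equivalence by direct algebraic manipulation in both directions, using only the definitions of $\Delta v$ and submodularity together with the fact that intervals are linearly ordered, so any two overlapping intervals take the form $[a,c]$ and $[b,d]$ with $a\leq b\leq c\leq d$.

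For the forward direction ($\Rightarrow$), fix $I=[a,b]$ and $I'=[a',b']$ with $a\leq a'\leq c\leq b'\leq b$. The goal is to derive $\Delta v(c,I)\geq \Delta v(c,I')$, i.e.\ $v([a,c])+v([c,b])-v([a,b])\geq v([a',c])+v([c,b'])-v([a',b'])$. My plan is to apply submodularity to three carefully chosen pairs: (a) $[a,c]$ and $[a',b']$, which intersect in $[a',c]$ and have union $[a,b']$; (b) $[c,b]$ and $[a',b']$, which intersect in $[c,b']$ and have union $[a',b]$; and (c) $[a,b']$ and $[a',b]$, which intersect in $[a',b']$ and have union $[a,b]$. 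Adding (a) and (b) and then using (c) to replace $v([a,b'])+v([a',b])$ by the larger quantity $v([a',b'])+v([a,b])$ produces exactly the desired inequality after cancelling $2v([a',b'])$ from both sides. For the strict version, note that $I'\subsetneq I$ forces $a<a'$ or $b'<b$; in either case at least one of inequalities (a) and (b) is strict, so the conclusion is strict.

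For the reverse direction ($\Leftarrow$), take two intervals $I_1$ and $I_2$ with $I_1\cap I_2\neq\varnothing$. If one contains the other then $v(I_1\cap I_2)+v(I_1\cup I_2)=v(I_1)+v(I_2)$ trivially. Otherwise, write $I_1=[a,c]$, $I_2=[b,d]$ with $a<b\leq c<d$, so $I_1\cap I_2=[b,c]$ and $I_1\cup I_2=[a,d]$, and $[a,c]\subsetneq[a,d]$ with $b\in[a,c]$. Apply the hypothesis with cut point $b$, $I'=[a,c]$, $I=[a,d]$: the term $v([a,b])$ appears on both sides of $\Delta v(b,I)\geq \Delta v(b,I')$ and cancels, leaving exactly $v([a,c])+v([b,d])\geq v([a,d])+v([b,c])$, which is the submodular inequality. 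Strict submodularity (for incomparable intervals) follows from the strict version of the hypothesis since $[a,c]\subsetneq[a,d]$ here.

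The main obstacle is really just bookkeeping: there are six distinct intervals and their six values floating around in the forward direction, and I need to be careful to pick the three submodularity applications whose shared terms cancel and whose remaining terms reassemble into $\Delta v(c,I)-\Delta v(c,I')$. A clean way to organize this is to write each of the three submodularity inequalities as ``$v(\text{intersection})-v(\text{union})\leq v(I_{\text{first}})+v(I_{\text{second}})-v(\text{union})-v(\text{intersection})$''-style one-line differences, and then add with the right signs. The reverse direction is essentially immediate once the right choice of $I,I',c$ has been identified, so no real difficulty arises there.
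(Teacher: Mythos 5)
Your proof is correct, and in fact it supplies an argument where the paper gives none: Proposition \ref{prop201} is simply imported from \cite{tian2015optimal} (Observation 1), with the remark that the strict version ``follows by the same argument,'' so there is no in-text proof to compare against. Your reverse direction is the standard one-line observation (cut $I_1\cup I_2$ at the left endpoint of $I_2$ inside $I'=I_1$), and your forward direction via the three submodularity inequalities (a), (b), (c) does telescope exactly as claimed: adding all three and cancelling $v([a,b'])$, $v([a',b])$ and one copy of $v([a',b'])$ leaves $\Delta v(c,I)\geq \Delta v(c,I')$. Two small points. First, you correctly note that strict submodularity can only be meant for pairs of intervals neither of which contains the other (otherwise the definition is vacuously violated by $I=I'$); this reading is needed for both directions of the strict claim. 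Second, your assertion that $a<a'$ or $b'<b$ forces one of (a), (b) to be strict is valid only when $c$ lies in the \emph{interior} of $I'$: if $c$ coincides with a shared endpoint (e.g.\ $c=a=a'$ with $b'<b$, or $c=b=b'$ with $a<a'$), then all three inequalities degenerate to equalities and indeed $\Delta v(c,I)=\Delta v(c,I')=v(\{c\})$, so the strict conclusion genuinely fails there. That is a defect of the statement as written (which allows any $c\in I'$) rather than of your argument, and it is harmless for the paper's use of the result, but you should state the strict part for interior cut points to make the claim literally true.
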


Suppose $\tau \in T^N(X)$ is an interval partition of the sub-interval $X \in I(\Omega)$. Then $\tau$ can be described by a set of cut-offs $C = \{t_i \}_{i=1}^{N-1} \in \mathbb{R}^{N-1}$ giving the interior endpoints of the partition intervals. I will sometimes use the convention $t_0 = \inf\{X\}$ and $t_{N} = \sup\{X\}$ when refering to a partition of the interval $X$. Cut-off states can be assigned to partition cells in any way. \footnote{Technically, a full description of the partition should include, in addition to the cut-offs, a vector in $\{0,1\}^{N+1}$ identifying whether each cut-off state is assigned to the interval immediately above or below it. A cut-off assigned the the lower interval is considered to be lower than the same cut-off assigned to the higher interval, otherwise the usual order on $\mathbb{R}^{N}$ is used to order cut-off vectors.} Since for finding the $\siminf$ it will always be optimal to assign cut-off states to the higher interval I will assume this assignment from now on. Let $C^N(\Omega)$ be the set of cut-offs defining interval partitions of $\Omega$. Abusing notation, I will write $V(\Omega,C)$ to denote the coarse value of the partition induced on $\Omega$ by cut-off vector $C \in C^N(\Omega)$. 

Endow $C^N(\Omega)$ with the pointwise partial order.  Denote the least upper bound and greatest lower bound of two cut-off vectors $C, C' \in C^N(\Omega)$ by $C \vee C'$ and $C \wedge C'$. Define the union of two sets of cut-offs in the obvious way, as the ordered union of the two sets. \footnote{By this definition the union of two cut-off vectors defines a partition which is the join, in the refinement sense, of the partitions defined by each of the individual cut-off vectors. This is different than the join of the cut-off vectors, which I define as the coordinate-wise maximum. The later notion is restricted to cut-off vectors of the same length, whereas the union can be taken of any two cut-off vectors.}

\begin{definition}
A coarse value is (strictly) \textbf{supermodular} if it is (strictly) supermodular as a function of cut-off vectors. 
\end{definition}

The following result also relates submodularity of cell functions to diminishing returns of cell division. It is extremely useful in the comparative statics results. 

\begin{proposition}\label{prop202}
\textbf{(\cite{tian2015optimal}, Observation 3)} The coarse value is (strictly) supermodular if and only if the cell function is (strictly) submodular. 
\end{proposition}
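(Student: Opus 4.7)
The plan is to prove both directions by a direct cell-by-cell comparison of coarse values. Fix two cut-off vectors $C = (c_1,\ldots,c_{N-1})$ and $C' = (c'_1,\ldots,c'_{N-1})$ in $C^N(\Omega)$, padded with $c_0 = c'_0 = \und{\omega}$ and $c_N = c'_N = \bar{\omega}$. Let $D = C \vee C'$ and $L = C \wedge C'$, so $D_i = \max(c_i,c'_i)$ and $L_i = \min(c_i,c'_i)$; both remain strictly increasing, hence lie in $C^N(\Omega)$. Expanding both sides of the supermodularity inequality gives the key identity
\[
V(\Omega,D) + V(\Omega,L) - V(\Omega,C) - V(\Omega,C') = \sum_{i=1}^N \Big[v([L_{i-1},L_i]) + v([D_{i-1},D_i]) - v([c_{i-1},c_i]) - v([c'_{i-1},c'_i])\Big].
\]

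For the $(\Leftarrow)$ direction I would show each summand is nonnegative under submodularity of $v$. The natural case-split is on the relative order of $c_{i-1}$ vs.\ $c'_{i-1}$ and $c_i$ vs.\ $c'_i$: when these two comparisons agree, the pair $\{[L_{i-1},L_i],[D_{i-1},D_i]\}$ equals $\{[c_{i-1},c_i],[c'_{i-1},c'_i]\}$ as a set and the summand vanishes; when they disagree, say $c_{i-1} \le c'_{i-1}$ and $c_i > c'_i$, one computes $[L_{i-1},L_i]=[c_{i-1},c'_i]$ and $[D_{i-1},D_i]=[c'_{i-1},c_i]$, two genuinely overlapping (non-nested) intervals whose intersection is $[c'_{i-1},c'_i]$ and whose union is $[c_{i-1},c_i]$, so submodularity of $v$ yields the summand inequality. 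For $(\Rightarrow)$ I would exhibit a pair of cut-offs whose supermodularity inequality collapses to the cell-function inequality. Given overlapping intervals $I=[a,b]$ and $I'=[a',b']$ with $a < a' \le b < b'$ and $[a,b']\subseteq \Omega$, take $C=(a,b')$ and $C'=(a',b)$ in $C^3(\Omega)$, padding with fixed cut-offs if $N > 3$. Then $C \vee C' = (a',b')$ and $C \wedge C' = (a,b)$; cancelling the invariant boundary cells in $[\und{\omega},\cdot]$ and $[\cdot,\bar{\omega}]$ leaves exactly $v([a,b]) + v([a',b']) \ge v([a,b']) + v([a',b])$, which is submodularity of $v$ on $I,I'$.

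The strict versions follow the same plan: strict submodularity of $v$ forces strict inequality on any "crossing" cell in $(\Leftarrow)$, and the explicit construction in $(\Rightarrow)$ produces exactly one crossing cell, so strict supermodularity at that pair transfers to strict submodularity of $v$. The main obstacle I anticipate is a subtlety in the strict direction of $(\Leftarrow)$: strict submodularity of $v$ does \emph{not} yield strict inequality at every incomparable pair $C, C'$ — if $C$ and $C'$ share an interior cut-off, every cell falls into the "agreeing" case and the total is zero. The natural resolution, implicit in \cite{tian2015optimal}, is to interpret strict supermodularity of $V$ as requiring strict inequality only on pairs whose induced partitions actually interlace in some cell; under this reading, the cell-by-cell argument above delivers the full equivalence.
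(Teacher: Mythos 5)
Your argument is correct, and it is worth noting up front that the paper itself contains no proof of Proposition \ref{prop202}: it is imported from \cite{tian2015optimal} (Observation 3), with only the remark that the strict version ``follows by the same argument.'' Your telescoping identity, which writes $V(\Omega,C\vee C')+V(\Omega,C\wedge C')-V(\Omega,C)-V(\Omega,C')$ as a sum of per-cell terms and then splits cells into ``agreeing'' and ``crossing'' types, is the natural argument and fills that gap. The case analysis in the $(\Leftarrow)$ direction is complete: in an agreeing cell the pair of $D$- and $L$-cells coincides with the pair of $C$- and $C'$-cells and the term vanishes, while in a crossing cell the $D$- and $L$-cells are exactly the two overlapping intervals whose intersection and union are the original $C$- and $C'$-cells, so submodularity of $v$ applied to the \emph{new} cells gives the summand inequality with the correct sign. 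The $(\Rightarrow)$ construction with $C=(a,b')$, $C'=(a',b)$ isolates a single crossing cell and cancels the boundary cells correctly. One boundary caveat: this converse does not recover the submodularity inequality for a pair $I,I'$ with $I\cup I'=\Omega$, since $\Omega$ itself never appears as a cell of an $N\ge 2$ partition; that degenerate case must be excluded or handled separately.

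Your closing observation about the strict version is a genuine catch rather than a loose end. Taking, say, $N=4$ with $C=(1,2,3)$ and $C'=(1.5,2,2.5)$, the two cut-off vectors are incomparable yet every cell is agreeing, so $V(C\vee C')+V(C\wedge C')=V(C)+V(C')$ holds with equality even when $v$ is strictly submodular. Strict supermodularity in the literal lattice sense (strict inequality at every incomparable pair) therefore \emph{fails}, and the parenthetical ``(strictly)'' in the proposition only survives under the reading you propose, namely strictness at pairs of cut-off vectors that genuinely interlace in some cell. This is all the cell-by-cell argument can deliver, and it matters downstream: Corollary \ref{corA.10} deduces that optimal cut-off vectors are ordered by applying strict supermodularity to an arbitrary incomparable pair of maximizers, which is exactly the step that the weakened strictness does not license for non-interlacing incomparable pairs such as the example above. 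Flagging that dependence, as you do, is a useful addition to what the paper asserts.
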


Propositions \ref{prop201} and \ref{prop202} are from \cite{tian2015optimal}, Observations 1 and 3. Although \cite{tian2015optimal} does not consider the strict version of either result, this follows by the same argument.

Let $W(N,\Omega) = \sup_{C \in C^N(\Omega)} V(\Omega,C)$. \cite{tian2015optimal} also proves that supermodular coarse values exhibit diminishing marginal returns to capacity. 
\begin{proposition}
\textbf{(\cite{tian2015optimal}, Theorem 1)}. For a supermodular coarse value $W(N+1,\Omega) - W(N, \Omega) \leq W(N, \Omega) - W(N-1, \Omega) $.
\end{proposition}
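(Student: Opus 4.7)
The plan is to establish the equivalent inequality $W(N+1,\Omega) + W(N-1,\Omega) \leq 2 W(N,\Omega)$ via a lattice argument, exploiting supermodularity of the coarse value (Proposition \ref{prop202}). The key idea is to embed optimal cut-off vectors of different lengths into a common lattice in a way that forces both the resulting meet and join to correspond to partitions with at most $N$ non-degenerate cells.

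First I would pick optimizers $C^*_{N+1} = (a_1, \ldots, a_N) \in C^{N+1}(\Omega)$ and $C^*_{N-1} = (b_1, \ldots, b_{N-2}) \in C^{N-1}(\Omega)$, and pad the shorter vector by placing ghost cut-offs at the two endpoints of $\Omega$: $\tilde{C}_{N-1} := (\und{\omega}, b_1, \ldots, b_{N-2}, \bar{\omega})$. Under the natural convention that a degenerate (empty or single-point) interval receives cell value zero (which is automatic for the $\siminf$ cell function $v(I) = \und{w}(I)P(I)$ when $P$ is atomless), the coarse value $V$ extends to the lattice $\hat{C}^N(\Omega)$ of weakly increasing $N$-tuples, and remains supermodular there. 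The extension of supermodularity follows by essentially the same local-exchange reasoning that yields Proposition \ref{prop202}: the characterization of submodularity of $v$ in Proposition \ref{prop201} as a diminishing-marginal-returns inequality for cell division does not require that the cut-offs be interior.

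Applying supermodularity on $\hat{C}^N(\Omega)$ to $C^*_{N+1}$ and $\tilde{C}_{N-1}$ gives
\begin{equation*}
V(C^*_{N+1} \vee \tilde{C}_{N-1}) + V(C^*_{N+1} \wedge \tilde{C}_{N-1}) \geq V(C^*_{N+1}) + V(\tilde{C}_{N-1}) = W(N+1,\Omega) + W(N-1,\Omega),
\end{equation*}
where the last equality uses that padding by endpoints leaves the coarse value unchanged. The crucial observation is that the join has last coordinate $\max(a_N,\bar{\omega}) = \bar{\omega}$ and the meet has first coordinate $\min(a_1,\und{\omega}) = \und{\omega}$. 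Each therefore carries a degenerate cell at one endpoint, so each of the two $N$-tuples induces a partition with at most $N$ effective cells; hence $V(C^*_{N+1} \vee \tilde{C}_{N-1}) \leq W(N,\Omega)$ and $V(C^*_{N+1} \wedge \tilde{C}_{N-1}) \leq W(N,\Omega)$. Combining these bounds with the supermodularity inequality yields $W(N+1,\Omega) + W(N-1,\Omega) \leq 2W(N,\Omega)$, which rearranges to the claim.

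The main obstacle is the bookkeeping on the extended lattice: one must verify both that the coarse value extends coherently to tuples with boundary or repeated entries and that the particular choice to pad by $\und{\omega}$ and $\bar{\omega}$ (as opposed to some interior point) is exactly what forces the effective cell count of both the meet and join to drop from $N+1$ to $N$. An alternative route would reformulate the problem as maximizing a submodular set function $\Psi(S) := V(\tau_S)$ over subsets $S \subseteq \Omega$ subject to a cardinality constraint and then invoke the standard concavity of such a value function in the cardinality; however, that route additionally requires $\Psi$ to be monotone, which imposes a sign condition on the cell function that is unnecessary in the direct lattice argument.
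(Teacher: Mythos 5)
The paper never proves this proposition: it is imported verbatim as Theorem~1 of \cite{tian2015optimal}, so there is no in-text argument to compare yours against. That said, your lattice argument is the natural self-contained route and its core is sound: padding the $(N-1)$-cell optimizer to an $N$-tuple with endpoint cut-offs, taking the coordinatewise meet and join with the $(N+1)$-cell optimizer, and invoking supermodularity of the coarse value (Proposition \ref{prop202}) does give $W(N+1,\Omega)+W(N-1,\Omega)\leq V(C^*_{N+1}\vee\tilde{C}_{N-1})+V(C^*_{N+1}\wedge\tilde{C}_{N-1})$, and the meet and join are correctly engineered so that each carries a cut-off pinned to a boundary point. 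The extension of supermodularity to the closed lattice of weakly increasing tuples is indeed routine, since the exchange argument behind Proposition \ref{prop202} nowhere uses interiority of the cut-offs.

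The step that is not watertight for the proposition as stated is $V(C^*_{N+1}\vee\tilde{C}_{N-1})\leq W(N,\Omega)$. Under the paper's convention that cut-off states are assigned to the higher interval, a cut-off pinned at $\und{\omega}$ produces a genuinely empty bottom cell (so the meet is fine), but a cut-off pinned at $\bar{\omega}$ produces the singleton cell $\{\bar{\omega}\}$, not an empty one. Deleting that cut-off to obtain an honest $N$-cell partition changes the coarse value by $-\Delta v(\bar{\omega},[t,\bar{\omega}])$, which for a submodular cell function is a weak \emph{decrease} (see the remark after Proposition \ref{prop201}); so the join is bounded above by $W(N,\Omega)$ only when $\Delta v(\bar{\omega},\cdot)=0$. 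For the $\siminf$ cell function this quantity equals $(w(\bar{\omega})-w(t))P(\{\bar{\omega}\})$, so your atomlessness caveat does close the gap for every use the paper makes of the result; but relative to the bare ``supermodular coarse value'' hypothesis in the statement it is a substantive extra assumption, since with an atom at $\bar{\omega}$ the join is a genuine $(N+1)$-cell partition and the chain of inequalities does not close as written. Either state a no-atom-at-the-boundary (or $v(\text{singleton})$-null) hypothesis explicitly, or bound the join by comparing it to the $N$-cell partition obtained by deleting the boundary cut-off and track the correction term. Your closing aside about the monotone submodular set-function route is fair, but it faces exactly the same boundary bookkeeping.
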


It is easy to verify that the $\siminf$ cell function satisfies the condition of Proposition \ref{prop201}, and is thus submodular, so that the associated coarse value is supermodular by Proposition \ref{prop202}. 

\begin{lemma} \label{lem1}
The $\siminf$ cell function $v(\cdot)$ is submodular. It is strictly submodular if $P$ has full support and $w_a$ is strictly increasing. 
\end{lemma}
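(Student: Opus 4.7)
The plan is to prove submodularity by a short direct computation, exploiting the fact that on any interval $I$ the infimum $\und{w}_a(I)$ is determined by $\inf I$ alone when $w_a$ is increasing. Nested pairs ($I \subseteq I'$ or $I' \subseteq I$) give both sides of the submodularity inequality equal trivially, so I focus on non-nested overlapping pairs: after relabeling, I may assume WLOG $\inf I < \inf I' \leq \sup I < \sup I'$.

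Under this labeling, $I\cap I'$ and $I\cup I'$ are themselves intervals with $\inf(I\cup I') = \inf I$ and $\inf(I\cap I') = \inf I'$, so monotonicity of $w_a$ yields $\und{w}_a(I\cup I') = \und{w}_a(I)$ and $\und{w}_a(I\cap I') = \und{w}_a(I')$. Inclusion-exclusion then gives $P(I\cup I') - P(I) = P(I') - P(I\cap I') = P(I'\setminus I)$. Combining these two observations, the submodularity defect collapses to
\begin{equation*}
v(I) + v(I') - v(I\cap I') - v(I\cup I') = [\und{w}_a(I') - \und{w}_a(I)] \cdot P(I'\setminus I) \geq 0,
\end{equation*}
which establishes submodularity.

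For the strict claim, the right-hand side above is strictly positive under the stated hypotheses: strict monotonicity of $w_a$ together with $\inf I < \inf I'$ yields $\und{w}_a(I') > \und{w}_a(I)$, and full support of $P$ on the non-degenerate interval $I'\setminus I$ gives $P(I'\setminus I) > 0$. This matches the natural scope for ``strict'' in this setting, consistent with the strict variant of Proposition \ref{prop201} (which also requires $I' \subsetneq I$ to yield strict inequality). No substantive obstacle is anticipated -- the core step is simply that monotonicity of $w_a$ forces $\und{w}_a$ of the union (resp. intersection) to agree with $\und{w}_a$ of the left-lowermost (resp. right-uppermost) interval, after which submodularity reduces to a one-line algebraic identity.
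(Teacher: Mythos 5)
Your proof is correct, and it takes a genuinely different (more direct) route than the paper's. The paper does not verify the definition of submodularity itself; it instead verifies the subdivision characterization of Proposition \ref{prop201}: it computes $\Delta v(c,[a,b]) = \bigl(\und{w}_a([c,b]) - \und{w}_a([a,c])\bigr)P([c,b])$, using the fact that $\und{w}_a([a,c]) = \und{w}_a([a,b])$ for increasing $w_a$, and then checks that both factors are monotone in the ambient interval, so that $c \mapsto \Delta v(c,\cdot)$ increases under inclusion; submodularity then follows from the equivalence stated in Proposition \ref{prop201}. You verify the definition directly, collapsing the defect $v(I)+v(I')-v(I\cap I')-v(I\cup I')$ to $[\und{w}_a(I')-\und{w}_a(I)]\,P(I'\setminus I)$ via inclusion--exclusion. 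Both arguments hinge on the same key fact --- monotonicity of $w_a$ pins $\und{w}_a$ of an interval to its lower end --- and both are a few lines; yours is more self-contained (it does not lean on the nontrivial ``if'' direction of Proposition \ref{prop201}) and makes the sign of the defect transparent, while the paper's route produces the $\Delta v$ formula that is reused in the later comparative-statics arguments. On strictness you are actually more careful than the paper, whose proof establishes only the weak inequality before citing Proposition \ref{prop201}. Both treatments share the same harmless looseness: for degenerate endpoint configurations (e.g.\ $I=[0,1]$ and $I'=(0,2)$, which are non-nested yet satisfy $\und{w}_a(I')=\und{w}_a(I)$ when $w_a$ is continuous) the defect vanishes, so the strict inequality can only hold for pairs with $\inf I<\inf I'$ and $\sup I<\sup I'$ --- exactly the pairs generated by distinct cut-off vectors in the applications, and exactly the scope you assign to the strict claim.
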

\begin{proof}
For any interval $[a,b]$ and $c \in [a,b]$ we have
\begin{align*}
   \Delta v(I,c) &= \und{w}_a([a,c])P([a,c]) + \und{w}_a(c,b)P([c,b]) - \und{w}_a([a,b])P([a,b]) \\
   & = (\und{w}_a([c,b]) - \und{w}_a([a,c]))P([c,b])
\end{align*}
where the last equality follows since $\und{w}_a([a,c]) = \und{w}_a([a,b])$ when $w_a$ is increasing, as it is here by assumption. 
Suppose $[a,b] \subseteq [l,m]$. Then $\und{w}_a([c,b]) = \und{w}_a([c,m])$, $\und{w}_a([l,c]) \leq \und{w}_a([a,c])$ and $P([c,b]) \leq P([c,m])$, so $\Delta v([a,b],c) \leq \Delta v([l,m],c)$. The claim follows from Proposition \ref{prop201}.
\end{proof}

\begin{corollary}\label{cor1}
The coarse value for the $\siminf$ cell function is supermodular. It is strictly supermodular if $w_a$ is strictly increasing and $P$ has full support. 
\end{corollary}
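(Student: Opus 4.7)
The plan is to observe that Corollary \ref{cor1} is an immediate consequence of two results already in hand, so the proof is essentially a one-line invocation rather than new work. Specifically, Lemma \ref{lem1} establishes that the $\siminf$ cell function $v(I) = \und{w}_a(I)\,P(I)$ is submodular, and strictly submodular under the additional assumptions that $w_a$ is strictly increasing and $P$ has full support. Proposition \ref{prop202} (from \cite{tian2015optimal}) then translates submodularity of the cell function into supermodularity of the associated coarse value $V(\Omega, C) = \sum_i v(T_i)$ as a function of the cut-off vector $C \in C^N(\Omega)$, with the analogous statement preserving strictness.

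Concretely, I would write: "By Lemma \ref{lem1}, the $\siminf$ cell function $v$ is submodular, and it is strictly submodular whenever $w_a$ is strictly increasing and $P$ has full support. Applying Proposition \ref{prop202} to $v$, the coarse value $V(\Omega, \cdot)$ is supermodular on $C^N(\Omega)$, and strictly supermodular under the additional hypotheses. This establishes the claim."

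There is no real obstacle here, since the substantive content has been pushed into Lemma \ref{lem1} (a direct computation using monotonicity of $w_a$) and into Proposition \ref{prop202} (cited as Observation 3 of \cite{tian2015optimal}). The only thing to double-check when writing the proof is that the strict version of Proposition \ref{prop202} applies: strict submodularity of $v$ on overlapping intervals yields strict supermodularity of $V$ with respect to the lattice operations on cut-off vectors, which is exactly the statement of the strict form of that observation as noted in the discussion immediately following Proposition \ref{prop202} in the appendix.
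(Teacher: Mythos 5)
Your proposal is correct and matches the paper's (implicit) argument exactly: the corollary is stated without a separate proof precisely because it follows immediately by combining Lemma \ref{lem1} with Proposition \ref{prop202}, including the strict version, which the paper notes holds by the same argument as in \cite{tian2015optimal}.
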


Proposition \ref{prop201} relates submodularity of the cell function to the returns to making a subdivision of a cell at a given place. The coarse values of submodular cell functions also satisfy the following increasing differences property.

\begin{lemma}\label{lemA.14}
If the cell function is submodular then 
\begin{itemize}
    \item For all $a \in \Omega$, $b'' \geq b' \geq a$ and $C'' = \{t_i''\}_{i=1}^{N-1}, C' = \{t_i'\}_{i=1}^{N-1}\in C^N([a,b'])$ with $t_{N-1}'' \geq t_{N-1}'$, 
    \begin{equation}\label{eq13.2}
        V([a,b''], C'') - V([a,b''], C') \geq V([a,b'], C'') - V([a,b'], C').
    \end{equation}
    \item For all $b \in \Omega$, $b \geq a'' \geq a'$ and $C'' = \{t_i''\}_{i=1}^{N-1}, C' = \{t_i'\}_{i=1}^{N-1}\in C^N([a'',b])$ with $t_1'' \geq t_1'$,
    \begin{equation}\label{eq13.3}
        V([a',b], C') - V([a',b], C'') \geq V([a'',b], C') - V([a'',b], C'').
    \end{equation}
\end{itemize}
Moreover, if the cell function is strictly submodular then the inequality in (\ref{eq13.2})  is strict when $t_{N-1}'' > t_{N-1}'$ and $b'' > b'$, and that in (\ref{eq13.3}) is strict when $t_1'' > t_1'$ and $a'' > a'$. 
\end{lemma}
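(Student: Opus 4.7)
The plan is to reduce the problem to a direct application of submodularity on a single pair of intervals. The key observation I would use is that, although $C'$ and $C''$ may differ in every coordinate, when the cut-off vector is held fixed and we change only the right endpoint of the ambient interval from $b'$ to $b''$, only the rightmost cell of the induced partition changes (from $[t_{N-1},b']$ to $[t_{N-1},b'']$); all other cells are identical and so contribute the same amount to $V([a,b'],C)$ and $V([a,b''],C)$. Thus the ``inner'' differences $V([a,b''],C)-V([a,b'],C)$ collapse to just the contribution of the rightmost cell.

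Applying this observation twice---once with $C=C''$ and once with $C=C'$---the double difference
\begin{equation*}
    \Delta := \bigl[V([a,b''],C'') - V([a,b''],C')\bigr] - \bigl[V([a,b'],C'') - V([a,b'],C')\bigr]
\end{equation*}
simplifies to
\begin{equation*}
    \Delta = v([t_{N-1}'',b'']) + v([t_{N-1}',b']) - v([t_{N-1}',b'']) - v([t_{N-1}'',b']).
\end{equation*}
I would then set $I=[t_{N-1}'',b'']$ and $I'=[t_{N-1}',b']$. Because $C''\in C^N([a,b'])$ forces $t_{N-1}''\leq b'$, the intersection $I\cap I'=[t_{N-1}'',b']$ is non-empty, and the union is $I\cup I'=[t_{N-1}',b'']$. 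Submodularity applied to this pair gives exactly $\Delta\geq 0$, i.e.\ (\ref{eq13.2}). For the strict version, note that $I\not\subseteq I'$ iff $b''>b'$ and $I'\not\subseteq I$ iff $t_{N-1}''>t_{N-1}'$; when both hold, strict submodularity (valid precisely when neither interval contains the other) yields $\Delta>0$.

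The second statement (\ref{eq13.3}) is handled symmetrically, with the leftmost cell playing the role previously played by the rightmost. Specifically, varying only the left endpoint with a fixed cut-off vector affects only the cell $[a,t_1]$, and the analogous telescoping reduces the claim to submodularity applied to $I=[a',t_1']$ and $I'=[a'',t_1'']$, whose intersection $[a'',t_1']$ is non-empty by validity of $C'$ on $[a'',b]$.

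I do not anticipate any serious obstacle; the argument is essentially a bookkeeping exercise once one notices that changing only the ambient endpoint affects only a single cell of the induced partition, regardless of how different the two cut-off vectors are on the other coordinates. The only detail that warrants verification is non-emptiness of the intersection used in the submodularity step, which is immediate from the hypothesis that both cut-off vectors are valid for the smaller ambient interval.
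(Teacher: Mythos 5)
Your argument is correct, and it reaches the conclusion by a genuinely more direct route than the paper. Both proofs hinge on the same structural observation—that moving the ambient endpoint while holding the cut-off vector fixed changes only the extremal cell—but they exploit it differently. The paper inserts the refined cut-off vector $C''\cup C'$ as an intermediate term, splits the difference $V([a,b''],C'')-V([a,b''],C')$ around it, and then invokes the ``returns to subdivision'' characterization of submodularity (its Proposition on $\Delta v(c,I)\geq\Delta v(c,I')$ for $I'\subseteq I$) to compare the two ambient intervals. You instead telescope the double difference directly down to the four-term expression $v([t_{N-1}'',b''])+v([t_{N-1}',b'])-v([t_{N-1}',b''])-v([t_{N-1}'',b'])$ and apply the definition of submodularity once, to the single crossing pair $I=[t_{N-1}'',b'']$, $I'=[t_{N-1}',b']$; the hypothesis $t_{N-1}''\geq t_{N-1}'$ is exactly what guarantees that $I\cap I'$ and $I\cup I'$ have the stated forms, and validity of $C''$ on $[a,b']$ gives non-emptiness of the intersection. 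Your route buys two things: it avoids any appeal to the auxiliary characterization of submodularity via cell subdivision, and it makes the strict case transparent—you correctly identify that strictness of the submodularity inequality is only meaningful when neither interval contains the other, which here is equivalent to $b''>b'$ and $t_{N-1}''>t_{N-1}'$, precisely the condition in the lemma (a point the paper's own proof leaves implicit). The symmetric treatment of the second bullet is likewise fine.
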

\begin{proof}
I prove the first claim. The proof of the second is analogous. The key is that the partitions induced on $[a,b'']$ and $[a,b']$ by a cut-off vector $C$ differ only in the highest cell of the partition. This implies that $V([a,b''], C'') - V([a,b''], C'' \cup C') = V([a,b'], C'') - V([a,b'], C'' \cup C')$, since $t_{N-1}'' \geq t_{N-1}'$ implies that for a given interval the highest cells of the partitions induced by $C''$ and $C'' \cup C'$ are the same. It also means that $V([a,b''], C'' \cup C') - V([a,b''], C') \geq V([a,b'], C'' \cup C') - V([a,b'], C')$ by the characterization of submodular cell functions in Proposition \ref{prop201}. Therefore
\begin{align*}
    &V([a,b''], C'') - V([a,b''], C') \\ 
    &= V([a,b''], C'') - V([a,b''], C'' \cup C') + V([a,b''], C'' \cup C') - V([a,b''], C') \\
    & =  V([a,b'], C'') - V([a,b'], C'' \cup C') + V([a,b''], C'' \cup C') - V([a,b''], C') \\
    & \geq V([a,b'], C'') - V([a,b'], C'' \cup C') + V([a,b'], C'' \cup C') - V([a,b'], C') \\
    & = V([a,b'], C'') - V([a,b'], C')
\end{align*}
\end{proof}

The following standard comparative statics result follows from Lemma \ref{lemA.14} and Proposition \ref{prop202}. For sets $A,B$ I write $A \geq_{SSO} B$ when $A$ is larger than $B$ in the strong set order.

\begin{corollary}\label{corA.9}
With a submodular cell function, for intervals $I'', I' \subseteq \Omega$ with $I'' \geq_{SSO} I'$,  $C^*(N,I'') \geq_{SSO} C^*(N,I')$.
\end{corollary}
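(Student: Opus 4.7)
The plan is to take arbitrary $C'' \in C^*(N, I'')$ and $C' \in C^*(N, I')$, form the coordinatewise join $C_\vee := C'' \vee C'$ and meet $C_\wedge := C'' \wedge C'$, and show $C_\vee \in C^*(N, I'')$ and $C_\wedge \in C^*(N, I')$. Writing $I'' = [a'', b'']$ and $I' = [a', b']$, the hypothesis $I'' \geq_{SSO} I'$ is equivalent to $a'' \geq a'$ and $b'' \geq b'$; a routine lattice check verifies that $C_\vee \in C^N(I'')$, $C_\wedge \in C^N(I')$, and in fact all four vectors lie in $C^N(J)$ for the enveloping interval $J := [a', b'']$, the smallest interval on which all four cut-off vectors are simultaneously admissible.

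The core of the argument combines three inequalities evaluated on $J$. First, Corollary \ref{cor1} (supermodularity of the coarse value on a fixed interval) applied on $J$ gives
\begin{equation*}
V(J, C_\vee) + V(J, C_\wedge) \ \geq \ V(J, C'') + V(J, C').
\end{equation*}
Second, since $C'$ and $C_\wedge$ both lie in $C^N(I')$ and the last coordinate of $C'$ dominates that of $C_\wedge$ (as $t'_{N-1} \geq t''_{N-1} \wedge t'_{N-1}$), Lemma \ref{lemA.14}, equation (\ref{eq13.2}), applied with left endpoint $a'$ and right endpoints $b' \leq b''$ to the pair $(C', C_\wedge)$, rearranges to
\begin{equation*}
V(I', C_\wedge) - V(I', C') \ \geq \ V(J, C_\wedge) - V(J, C').
\end{equation*}
Third and symmetrically, $C_\vee$ and $C''$ both lie in $C^N(I'')$ with $C_\vee$ having the larger first coordinate ($t''_1 \vee t'_1 \geq t''_1$); Lemma \ref{lemA.14}, equation (\ref{eq13.3}), applied with right endpoint $b''$ and left endpoints $a' \leq a''$ to the pair $(C_\vee, C'')$, rearranges to
\begin{equation*}
V(I'', C_\vee) - V(I'', C'') \ \geq \ V(J, C_\vee) - V(J, C'').
\end{equation*}

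Adding the last two inequalities and using the first (supermodularity) to bound the resulting right-hand side below by $0$ produces
\begin{equation*}
[V(I'', C_\vee) - V(I'', C'')] + [V(I', C_\wedge) - V(I', C')] \ \geq \ 0.
\end{equation*}
Optimality of $C''$ on $I''$ and of $C'$ on $I'$ forces each bracketed term to be non-positive (using feasibility of $C_\vee$ on $I''$ and $C_\wedge$ on $I'$), and two non-positive numbers summing to a non-negative one must each vanish. Hence $V(I'', C_\vee) = V(I'', C'')$ and $V(I', C_\wedge) = V(I', C')$, i.e.\ $C_\vee \in C^*(N, I'')$ and $C_\wedge \in C^*(N, I')$, which is exactly the strong set order conclusion.

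The main subtlety is the choice of the enveloping interval $J = [a', b'']$: because $I'$ and $I''$ are in general neither nested nor disjoint, neither interval alone supports all four cut-off vectors, so a direct single-lattice Topkis argument does not apply. The two parts of Lemma \ref{lemA.14} play complementary roles---part (\ref{eq13.2}) absorbs the shift $b' \to b''$ while part (\ref{eq13.3}) absorbs the shift $a' \to a''$---so $J$-supermodularity transfers back to $I'$ and $I''$ in one pass, sidestepping any appeal to transitivity of $\geq_{SSO}$.
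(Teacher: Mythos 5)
Your argument is correct and is precisely the standard Topkis-style join/meet argument that the paper intends when it asserts the corollary ``follows from Lemma \ref{lemA.14} and Proposition \ref{prop202}'': supermodularity of the coarse value on the enveloping interval $J=[a',b'']$ plus the two increasing-differences inequalities of Lemma \ref{lemA.14} to transfer back to $I'$ and $I''$, then optimality to force both bracketed terms to vanish. The only cosmetic point is that for a general submodular cell function the supermodularity step should be cited to Proposition \ref{prop202} rather than Corollary \ref{cor1}, which is stated only for the $\siminf$ cell function.
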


\begin{corollary}\label{corA.10}
With a strictly submodular cell function the elements of $C^*(N,X)$ are ordered for all $N$ and all $X \subseteq \Omega$. 
\end{corollary}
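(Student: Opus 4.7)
The plan is to apply the standard lattice argument for uniqueness (up to comparability) of maximizers of strictly supermodular objectives, combined with the observation that the feasible set of cut-off vectors forms a sublattice.

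First, I would invoke Proposition \ref{prop202}: strict submodularity of the cell function implies that the coarse value $V(X,\cdot)$ is strictly supermodular as a function of cut-off vectors. Recall that by strict supermodularity, for any two cut-off vectors $C,C' \in C^N(X)$ that are not comparable in the pointwise order,
\begin{equation*}
V(X, C \vee C') + V(X, C \wedge C') > V(X, C) + V(X, C').
\end{equation*}

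Next I would verify that $C^N(X)$ is a sublattice of the product order on $\mathbb{R}^{N-1}$. Writing $C = (t_1,\dots,t_{N-1})$ and $C' = (t_1',\dots,t_{N-1}')$ with each vector weakly increasing and lying in $X$, the componentwise maximum and componentwise minimum are again weakly increasing sequences in $X$, so $C\vee C'$ and $C \wedge C'$ both belong to $C^N(X)$.

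Now suppose for contradiction that $C, C' \in C^*(N,X)$ are two optimal cut-off vectors that are not comparable in the pointwise order. Let $V^* = V(X,C) = V(X,C')$ be the common optimal value. Since $C\vee C'$ and $C\wedge C'$ both lie in $C^N(X)$, optimality gives $V(X, C\vee C') \leq V^*$ and $V(X, C \wedge C') \leq V^*$, hence $V(X, C \vee C') + V(X, C \wedge C') \leq 2V^* = V(X,C) + V(X,C')$. This directly contradicts the strict supermodularity inequality above. Therefore $C$ and $C'$ must be comparable, proving that $C^*(N,X)$ is totally ordered.

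There is no real obstacle here; the only thing to be mildly careful about is the lattice structure of $C^N(X)$, and in particular the convention (fixed earlier in the appendix) for how cut-off states are assigned to the adjacent cell, so that the join and meet of two cut-off vectors correspond to well-defined interval partitions. Once the sublattice observation is in hand, the strict supermodularity from Corollary \ref{cor1}/Proposition \ref{prop202} yields the result in one line.
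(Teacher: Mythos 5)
Your argument is correct and is exactly the standard lattice argument the paper leaves implicit: Corollary \ref{corA.10} is stated without proof, immediately after Proposition \ref{prop202} and Corollary \ref{cor1}, and the intended justification is precisely that two unordered maximizers of a strictly supermodular coarse value on the sublattice $C^N(X)$ would contradict optimality via the join and meet. Your care about the sublattice structure of $C^N(X)$ and the cut-off assignment convention is appropriate but raises no real issue, so nothing further is needed.
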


Related to the first part of Corollary \ref{corA.10}, \cite{tian2015optimal} provides alternative conditions under which the elements of $C^*(N,X)$ are ordered. 

\subsubsection{Changing N}\label{sec13.1.1}
It will often be useful to understand how the cut-offs defining optimal partitions change with capacity. The following definition is from \cite{tian2015optimal}.

\begin{definition}
Two sets of cut-offs $C \in \mathbb{R}^N$ and $C' \in \mathbb{R}^{N+1}$ are \textbf{sandwiched} if $t'_i  < t_{i+1}  <t'_{i+1}$ for all $i$.  
\end{definition}

Say that cut-offs are weakly sandwiched if the strict inequalities in the above definition are replaced with weak inequalities. The following result is related to Proposition 2 in \cite{tian2015optimal}, which shows that the maximal elements of $C^*(N,[a,b])$ and $C^*(N+1, [a,b])$, when they exist, are sandwiched (and similarly for the minimal elements). While \cite{tian2015optimal} uses continuity of the coarse value to guarantee that minimal and maximal elements of the sets of optimal cut-off vectors exist, we can conclude that sandwiched selections exist without this assumption.

\begin{corollary}\label{corA.12}
With a sub-modular cell function, if $C^*(N,[a,b])$ and $C^*(N+1,[a,b])$ are non-empty then for any $C' \in C^*(N,[a,b])$ there exists $C'' \in C^*(N+1,[a,b])$ such that $C'$ and $C''$ are weakly sandwiched.
\end{corollary}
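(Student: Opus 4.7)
My plan is to apply the supermodularity of the coarse value (Proposition \ref{prop202}) twice, using two length-$N$ embeddings of $C' = (t'_1, \ldots, t'_{N-1})$ into the lattice of monotone vectors in $[a,b]^N$:
\[
\hat{C}^L := (a, t'_1, \ldots, t'_{N-1}), \qquad \hat{C}^R := (t'_1, \ldots, t'_{N-1}, b).
\]
Each encodes the same $N$-cell partition as $C'$, together with one degenerate boundary cell ($\{a\}$ or $\{b\}$); the value of that boundary cell is an additive constant that will cancel on both sides of every supermodularity inequality below.

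Fix any $C^* = (t_1, \ldots, t_N) \in C^*(N+1, [a,b])$. In the first stage I take $U := \hat{C}^L \vee C^*$ and $D := \hat{C}^L \wedge C^*$ componentwise; both are monotone. Because $D_1 = a$, the vector $D$ effectively encodes a length-$(N-1)$ cut-off vector $(D_2, \ldots, D_N)$, whose coarse value is at most $V(C')$ by $C'$-optimality. Supermodularity gives $V(U) + V(D) \geq V(\hat{C}^L) + V(C^*)$, and once the boundary constant cancels this rearranges to $V(U) \geq V(C^*)$, so $U \in C^*(N+1,[a,b])$. Crucially, $U_i = \max(t'_{i-1}, t_i) \geq t'_{i-1}$ for each $i \geq 2$ by construction; this is half of the weak-sandwich property. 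In the second stage I set $C'' := \hat{C}^R \wedge U$. A symmetric argument, now using that $(\hat{C}^R \vee U)_N = b$ so the join encodes a length-$(N-1)$ vector bounded above by $V(C')$, yields $V(C'') \geq V(U) = V(C^*)$, so $C''$ is optimal as well. For the sandwich: $C''_i = \min(t'_i, U_i) \leq t'_i$ for each $i \leq N-1$ by construction, while $C''_i \geq t'_{i-1}$ for $i \geq 2$ is inherited from both arguments of the min (since $t'_i \geq t'_{i-1}$ and $U_i \geq t'_{i-1}$); and $C''_N = U_N \geq t'_{N-1}$. Monotonicity of $C''$ follows because each coordinate is squeezed into the appropriate bracket.

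The main conceptual point is recognizing that a single lattice operation is insufficient: the join with $\hat{C}^L$ delivers only the lower half of the sandwich ($U_i \geq t'_{i-1}$), while the symmetric meet with $\hat{C}^R$ alone would deliver only the upper half, and neither operation produces a sandwiched vector on its own. Performing the operations in sequence succeeds because the meet with $\hat{C}^R$ cannot destroy the lower bounds established by the prior join. The only routine technicality is bookkeeping for the degenerate boundary cells $\{a\}$ and $\{b\}$, whose values are additive constants that cancel across each supermodularity inequality.
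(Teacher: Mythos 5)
Your proof is correct in substance and, interestingly, your final vector coincides with the paper's: by distributivity of $\min$ and $\max$ on a chain, your $C''_i=\min\bigl(t'_i,\max(t'_{i-1},t_i)\bigr)$ is exactly the paper's $\hat{t}_i=\max\bigl(\min(t''_i,t'_i),t'_{i-1}\bigr)$, so the two constructions (your join-then-meet with boundary padding versus the paper's meet-then-join on interior coordinates) produce the same sandwiched cut-off vector. Where you genuinely diverge is in how optimality is preserved under the lattice operations: the paper simply cites Lemma 5 of \cite{tian2015optimal} for this step, whereas you derive it from scratch by two applications of supermodularity of the coarse value (Proposition \ref{prop202}), using the padded embeddings $\hat{C}^L$ and $\hat{C}^R$ so that the meet (resp.\ join) collapses to an effective $N$-cell configuration dominated by $V(C')$. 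That buys a self-contained argument at the price of the boundary bookkeeping, and one piece of that bookkeeping is slightly more than ``an additive constant that cancels'': in the second stage the join $\hat{C}^R\vee U$ has last coordinate $b$, so its non-degenerate part is a partition of $[a,b)$ rather than $[a,b]$, and its coarse value differs from the corresponding value on $[a,b]$ by a term of the form $-w(t_{N-1})P(\{b\})$ that depends on the partition, not a constant. The comparison still goes the right way because $J_{N-1}\ge t'_{N-1}$ and $w$ is increasing (and it is vacuous when $P$ is atomless, as in the paper's applications), but you should say so explicitly rather than fold it into ``routine technicality''; the first stage is clean under the convention $v(\varnothing)=0$.
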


\begin{proof}
Start with $C'' = \{t_i''\}_{i = 1}^N \in C^*(N+1,[a,b])$ and $C' = \{t_i' \}_{i=1}^{N-1} \in C^*(N,[a,b])$. Define $\tilde{C} = \{\tilde{t}_i\}_{i=1}^N \equiv (\{t_i''\}_{i = 1}^{N-1} \wedge C')\cup t_N''$ and $\hat{C} = \{\hat{t}_i\}_{i=1}^N \equiv (\{\tilde{t}_i\}_{i=2}^{N}\vee C') \cup \tilde{t}_1$. By Lemma 5 in \cite{tian2015optimal}, $\hat{C} \in C^*(N+1,[a,b])$.

If $\{ \hat{t}_i\}_{i=1}^{N-1} \leq C'$ then it will follow that $\hat{C}$ and $C'$ are weakly sandwiched. To see that this holds, note that $\hat{t}_1 = \tilde{t}_1 = \min\{t_1'',t_1'\} \leq t_1'$ and for $i > 1$, $\hat{t}_i = \max\{\tilde{t}_i, t'_{i-1} \}$. Moreover, $\tilde{t}_N = t_N''$ and $\tilde{t}_i = \min\{ t_i'', t_i' \}$ for $i<N$. Thus for $1 < i <N$, we have $\hat{t}_i = \max\{\min\{t_i'',t_i' \}, t_{i-1}' \} \leq t_i'$. 

\end{proof}

Under an additional condition on the cell function, \cite{tian2015optimal} shows that all cut-offs are sandwiched. 

\vspace{3mm}
\noindent\textbf{\cite{tian2015optimal}, Theorem 3.} With a \textit{regular} cell function, $C''$ and $C'$ are sandwiched for all $C'' \in C^*(N+1, [a,b])$ and $C' \in C^*(N,[a,b])$. 
\vspace{3mm}

The definition of a regular cell function can be found in \cite{tian2015optimal}.

\subsubsection{Marginal returns to capacity}
Consider an interval state space. Submodularity of the cell function means that the benefit of dividing an interval at a certain point is higher for larger (in the inclusion order) intervals. The following proposition shows that with a submodular cell function there will also be lower returns to increasing capacity when dividing a sub-interval. Assume the state space is an interval, and let $S \subseteq S'$ be two interval subsets of the state space. 

Let $C = \{t_i\}_{i=1}^{N-1}$ be a cut-off vector defining an $N$-element interval partition of the interval $X \subseteq \Omega$, where $t_0 = \inf\{ X\}$ and $t_N = \sup\{X\}$. I follow the convention that cut-off states are assigned to the higher interval, although this has no bearing on the results. If $X$ is closed at the bottom and open at the top then 
\begin{equation*}
    V(X,C) = \sum_{i=0}^{N-1} v([t_i, t_{i+1})) + v([t_N, t_{N+1}]).
\end{equation*}
Let the set of cut-offs defining $N$-element interval partitions of $X$ be $C^N(X) \subseteq \mathbb{R}^{N-1}$. Let $C^*(N,X) = \argmax_{C \in C^N(X)} V(X,C)$, and assume $C^*(N,X)$ is non-empty for all $X \subseteq \Omega$ (as is the case when the coarse value is the expectation of a simple lower or upper bound of a bounded function). Let $W(N,X) = \max_{C \in C^N(X)} V(C)$. The following lemma is similar to the characterization of submodularity given by Proposition \ref{prop201}. It states that the marginal returns to capacity are higher when dividing larger intervals.

\begin{lemma}\label{lemA.13}
Let the cell function be submodular, and let $S,S'$ with $S \subseteq S'$ be two intervals. Assume that there exist selections from $C^*(N,S')$ and $C^*(N+1,S)$ that are sandwiched. Then $W(N+1,S') - W(N,S') \geq W(N+1,S) - W(N,S)$.
\end{lemma}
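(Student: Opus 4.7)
The plan is to bound both sides of the claimed inequality by evaluating carefully chosen feasible (generally sub-optimal) cut-off vectors, and then reduce the comparison to two applications of the submodular inequality on boundary cells. Let $C^* = \{s_1,\ldots,s_{N-1}\}\in C^*(N,S')$ and $\tilde C = \{u_1,\ldots,u_N\}\in C^*(N+1,S)$ be the sandwiched selections. Set $s_0 := \inf S'$, $s_N := \sup S'$, $u_0 := \inf S$, $u_{N+1} := \sup S$, so that $S\subseteq S'$ yields $s_0\leq u_0$ and $u_{N+1}\leq s_N$, while sandwiching yields the interleaving
$s_0 \leq u_1 \leq s_1 \leq u_2 \leq \cdots \leq s_{N-1} \leq u_N \leq s_N$. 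In particular every $s_k$ lies in $[u_1,u_N]\subseteq S$, so $\bar C := C^*$ is a valid cut-off vector in $C^N(S)$, while $\hat C := \tilde C$ is a valid cut-off vector in $C^{N+1}(S')$.

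First I would use feasibility of $\hat C,\bar C$ together with the optimality of $C^*,\tilde C$ to obtain
\begin{align*}
W(N+1,S')-W(N,S') &\geq V(S',\hat C) - V(S',C^*),\\
W(N+1,S)-W(N,S) &\leq V(S,\tilde C) - V(S,\bar C),
\end{align*}
reducing the claim to $V(S',\hat C) - V(S',C^*) \geq V(S,\tilde C) - V(S,\bar C)$. Writing out all four coarse values cell by cell, the interior cells $[u_k,u_{k+1}]$ for $1\leq k\leq N-1$ appear identically in $V(S',\hat C)$ and $V(S,\tilde C)$, and the interior cells $[s_k,s_{k+1}]$ for $1\leq k\leq N-2$ appear identically in $V(S',C^*)$ and $V(S,\bar C)$. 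For $N\geq 2$ these contributions cancel in the double difference, leaving only boundary terms:
\begin{align*}
&[V(S',\hat C)-V(S',C^*)] - [V(S,\tilde C)-V(S,\bar C)] \\
&\quad = \bigl\{v([s_0,u_1]) + v([u_0,s_1]) - v([s_0,s_1]) - v([u_0,u_1])\bigr\} \\
&\qquad + \bigl\{v([u_N,s_N]) + v([s_{N-1},u_{N+1}]) - v([s_{N-1},s_N]) - v([u_N,u_{N+1}])\bigr\}.
\end{align*}

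The main step is then to recognize each brace as a single application of the submodular inequality. For the first, $s_0\leq u_0\leq u_1\leq s_1$ gives $[s_0,u_1]\cap[u_0,s_1] = [u_0,u_1]$ (non-empty) and $[s_0,u_1]\cup[u_0,s_1] = [s_0,s_1]$, so submodularity applied to $I = [s_0,u_1]$ and $I' = [u_0,s_1]$ is exactly the required non-negativity. The second brace is symmetric, applied to $[u_N,s_N]$ and $[s_{N-1},u_{N+1}]$, whose intersection is $[u_N,u_{N+1}]$ and whose union is $[s_{N-1},s_N]$. The degenerate case $N=1$ (where $C^* = \varnothing$ and the two braces collapse into one) follows directly from Proposition \ref{prop201} applied at the single cut-point $u_1$ to the nested intervals $[u_0,u_{N+1}] = S \subseteq S' = [s_0,s_N]$.

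The main obstacle is identifying the right pair of feasible vectors. The natural ``exchange'' argument of adding some $u_j$ to $C^*$ inside $S'$ and removing it from $\tilde C$ inside $S$ points in the wrong direction, because the interleaving makes the parent cell of $u_j$ in $C^*$'s partition of $S'$ a \emph{subset} of its parent cell in $(\tilde C\setminus\{u_j\})$'s partition of $S$, which inverts the inequality submodularity delivers. The trick is to swap roles entirely—use $\tilde C$ in the larger space $S'$ and $C^*$ in the smaller space $S$—so that the interior cells cancel exactly and the nesting $S\subseteq S'$ provides the correct geometric setup for submodularity on the two boundary interval pairs.
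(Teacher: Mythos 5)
Your proposal is correct and follows essentially the same route as the paper: the same cross-evaluation of the two optimal cut-off vectors on the other state space via optimality, followed by the observation that the resulting partitions differ only in their extreme cells, where submodularity closes the argument. The only cosmetic difference is that the paper telescopes through the join $C\cup C'$ and invokes the $\Delta v$ characterization of submodularity (Proposition \ref{prop201}), whereas you cancel the interior cells directly and apply the raw inequality $v(I\cap I')+v(I\cup I')\leq v(I)+v(I')$ to the two boundary pairs --- these are the same inequality in different packaging.
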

\begin{remark}
Recall that there exist sandwiched selections from $C^*(N,X)$ and $C^*(N+1,X)$ for any $X$ by \ref{corA.12} or \cite{tian2015optimal}, Theorem 3. Thus the condition of Lemma \ref{lemA.13} will be satisfied if $v()$ is regular, the optimal cut-offs are continuous in the endpoints of the intervals, and $S,S'$ are sufficiently close. 
\end{remark}

\begin{proof}
Let $C' = \{t'_i\}_{i=1}^{N-1}$ and $C = \{t_i\}_{i=1}^{N}$ be the sandwiched selections from $C^*(N,S')$ and $C^*(N+1,S)$ respectively. Since the cut-offs are sandwiched it must be that $\inf \{S\} = t_0 \leq t_1 \leq t'_1$ and $\sup\{S\} = t_{N+1} \geq t_{N} \geq t'_{N-1}$. Therefore $C'$ also defines an $N$-element partition of $S$. The partition induced by $C'$ on $S$ and that induced by $C'$ on $S'$ differ only on the highest and lowest partition cells, which are larger for the latter. Similarly for the partitions induced by $C$ on $S$ and $S'$. By optimality of $C'$ and $C$ we have 
\begin{align*}
    W(N+1,S') - W(N,S') & = W(N+1,S') - V(S',C') \\
    & \geq V(S',C) - V(S',C'),
\end{align*}
and 
\begin{align*}
    W(N+1,S) - W(N,S) &= V(S,C) - W(N,S) \\
    & \leq V(S,C) - V(S,C')
\end{align*}
so it suffices to show $V(S',C) - V(S',C') \geq V(S,C) - V(S,C')$. We have
\begin{equation*}
    V(S' , C) - V(S', C') = V(S', C') - V(S', C \cup C') + V(S', C \cup C') - V(S', C')
\end{equation*}
and
\begin{equation*}
    V(S, C) - V(S, C') = V(S, C) - V(S, C \cup C') + V(S, C \cup C') - V(S, C'). 
\end{equation*}
By the sandwiched property of $C$ and $C'$, $V(S', C) - V(S', C \cup C') = V(S, C) - V(S, C \cup C')$. Since the partition induced by $C'$ on $S$ and that induced by $C'$ on $S'$ differ only on the highest and lowest partition cells, which are larger for the latter, submodularity of the cell function implies that $V(S',C \cup C') - V(S',C') \geq V(S, C \cup C') - V(S, C')$.
\end{proof}

\subsubsection{Changing the distribution}

I look here at how the functions in $\simsup$ and $\siminf$ change as beliefs change. As before, I will discuss the $\siminf$ case, but all the results hold without alteration for $\simsup$. I remain in the one dimensional setting. The results will be extended in the next section. As I will be varying beliefs while holding capacity fixed I will make explicit the dependence on $P$ of the, cell function, coarse value, and optimal cut-offs by writing these as $v(\cdot|P)$, $V(\Omega, \cdot| P)$, and $C^*(N,\Omega|P)$.

Let $\Hat{P}$ and $P$ be distributions, with densities $\Hat{p}$ and $p$ respectively. Recall that distribution $\Hat{P}$ that dominates $P$ according to the \textit{monotone likelihood ratio} (MLR) property if there exists a non-negative increasing function $\alpha$ such that $\Hat{p}(\omega) = \alpha(\omega)p(\omega)$ for all $\omega \in \Omega$. 

Given two real valued functions $g$ and $h$ defined on a partially ordered set $Z$, say that $g$ dominates $h$ by the \textit{interval dominance order} ($g \succeq_I h$) if $f(z'') \geq f(z') \Longrightarrow g(z'') \geq g(z')$ holds for all $z'' \geq z'$ such that $g(z) \geq g(z') \ \forall \ z \in [z',z''] = \{z \in Z : z' \leq z \leq z'' \}$. The following is implied by \cite{tian2016monotone}, Proposition 3.

\begin{proposition} \label{prop5.1}
Let $\Omega$ be an interval of $\mathbb{R}$, and $G \succeq_{MLR} P$. Assume that $w$ is absolutely continuous (and thus differentiable a.e.). Then $V(\Omega, \cdot|G) \succeq_{I} V(\Omega, \cdot|P)$ on $C^N(\Omega)$ for all $N$. 
\end{proposition}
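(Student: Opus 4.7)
The plan is to verify that our coarse value function fits the hypothesis of \cite{tian2016monotone}, Proposition 3, and then invoke that result. The first step is to put $V(\Omega,C\mid P)$ in a form that makes its $P$-dependence transparently linear. Since $w$ is increasing on the interval $\Omega$, $\und{w}([t_{i-1},t_i]) = w(t_{i-1})$, and therefore, writing $C=\{t_i\}_{i=1}^{N-1}$ with $t_0=\inf\Omega$ and $t_N=\sup\Omega$,
\begin{equation*}
V(\Omega, C\mid P) \;=\; \sum_{i=1}^{N} w(t_{i-1})\bigl[P(t_i)-P(t_{i-1})\bigr] \;=\; \int_{\Omega} w_C(\omega)\,dP(\omega),
\end{equation*}
where $w_C$ is the left-constant step function with value $w(t_{i-1})$ on $[t_{i-1},t_i)$. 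Using absolute continuity of $w$ and switching the order of integration, one can equivalently rewrite this as $\int_\Omega w\,dP - \int_\Omega w'(s)\bigl[P(\gamma_C(s))-P(s)\bigr]\,ds$ with $\gamma_C(s)$ the next cut-off above $s$; both forms make explicit that the map $P\mapsto V(\Omega,C\mid P)$ is linear for every fixed $C$.

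The second step is to verify the structural condition in \cite{tian2016monotone}, Proposition 3, which is an interval-dominance-type single-crossing requirement on the family $\{w_C\}_{C\in C^N(\Omega)}$ viewed as an integrand indexed by the decision variable $C$. Given $C'\leq C''$ with $t_i'\leq t_i''$ componentwise, the difference $w_{C''}-w_{C'}$ is itself a step function whose sign on any interval $[t_{i-1}^{\prime\prime}\vee t_{i-1}^{\prime},\,t_i^{\prime\prime}\wedge t_i^{\prime})$ is determined by the relative order of consecutive cut-offs of $C'$ and $C''$. Because $w$ is increasing, the positive and negative parts of this difference are ordered in $\omega$ in the appropriate interval-dominance sense: on each maximal sub-interval where the sign is constant, a negative piece at $\omega$ is always succeeded, further right, by a positive piece of comparable mass (the switch from one partition cell of $C'$ to the next is compensated one step later by the switch in $C''$). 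This is exactly the signed-ratio/IDO structure on the parametric integrand that Quah–Strulovici (and \cite{tian2016monotone}, Prop.~3) require.

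Third, with that structural property in hand, \cite{tian2016monotone}, Proposition 3, delivers the conclusion: if $G\succeq_{MLR}P$, then $C\mapsto\int w_C\,dG$ dominates $C\mapsto\int w_C\,dP$ in the interval dominance order on $C^N(\Omega)$, which is precisely $V(\Omega,\cdot\mid G)\succeq_I V(\Omega,\cdot\mid P)$. The main obstacle is the second step: checking the IDO single-crossing condition for the entire family of step-function differences $\{w_{C''}-w_{C'}\}_{C'\leq C''}$ is not a single-crossing condition in the naive sense (as the sign of $w_{C''}-w_{C'}$ alternates across cells), so one must lean on the weaker interval-dominance formulation. Once this is done, the MLR shift from $P$ to $G$ reweights mass toward higher $\omega$, which is precisely where the positive parts of $w_{C''}-w_{C'}$ lie, and the interval-dominance conclusion follows by the integration-preserves-IDO lemma of Quah–Strulovici exploited in \cite{tian2016monotone}.
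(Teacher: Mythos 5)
The paper itself offers no proof of Proposition \ref{prop5.1} beyond the sentence ``implied by \cite{tian2016monotone}, Proposition 3,'' so your plan --- verify the hypotheses of that result and invoke it --- is consistent with the paper's intent, and your first step is correct and genuinely useful: since $w$ is increasing, $\und{w}([t_{i-1},t_i)) = w(t_{i-1})$, so $V(\Omega,C\mid P)=\int w_C\,dP$, and the integration-by-parts form $\int w\,dP-\int w'(s)\bigl[P(\gamma_C(s))-P(s)\bigr]ds$ is a correct, transparently $P$-linear rewriting.

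The gap is in your second step, which is where all of the mathematical content of the proposition lives. You correctly observe that for $C'\le C''$ the difference $\omega\mapsto w_{C''}(\omega)-w_{C'}(\omega)$ is \emph{not} single-crossing (e.g.\ with $N=3$ its sign pattern can be $0,-,+,-,+$), which is precisely why the conclusion can only be interval dominance rather than single-crossing differences. But having noted this, you cannot then conclude by a sign-pattern argument alone: the statement ``a negative piece at $\omega$ is always succeeded, further right, by a positive piece of comparable mass'' is not a well-defined condition (mass with respect to which measure? Lebesgue mass does not interact with an MLR reweighting of $P$), is not verified, and is not the hypothesis of either \cite{tian2016monotone}, Proposition 3, or the Quah--Strulovici aggregation theorem. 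The latter requires showing that $\{C\mapsto w_C(\omega)\}_{\omega}$ is an IDO \emph{family} indexed by $\omega$ --- an ordering of functions on the lattice of cut-off vectors, checked along order intervals --- and its proof crucially exploits the defining restriction of $\succeq_I$, namely that one only compares pairs $C'\le C''$ for which $V(\cdot\mid P)\ge V(C'\mid P)$ throughout $[C',C'']$. Your sketch never uses that restriction anywhere; without it the desired implication $V(C''\mid P)\ge V(C'\mid P)\Rightarrow V(C''\mid G)\ge V(C'\mid G)$ is simply false for arbitrary comparable pairs, exactly because $w_{C''}-w_{C'}$ fails single crossing. So the argument as written does not establish the IDO hypothesis it needs; either carry out the IDO-family verification on chains of cut-off vectors in the form \cite{tian2016monotone} requires, or state and cite that proposition's hypotheses precisely and check them.
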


If $w$ is decreasing then the same conclusion holds; $V^N(\cdot|G) \succeq_I V^N(\cdot|P)$.

By Proposition \ref{prop202} and Lemma \ref{lem1}, $V$ is supermodular in $C$. By Theorem 1 in \cite{quah2009comparative} and Proposition \ref{prop5.1} we obtain the following.
\begin{corollary} \label{cor5.1}
If $G \succeq_{MLR} P$ then $C^*(N, \Omega|G) \geq_{SSO} C^*(N, \Omega|P) $.
\end{corollary}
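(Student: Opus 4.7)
The plan is to recognize this as a direct application of the monotone comparative statics framework of Quah and Strulovici (2009), using the two ingredients already assembled in the preceding material. In that framework, the relevant conclusion is: if the objective function is quasi-supermodular on a sublattice and one objective dominates another in the interval dominance order, then the argmax correspondence moves up in the strong set order. So my task reduces to verifying the two hypotheses for the map $C \mapsto V(\Omega,C\mid P)$ as $P$ varies.

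First I would verify that the domain $C^N(\Omega)$ of cut-off vectors is a sublattice of $\mathbb{R}^{N-1}$ under the pointwise order, since the coordinate-wise max and min of two nondecreasing cut-off vectors in $\Omega^{N-1}$ remain nondecreasing cut-off vectors in $\Omega^{N-1}$. Next, the supermodularity (hence quasi-supermodularity) of $V(\Omega,\cdot\mid P)$ on $C^N(\Omega)$ is exactly the content of Corollary~\ref{cor1}, which in turn follows from submodularity of the $\siminf$ cell function (Lemma~\ref{lem1}) via Proposition~\ref{prop202}. This ingredient does not depend on which distribution is used, so it holds simultaneously for $P$ and $G$.

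Second, I would invoke Proposition~\ref{prop5.1} to conclude that $V(\Omega,\cdot\mid G)\succeq_I V(\Omega,\cdot\mid P)$ on $C^N(\Omega)$ whenever $G\succeq_{MLR} P$. With quasi-supermodularity on a sublattice and interval dominance in hand, Theorem~1 of \citet{quah2009comparative} applies directly and yields
\begin{equation*}
\argmax_{C\in C^N(\Omega)} V(\Omega,C\mid G) \;\geq_{SSO}\; \argmax_{C\in C^N(\Omega)} V(\Omega,C\mid P),
\end{equation*}
which by definition is $C^*(N,\Omega\mid G)\geq_{SSO} C^*(N,\Omega\mid P)$.

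The only genuine obstacle is a bookkeeping one: making sure the ambient order on $C^N(\Omega)$ is the right sublattice structure (so that joins and meets of cut-off vectors remain admissible cut-off vectors, which is immediate here because the coordinates are just ordered reals in $\Omega$) and that the partition convention (assigning endpoints consistently to the higher cell) does not interfere with taking joins and meets. Once these are clarified, the corollary is essentially a citation of Quah--Strulovici applied to the two inputs (supermodularity from Corollary~\ref{cor1}, interval dominance from Proposition~\ref{prop5.1}).
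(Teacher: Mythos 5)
Your proof is correct and follows exactly the same route as the paper: supermodularity of the coarse value (Corollary \ref{cor1}, via Lemma \ref{lem1} and Proposition \ref{prop202}) combined with interval dominance (Proposition \ref{prop5.1}) and Theorem 1 of \citet{quah2009comparative}. The additional bookkeeping you note about the sublattice structure of $C^N(\Omega)$ and the endpoint convention is a reasonable elaboration of what the paper leaves implicit.
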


\end{document}